\newif\ifarxiv
\DeclareMathOperator*{\argmax}{arg\,max}
\DeclareMathOperator*{\argmin}{arg\,min}
\newcommand{\arxiv}[1]{%
	\ifarxiv{}%
	#1%
	\else{}%
	\fi{}%
}
\newcommand{\mytodo}[2]{\todo[size=\tiny, color=#1!50!white]{#2}\xspace}
\newcommand{\myrevtodo}[2]{{%
 \let\marginpar\marginnote
 \reversemarginpar
 \renewcommand{\baselinestretch}{0.8}%
 \todo[size=\tiny, color=#1!50!white]{#2}}}
\newcommand{\myinlinetodo}[2]{\todo[size=\small, color=#1!50!white, inline, 
caption={}]{#2}\xspace}
\newcommand{\registerAuthor}[3]{%
  \expandafter\newcommand\csname #2com\endcsname[1]{\mytodo{#3}{\textsc{#2}: 
  ##1}}%
  \expandafter\newcommand\csname 
  #2revcom\endcsname[1]{\myrevtodo{#3}{\textsc{#2}: ##1}}%
  \expandafter\newcommand\csname 
  #2inline\endcsname[1]{\myinlinetodo{#3}{\textsc{#2}: ##1}}%
  \expandafter\newcommand\csname 
  #2inlineLater\endcsname[1]{\lv{\myinlinetodo{#3}{\textsc{#2}: ##1}}}%
}
\providecommand*{\cupdot}{%
	\mathbin{%
		\mathpalette\@cupdot{}%
	}%
}
\newcommand*{\@cupdot}[2]{%
	\ooalign{%
		$\m@th#1\cup$\cr
		\hidewidth$\m@th#1\cdot$\hidewidth
	}%
}
\providecommand{\keywords}[1]
{
	{\small
	\textbf{\textit{Keywords---}} #1}
}
\newtheorem{axiom}{Axiom}
\newtheorem{theorem}{Theorem}
\newtheorem{proposition}{Proposition}
\DeclareMathOperator{\ut}{ut}
\DeclareMathOperator{\eg}{eg}
\DeclareMathOperator{\har}{har}
\DeclareMathOperator{\Ihar}{ihar}
\DeclareMathOperator{\seq}{seq}
\DeclareMathOperator{\fix}{fix}
\DeclareMathOperator{\mi}{min}
\DeclareMathOperator{\ma}{max}
\DeclareMathOperator{\score}{score}
\definecolor{field}{RGB}{0,156,0}
\tikzset{
	every node/.style={draw=white, very thick, inner sep=0, outer sep=0},
	every path/.style={draw=white, very thick},
}
\newcommand\area[2]{%
	\begin{scope}[shift={(#1)}, transform shape, rotate=#2]
		\node[minimum width=.55cm,minimum height=1.832cm, anchor=west] 
		(small#2) at 
		(0,0) {};
		\node[minimum width=1.65cm,minimum height=4.032cm, anchor=west] (big#2) 
		at 
		(0,0) {};
		\node[minimum width=.244cm, minimum height=.732cm, anchor=east] 
		(goal#2) at 
		(0,0) {};
		\node[inner sep=.3mm, circle, fill=white] (penalty#2) at (1.1,0) {};
		\begin{scope}
			\tikzset{every path/.style={}}
			\clip (big#2.south east) rectangle ++ (1,5); 
			\draw[white, very thick] (penalty#2) circle (0.915cm);
		\end{scope}
	\end{scope}
}
\begin{document}
\title{\Large \bf Line-Up Elections: \\Parallel Voting with Shared Candidate 
Pool}

\author{Niclas Boehmer$^1$\thanks{Supported by the DFG project MaMu  (NI 
		369/19).} , Robert Bredereck$^1$, Piotr
	Faliszewski$^2$\thanks{Supported  by  a  Friedrich Wilhelm Bessel 
		Award from 
		the Alexander von Humboldt Foundation.} , \\ 
		Andrzej~Kaczmarczyk$^1$\thanks{Supported by 
			the DFG project AFFA (BR 5207/1 and NI 369/15).} , and Rolf 
			Niedermeier$^1$}

\date{
	\small
	$^1$~TU Berlin, Algorithmics and 
	Computational Complexity
	\texttt{\{niclas.boehmer,robert.bredereck,a.kaczmarczyk, 
		rolf.niedermeier\}@tu-berlin.de}\\
	$^2$~AGH University, Krakow\\
	\texttt{faliszew@agh.edu.pl}\\
}

	\maketitle 
	\begin{abstract}
	 We introduce the model of line-up elections which captures parallel or 
	 sequential single-winner elections with a shared candidate pool. The goal 
	 of a line-up election is to find a high-quality assignment of a 
	 set of candidates 
	 to a set of positions such that each position is filled by exactly one 
	 candidate and each candidate fills at most one position. A score for each 
	 candidate-position pair is given as part of the input, which expresses the 
	 qualification of the candidate to fill the position. 
	 We propose
	 several voting rules for line-up elections and analyze them from an 
	 axiomatic 
	 and an
	 empirical perspective using real-world data from the popular video game 
	 FIFA.
	\end{abstract}
\hspace{17pt} \keywords{Single-winner voting, Multi-winner voting,
Assignment
	problem, \\ \hspace*{20pt} Axiomatic analysis, and Empirical analysis.}
%
%
%
	\section{Introduction}  
	Before the start of the soccer World Cup 2014, Germany's head coach
	Joachim L{\"o}w had problems to find an optimal team 
	formation. Due to several
	injuries, L{\"o}w was stuck without a traditional striker. He 
	decided to play with three
	offensive midfielders instead, namely, M{\"u}ller, {\"O}zil, and 
	G{\"o}tze.
	However, he struggled to decide who of the
	players should play in the center, on the right, and on the 
	left.\footnote{This story and the opinions of the coaches are 
		fictional. 
		However, L{\"o}w really faced the described problem before the World 
		Cup 2014.} At the final
	coaching meeting, he surveyed the opinions of ten 
	coaching assistants asking for each of the candidates, ``Is this candidate 
	suitable to play on the left/in the center/on
	the right?''. Coaching assistants were allowed to approve an 
	arbitrary subset of candidate-position pairs. He got answers resulting in 
	the following numbers of approvals
	for each candidate-position pair: 
	\begin{center}
		\begin{tabular}{c|c|c|c|} 
			
			\diagbox{Candidate}{Position}& Left & Center & Right \\ \hline 
			M{\"u}ller & 5
			& 10 & 9\\ {\"O}zil & 3 & 8 & 5 \\ G{\"o}tze &4 & 7 & 4
		\end{tabular}
	\end{center}

	After collecting the results, 
	some 
	of the coaches 
	argued that M{\"u}ller must play in the center, as everyone agreed with 
	this. Others argued that M{\"u}ller should play on the right, as
	otherwise this position would be filled by a considerably less 
	suitable
	player. Finally, someone pointed out that M{\"u}ller should play on the 
	left,
	as this was the only possibility to fill the positions such that every 
	position
	gets assigned a player approved by at least half of the 
	coaches.

	The problem of assigning M{\"u}ller, {\"O}zil, and G{\"o}tze can be modeled 
	as three parallel single-winner elections 
	with 
	a shared candidate pool, where every candidate can win at most one election 
	and
	each voter is allowed to cast different preferences for each 
	election. In our example, the coaches are the voters, the 
	players are the candidates and the three locations on the field are 
	the positions.
	Classical single-winner voting rules do not 
	suffice to determine the winners in such settings, as a candidate may win 
	multiple elections. Also multi-winner voting rules cannot be used, 
	as a voter can asses the candidates differently in different elections. 
	Other examples of parallel single-winner elections with a shared 
	candidate pool include a company that wants to fill different positions 
	after an open call for applications,
	a cooperation electing an executive board composed of different positions,
	and a professor who assigns students to projects. 
	
	In this paper, we introduce a framework for such settings: 
	In a \emph{line-up election}, we get as input 
	a 
	set of candidates, a 
	set of positions, and for each candidate-position pair a score expressing 
	how suitable this candidate is to win the election for this 
	position. The goal of a line-up election is to find a ``good'' assignment 
	of 
	candidates to 
	positions such that each position gets assigned exactly one candidate and 
	each candidate is assigned 
	at most once. There exist multiple possible sources of the scores. For 
	instance, a variety of single-winner voting rules aggregate
	preference profiles into single scores for each candidate and then select 
	the candidate with the highest score as the winner of the given election. 
	Examples of such rules include Copeland's 
	voting rule, where the score of a candidate is the number of her pairwise 
	victories minus the number of her pairwise defeats, positional scoring 
	rules, or  
	Dodgson's voting rule.    
	Thus, line-up elections offer a flexible framework that 
	can be built upon a variety of single-winner voting rules.

	\paragraph{Our Contributions.} This paper introduces \emph{line-up 
	elections}---parallel single-winner elections with a
	shared candidate pool---and initiates a study thereof. After stating the
	problem formally, we propose two 
	classes of voting rules, sequential and OWA-rules. Sequential 
	rules fill positions in some order---which may depend on 
	the scores---and select the best still available candidate for a given 
	position. In 
	the versatile class of OWA-rules, a rule aims at
	maximizing some ordered weighted average (OWA) of the scores of the
	assigned candidate-position pairs. 
	We highlight seven rules from these two classes. 
	Subsequently, inspired by work on 
	voting, we describe several desirable axioms for 
	line-up
	voting rules and provide a comprehensive and diverse 
	picture of their 
	axiomatic properties. We complement this axiomatic 
	analysis by empirical investigations on data from the popular soccer video 
	game FIFA~\cite{fifa} and synthetic data.	
	
	\medskip
	As our model considers 
	multiple, parallel
	single-winner elections, it can be seen as an extension of single-winner 
	elections; indeed, we can view the scores of candidates for a position as 
	obtained from some voting rule 
	\cite{arrow2010handbook,DBLP:reference/scw/BramsF2002,DBLP:reference/choice/Zwicker16}.
	It
	reduces to multi-winner voting~\cite{DBLP:reference/trends/FaliszewskiS17} 
	if every voter casts the same 
	vote
	in all elections. Most of our proposed axioms are 
	generalizations
	of axioms studied in those settings
	\cite{DBLP:journals/scw/AzizBCEFW17,DBLP:journals/scw/ElkindFSS17,DBLP:reference/choice/Zwicker16}.
	Previously, committee elections where the committee consists of different 
	positions were 
	rarely considered. Aziz and 
	Lee~\cite{DBLP:conf/aies/0001L18} studied multi-winner elections where a 
	given 
	committee is
	partitioned into different sub-committees and each candidate is 
	suitable
	to be part of some of these sub-committees.
	
	We defer several details of many discussions and proofs
		to the appendix.
	
	\section{Our Model}
	In a \emph{line-up election} $E$, we are given a set of $m$
	candidates $C=\{c_1,\dots, c_m\}$ and $q$ positions~$P=\{p_1,\dots,p_q\}$ 
	with $m\geq q$, together with a score matrix~$\mathbf{S}\in 
	\mathbb{Q}^{m\times q}$. For~$i\in [m]$ and~$j\in[q]$, $\mathbf{S}_{i,j}$ 
	is the score of candidate $c_i$ for position $p_j$, which we denote 
	as~$\score_{p_j}(c_i)$. An 
	outcome of $E$ is an assignment of candidates to positions, where each 
	position is assigned 
	exactly one candidate and each candidate gets assigned to at most one 
	position. We call an outcome a \emph{line-up} $\pi$ and write, for a 
	position $p\in P$, $\pi_p\in C$ to denote the 
	candidate 
	that is assigned
	to position $p$ in $\pi$. We write a line-up $\pi$ as a 
	$q$-tuple~$\pi=(\pi_{p_1},\dots \pi_{p_q})\in C^q$ with pairwise
    different entries. 
    
    For a 
    candidate-position pair 
    $(c,p)\in
    C\times P$, we say that $(c,p)$ is assigned in $\pi$ if $\pi_p=c$ and write
    $(c,p)\in\pi$. Moreover, for an outcome $\pi$ and a candidate $c$, let
    $\pi(c)\in P\cup \{\square\}$ be the position that $c$ is assigned to in
    $\pi$; that is, $\pi(c)=p$ if $\pi_p=c$ and $\pi(c)=\square$ if~$c$ does 
    not
    occur in $\pi$. We write $c\in \pi$ if $\pi(c)\neq \square$, and $c\notin
    \pi$ otherwise. For a line-up~$\pi$ and a subset of positions $P'\subseteq
    P$, we write $\pi|_{P'}$ to denote the tuple $\pi$ restricted to positions 
    $P'$ and $\pi_{P'}$ to denote the
    set of candidates assigned to positions $P'$ in $\pi$. For a position $p\in 
    P$ and line-up $\pi$, we write $\score_p(\pi)$ 
    to 
	denote $\score_p(\pi_p)$.
	Moreover, we refer to
	$(\score_{p_1}(\pi),\dots, \score_{p_q}(\pi))$ as the \emph{score vector} of
	$\pi$. A \emph{line-up voting rule} $f$ maps a line-up 
	election 
	$E$ to a set
	of winning line-ups, where we use $f(E)\subseteq 2^{C^q}$ to denote the
	set of winning line-ups returned by rule $f$ applied to election
	$E$.

	\section{Line-Up Elections as Assignment Problems} 
	
	It is possible to interpret line-up elections as instances of the 
	Assignment Problem, which aims to find a (maximum weight) matching in a 
	bipartite graph. 
	The assignment graph~$G$ of a line-up election $(\mathbf{S},C,P)$ is a 
	complete weighted bipartite graph $G=(C\uplus P,E,w)$ with edge set $E:=\{ 
	\{c,p\}\mid c\in C \wedge p\in P\}$ 
	and weight function
	$w(c,p):=\score_p(c)$ for~$\{c,p\}\in E$. Every matching in the 
	assignment graph which matches all positions induces a valid line-up.  
	
	The Assignment Problem and its generalizations have been 
	mostly studied from an algorithmic and fairness
	perspective
	\cite{DBLP:journals/ior/Garfinkel71,DBLP:journals/algorithmica/GargKKMM10,DBLP:conf/atal/GoldenP10,DBLP:books/daglib/p/Kuhn10,DBLP:journals/algorithmica/LescaMP19,DBLP:conf/aaai/LianMNW18}.
	For instance, Lesca et al. 
	\cite{DBLP:journals/algorithmica/LescaMP19} studied finding 
	assignments with balanced satisfaction from an 
	algorithmic perspective.
	They utilized ordered weighted average 
	operators and proved that finding assignments that maximize an arbitrary 
	non-decreasing ordered weighted average is NP-hard (see next section for 
	definitions). One generalization of the Assignment 
	Problem 
	is the Conference Paper Assignment Problem (\textsc{CPAP}), which tries to find a 
	many-to-many assignment of papers to reviewers with capacity constraints on 
	both sides \cite{DBLP:conf/aaai/Goldsmith07}. Focusing on egalitarian 
	considerations, Garg et 
	al.~\cite{DBLP:journals/algorithmica/GargKKMM10} studied 
	finding outcomes of \textsc{CPAP} which are 
	optimal 
	for the reviewer that is worst off, where they break ties by looking at the 
	next worst reviewer. They proved that this task is computationally hard in 
	this generalized setting. In contrast to our work and the work by Lesca et 
	al. 
	\cite{DBLP:journals/algorithmica/LescaMP19}, Lian et al. 
	\cite{DBLP:conf/aaai/LianMNW18} employed OWA-operators in the context of 
	\textsc{CPAP} on a different level. 
	Focusing on the satisfaction of reviewers, 
	they studied finding 
	assignments maximizing
	the ordered weighted average 
	of the values a reviewer gave to her assigned papers 
	and conducted experiments 
	where they compare different OWA-vectors.
	
	In contrast to previous work on the Assignment Problem, we 
	look at the problem through the eyes of voting theorists. We come 
	up with several axiomatic and quantitative properties that are 
	desirable to fulfill by a mechanism if we assume that the Assignment 
	Problem is applied in the context of an election. 

	\section{Voting Rules}
	As we aim at selecting an individually-excellent line-up, a straightforward
	approach is to maximize the social welfare, which is determined by the 
	scores of
	the assigned candidate-position pairs. However, it is not always clear 
	which type of social welfare may be of interest. For example, the overall
	performance of a line-up may depend on the performance of the worst
	candidate. This may apply to team sports.
	Sometimes, however, the performance of a line-up is proportional to the sum
	of the scores and it does not hurt if some positions are not filled by a 
	qualified candidate. OWA-operators provide a convenient way to express both 
	these goals, 
	as well as a continuum of middle-ground approaches 
	\cite{DBLP:journals/tsmc/Yager88}.
	\smallskip

	\noindent\textbf{OWA-rules $\pmb{\mathcal{F}^\Lambda}$}. For a tuple 
	$a=(a_1,\dots,a_k)$ and $i\in [k]$, let $a[i]$ be the 
	$i$-th
	largest entry of $a$. We call $\Lambda:=(\lambda_1,\dots,\lambda_k)$ an 
	ordered weighed-average vector (OWA-vector) and define the ordered weighted 
	average of $a$ under $\Lambda$ as 
	$\Lambda(a_1,\dots,a_k):=\sum_{i\in [k]}\lambda_i \cdot 
	a[i]$~\cite{DBLP:journals/tsmc/Yager88}. 
	For a line-up $\pi$, we define $\Lambda(\pi)$ as the ordered weighted 
	average of the score vector of~$\pi$ 
	under $\Lambda$. That is, 
	$\Lambda(\pi):=\Lambda\big(\score_{p_1}(\pi),\dots, 
	\score_{p_q}(\pi)\big)$.
	The score of a line-up $\pi$ assigned by an OWA-rule $f^\Lambda$ is 
	$\Lambda(\pi)$.
	Rule $f^\Lambda$ chooses (possibly tied) line-ups with the highest
	score.

	Among this class of rules, we focus on the following four natural 
	ones, quite well studied in other contexts, such as finding a collective
	set of items 
	\cite{DBLP:conf/atal/AzizGGMMW15,conference-BFKKN20,DBLP:conf/aldt/ElkindI15,DBLP:journals/ai/SkowronFL16}:
	\begin{itemize}[itemsep=0pt,topsep=0pt]
		\item \emph{Utilitarian rule $f^{\ut}$}: $\Lambda^{\ut}:=(1,\dots,1)$. 
		This corresponds to computing a maximum weight matching 
		in the assignment graph. It is computable 
		in 
		$\mathcal{O}(m^3)$
		time for~$m$~candidates \cite{DBLP:books/daglib/p/Kuhn10}.
		\item \emph{Egalitarian rule $f^{\eg}$}: 
		$\Lambda^{\eg}:=(0,\dots,0,1)$. This corresponds to solving the Linear 
		Bottleneck
		Assignment Problem,
		which can be done in $\mathcal{O}(m^2)$ time for
		$m$~candidates~\cite{DBLP:journals/ior/Garfinkel71}.
		\item \emph{Harmonic rule $f^{\har}$}:
		$\Lambda^{\har}:=(1,\frac{1}{2},\frac{1}{3},\dots, \frac{1}{q})$. The
		computational complexity of finding a winning line-up under this rule 
		is open. In our experiments, we compute it using Integer 
		Linear Programming (ILP).
		\item \emph{Inverse harmonic rule $f^{\Ihar}$}: 
		$\Lambda^{\Ihar}:=(\frac{1}{q},\frac{1}{q-1},\dots,\frac{1}{2},1)$. 
		While computing the winning line-up for an arbitrary non-decreasing 
		OWA-vector is NP-hard~\cite{DBLP:journals/algorithmica/LescaMP19}, the 
		computational complexity of computing this specific rule is open. We 
		again use an ILP to compute a winning line-up.
	\end{itemize}
    
    \smallskip

	\noindent\textbf{Sequential-rules $\pmb{\mathcal{F}^{\seq}}$}. 
	OWA-rules require 
	involved algorithms and cannot be applied by hand 
	easily. In practice, humans
	tend to solve a line-up election in a simpler way, for instance, by 
	determining the election winners one by one. This procedure results in
	a class of quite intuitive sequential voting rules. A sequential rule is
	defined by some function $g$ that, given a line-up election
	$E=(\mathbf{S},C,P)$ and a set of already assigned candidates 
	$C_{\text{as}}$ and 
	positions $P_{\text{as}}$, returns the next position to be 
	filled. This 
	position is then filled by the remaining candidate $C\setminus 
	C_{\text{as}}$ with the 
	highest score on this position.  Sequentializing 
    decisions which partly depend on each other has also 
    proven to be useful in other voting-related problems, such as voting in 
    combinatorial domains 
    \cite{DBLP:journals/mss/LangX09}, or in the House Allocation problem in 
    form of the well-known mechanism of serial 
    dictatorship.
    We focus on the following three linear-time computable sequential rules.

	\smallskip \noindent\textit{Fixed-order rule} $f^{\seq}_{\fix}$. Here, the
	positions are filled in a fixed order (for simplicity, we assume the order 
	in which the positions appear in the election). The
	fixed-order sequential rule is probably the simplest way to generalize
	single-winner elections to the line-up setting and it enables us to make 
	the 
	decisions separately. Moreover, it is not necessary to 
	evaluate all candidates
	for all positions, which is especially beneficial if evaluating the
	qualification of a candidate on a position comes at some (computational) 
	cost.

	\smallskip \noindent\textit{Max-first rule} $f^{\seq}_{\ma}$. In the 
	max-first rule, at each step the position with the highest still 
	available score
	is filled. That is, 
	$g(\mathbf{S},C,P,C_{\text{as}},P_{\text{as}}):=\argmax_{p\in P \setminus 
	P_{\text{as}}} \max_{c\in C \setminus C_{\text{as}}} \score_p(c)$. 
	This is equivalent to adding at each step the remaining
	candidate-position pair with the highest score to
	the line-up. Max-first is intuitively appealing because a candidate who
	is outstanding at a position is likely to be assigned to it. Notably, this
	rule is an approximation of the utilitarian rule, as it corresponds to 
	solving
	the Maximum Weight Matching problem in the assignment graph by greedily 
	selecting the remaining edge with the
	highest weight.
	For every possible 
	tie-breaking, this approach is guaranteed to yield a
	$\frac{1}{2}$-approximation of the optimal solution in polynomial time~\cite{DBLP:journals/networks/Avis83}.

	\smallskip \noindent\textit{Min-first rule} $f^{\seq}_{\mi}$. In the 
	min-first rule, the position with the lowest score of the most-suitable 
	remaining candidate is filled next. That is, 
	$g(\mathbf{S},C,P,C_{\text{as}},P_{\text{as}}):=\argmin_{p\in P \setminus 
		P_{\text{as}}} \max_{c\in C \setminus C_{\text{as}}} \score_p(c).$
	The reasoning behind this is that the deciders focus first on filling 
	critical 
	positions where all candidates perform poorly.
	\begin{table}[t]
		\centering
		\setlength{\tabcolsep}{2pt}
		\renewcommand{\arraystretch}{1.2}
		\begin{tabular}{c|c|c|c|c|c|c|c|c|c} 
			&non & Pareto & reasonable & score & 
			\arxiv{position &}
			mono- 
			& 
			line-up \\ 
			&wasteful & optimal & satisfaction & consistent &\arxiv{ 
			consistent  
			& }
			tonicity & 
			enlargement \\
			\hline
			$f^{\ut}$  & S & S & $-$ & S& \arxiv{W &} S & 
			S \\
			$f^{\har}$  & S & S & $-$ & $-$ & \arxiv{$-$ &} $-$ &
			$-$ \\
			$f^{\Ihar}$  & S & S & $-$ & $-$ & \arxiv{$-$ &} $-$ 
			& $-$ \\
			$f^{\eg}$  & W\textsuperscript{\small $\dagger$} & 
			W\textsuperscript{\small $\dagger$} & $-$ & $-$ & 
			\arxiv{W &}
			$-$\textsuperscript{\small $\dagger$} & 
			W\textsuperscript{\small $\dagger$} \\
			$f^{\seq}_{\fix}$  & S & W & $-$ & W & \arxiv{W &} S &
			S \\
			$f^{\seq}_{\ma}$  & S & W & S & $-$ & \arxiv{W &} S &
			S \\
			$f^{\seq}_{\mi}$  & S & $-$ & $-$ & $-$ & \arxiv{W &} $-$ &
			$-$ \\
		\end{tabular}
		\caption{Overview of axiomatic properties of all studied rules. For 
		each of 
			the axioms, we indicate whether this axiom is strongly 
			satisfied (S), only weakly satisfied (W), or not satisfied at all 
			($-$). For the egalitarian rule, all entries marked with a 
			$\dagger$ can 
			be improved if a variant of this rule that selects the 
			egalitarian outcome with the highest summed score is 
			used.			
		}\label{ta:sum}
	\end{table}

	\section{Axiomatic Analysis of Voting Rules} \label{se::Axiom}
	
	We propose several axioms and properties that serve as a starting point to
	characterize and compare the introduced voting rules.
	We checked all introduced voting rules against all the axioms 
	and	collected the results in \autoref{ta:sum} (see 
	\autoref{se:proofs} for 
	all 
	proofs).\footnote{To convey an intuitive understanding of the 
	axioms, we 
	present in \autoref{se:intuition} for 
	each of them an undesirable example that might occur if a voting rule that 
	violates this axiom is used.} We will 
introduce two efficiency and one fairness axiom for line-ups. 
These definitions extend to voting rules as follows. A
voting rule $f$ \emph{strongly (weakly) satisfies} a given axiom if for 
each 
line-up 
election $E$, $f(E)$ contains only (some) line-ups satisfying the axiom.

	\smallskip \noindent \textbf{Efficiency axioms.} 	
	As our goal is to
	select individually-excellent outcomes, we aim at selecting line-ups
	in which the score of each position is as high as possible. Independent of
	conflicts between positions, there exist certain outcomes which are 
	suboptimal. For example, it is undesirable if there exists an unassigned 
	candidate
	that is more suitable for some position than the currently assigned 
	one.
	\begin{axiom}
		\textbf{Non-wastefulness}: In a line-up election 
		$(\mathbf{S},C,P)$,
		a line-up~$\pi$ is non-wasteful if there is no unassigned candidate 
		$c\notin \pi$ and a position 
		$p\in P$
		such that $\score_p(c)>\score_p(\pi)$. 
	\end{axiom}
	This axiom implies that in the special case of a single
	position, the candidate with the highest score for the position
	wins the election.	
	In the context of non-wastefulness, we only examine whether a
	line-up can be improved by assigning unassigned candidates. However, it is
	also possible to consider arbitrary rearrangements. This results in the 
	notion
	of score Pareto optimality.
	\begin{axiom}
		\textbf{Score Pareto optimality}: In a line-up election
		$(\mathbf{S},C,P)$, a line-up $\pi$ is score Pareto dominated
		if there exists a line-up $\pi'$ such that for all $p\in P$,
		$\score_p(\pi')\geq \score_p(\pi)$ and there exists a position $p\in P$ 
		with
		$\score_p(\pi')> \score_p(\pi)$. A line-up is score Pareto
		optimal if it is not score Pareto dominated.
	\end{axiom}
	While all OWA-rules with an OWA-vector containing no zeros are clearly 
	strongly non-wasteful and strongly score Pareto optimal, 
	the 
	egalitarian rule satisfies 
	both axioms only weakly, as this rule selects a line-up purely based on its 
	minimum score. All sequential rules naturally satisfy strong
	non-wastefulness. Yet, by breaking ties in a suboptimal way, all 
	of them may output line-ups that are not score Pareto optimal. 
	While for the fixed-order rule and the max-first rule at least one winning 
	outcome is always score Pareto optimal, slightly counterintuitively, there 
	exist 
	instances where the 
	min-first rule does not output any score Pareto optimal line-ups.
	 
	\smallskip
	\noindent\textbf{Fairness axioms.}
	Another criterion to judge the quality of a voting rule is to 
	assess 
	whether positions and candidates are treated in a fair way.
	The underlying assumption for fairness in this context is that every 
	position should have the best possible candidate assigned. 
	Similarly, from
	candidates' perspective, one could argue that each candidate 
	deserves to be assigned to the position for which the candidate is most
	suitable. In the following, 
	we call a candidate or position for which fairness is violated 
	dissatisfied. Unfortunately, line-ups in which all positions and candidates 
	are satisfied simultaneously may not
	exist. 
	That is why we consider a restricted fairness property, where we call a 
	candidate-position 
	pair $(c,p)\in \pi$ reasonably dissatisfied in $\pi$ if
	candidate $c$ has a higher score for $p$ than $\pi_p$ and $c$ is either 
	unassigned or $c$'s score 
	for $p$ is higher than for $\pi(c)$.\footnote{For a more extensive 
		discussion of reasonable 
		satisfaction, see \autoref{se:rdiss}.}
	
	\begin{axiom}
		\textbf{Reasonable satisfaction}: A line-up $\pi$ is 
		reasonably satisfying if there are no two positions $p$ and $p'$ such 
		that 
		$$\score_p(\pi_{p'})> 
		\score_p(\pi_{p}) \text{ and } 
		\score_{p}(\pi_{p'})> \score_{p'}(\pi_{p'})$$ and there is no 
		candidate 
		$c\notin \pi$ and position~$p$ such that 
		$\score_p(c)>\score_p(\pi_p).$
	\end{axiom}

	\noindent  
	It is straightforward to prove that all winning line-ups
	under the 
	max-first rule are reasonably satisfying; hence, a reasonably
	satisfying outcome always exists. Note that it is also possible to motivate 
	reasonable satisfaction as a 
	notion of stability if we assume that candidates and positions are allowed 
	to leave their currently assigned partner to pair up with a new candidate 
	or position. Thus, reasonable satisfaction resembles the notion of 
	stability in the context of the Stable Marriage problem 
	\cite{DBLP:journals/tamm/GaleS13}. 
	
	However, fulfilling reasonable satisfaction may come at the cost of 
	selecting a 
	line-up that 
	is suboptimal for various notions of social welfare. For some $\epsilon>0$, 
	consider an election with $P=\{p_1,p_2\}$, $C=\{a,b\}$, 
	$\score_{p_1}(a)=2$, $\score_{p_2}(a)=\score_{p_1}(b)=2-\epsilon$, and 
	$\score_{p_2}(b)=0$.
	We see that the outcome $(b,a)$ of utilitarian welfare $4-2\epsilon$ 
	maximizes utilitarian social welfare, 
	while the outcome $(a,b)$ of utilitarian welfare $2$ is 
	the only outcome 
	satisfying reasonable satisfaction. Therefore, it is interesting to measure 
	the 
	price of 
	reasonable satisfaction in terms of utilitarian welfare. Analogous to the 
	price 
	of stability \cite{DBLP:journals/siamcomp/AnshelevichDKTWR08}, we define 
	this as the maximum utilitarian social 
	welfare 
	achievable by a reasonably satisfying outcome, divided by the
	maximum 
	achievable utilitarian social welfare. The example from above already 
	implies that this price is upper bounded by $\frac{1}{2}$. In fact, this 
	bound is 
	tight, as the max-first rule that only outputs reasonably satisfying 
	line-ups is a $\frac{1}{2}$-approximation of 
	the utilitarian outcome.  
	
	\smallskip
	\noindent \textbf{Voting axioms.}
	We now formulate several axioms that are either closely related to axioms 
	from 
	single-winner~\cite{DBLP:reference/choice/Zwicker16} or multi-winner 
	voting~\cite{DBLP:journals/scw/ElkindFSS17}. We define all axioms
	using two parts, 
	(a) and (b). On a high level, condition~(a) imposes that certain 
	line-ups should be winning after modifying a line-up election, while 
	condition~(b) demands 
	that no other line-ups become (additional) winners after the
	modifications. If a voting rule only fulfills condition~(a), then we say 
	that it \emph{weakly satisfies} the corresponding axiom. If
	it fulfills both conditions, then we say that it \emph{strongly satisfies} 
	the
	corresponding axiom.
	
	In single-winner and multi-winner voting, the consistency 
	axiom  requires that if a voting rule selects the same outcome in two 
	elections (over the same candidate set), then this 
	outcome is also winning in the combined election \cite{YOUNG197443}. We 
	consider a 
	variant of this axiom, adapted to our setting.
	\begin{axiom}
		\textbf{Score consistency}: For two line-up elections 
		$(\mathbf{S},C,P)$ and 
		$(\mathbf{S}',C,P)$ with
		$f(\mathbf{S},C,P)\cap 
		f(\mathbf{S}',C,P)\neq \emptyset$ it holds 
		that: 
		a) $f(\mathbf{S},C,P)\cap f(\mathbf{S}',C,P)\subseteq 
		f(\mathbf{S}+\mathbf{S}',C,P)$ and 
		b) $f(\mathbf{S},C,P)\cap f(\mathbf{S}',C,P)\supseteq 
		f(\mathbf{S}+\mathbf{S}',C,P).$
	\end{axiom}
	The utilitarian rule is the only 
	OWA-rule that satisfies weak 
	(and even strong) score consistency. The 
	fixed-order rule is 
	the only other considered rule that satisfies weak score consistency. For 
	all other rules, it is possible to construct simple 
	two-candidates two-positions line-up elections where this axiom is 
	violated.  

	We now consider a different way of combining two 
	line-up elections which is special to our setting. In line-up elections, it 
	is possible to combine the 
	set of considered positions: Imagine that we split a line-up election  into 
	two sub-elections with the same candidate set in both elections and 
	partly overlapping sets of 
	positions. Assume further that the considered voting rule outputs a winning 
	outcome of the fist sub-election and a winning outcome of the second 
	sub-election such that all shared positions are filled by the same 
	candidates 
	and all non-shared positions by different candidates that do not appear in 
	the other line-up at all. Now, considering the full election on all 
	positions, the union of 
	these outcomes should then be a winning outcome. Formally, we say that 
	two line-ups,~$\pi'$ defined on $P'$ and 
	$\pi''$ defined on $P''$, are 
	\textit{overlapping-disjoint} if they assign the same candidates on all
	positions from $P'\cap P''$ and a disjoint set of candidates to the other 
	positions. For two
	overlapping-disjoint line-ups $\pi'$ and~$\pi''$, we write $\pi'\cup \pi''$ 
	to
	denote the line-up~$\pi^*$ combining the two, that is, for all $p\in 
	P'\colon
	\pi^*_p=\pi_p$ and for all~$p\in P''\colon \pi^*_p=\pi'_p$. 
	\begin{axiom}
		\textbf{Position consistency}: Let
		$(\mathbf{S},C,P)$ be a line-up election and $P',P''\subseteq P$ two 
		subsets of positions with $P'\cup P''=P$.  
		If there exist two overlapping-disjoint outcomes~$\pi'\in 
		f(\mathbf{S},C,P')$ and $\pi''\in f(\mathbf{S},C,P'')$, then it holds 
		that 
		(a) $\pi'\cup \pi''\in f(\mathbf{S},C,P)$ and 
		(b) for all $\pi^*\in f(\mathbf{S},C,P)$,
		there exist 
		overlapping-disjoint $\pi'\in 
		f(\mathbf{S},C,P')$ and $\pi''\in f(\mathbf{S},C,P'')$ such that 
		$\pi^*=\pi'\cup \pi''$. 
	\end{axiom} 
	It is easily possible to construct small instances for all considered 
	voting 
	rules where strong position consistency is violated. However, all rules 
	except the two 
	harmonic rules satisfy the axiom in the weak sense. The harmonic rules fail 
	the axiom because it is possible to exploit the fact that by extending the 
	election, the coefficient by 
	which a score in the line-up is multiplied changes compared to the two 
	sub-elections.
	
	Besides focusing on consistency related considerations, it is also 
	important to examine how the winning line-ups change if 
	the election 
	itself is  modified. We start by considering a variant of monotonicity 
	\cite{DBLP:journals/dam/Fishburn82}.
	
	\begin{axiom}
		\textbf{Monotonicity}: Let $(\mathbf{S},C,P)$ be a line-up 
		election with a
		winning line-up $\pi$. Let $(\mathbf{S}',C,P)$ be the line-up 
		election obtained from $(\mathbf{S},C,P)$ 
		by increasing $\score_{p}(\pi_p)$ for some $p$.
		Then, it 
		holds that (a) $\pi$ is still a winning line-up, that is, $\pi \in 
		f(\mathbf{S}',C,P)$, 
		and (b) no new winning line-ups are created, that is, for all $\pi'\in 
		f(\mathbf{S}',C,P)$ 
		it holds that~$\pi'\in f(\mathbf{S},C,P)$.
	\end{axiom}
	
	While the utilitarian, fixed-order, and max-first rule all 
	satisfy 
	strong monotonicity, all other rules fail even 
	weak monotonicity. As these negative results are intuitively surprising, 
	we present their proof here. 
	
	\begin{proposition} \label{th::monot}
		The egalitarian rule $f^{\eg}$, the harmonic rule $f^{\har}$, and the 
		min-first rule~$f^{\seq}_{\min}$ all violate weak 
		monotonicity.
	\end{proposition}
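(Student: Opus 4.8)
The plan is to violate part~(a) of the monotonicity axiom for each of the three rules separately; since a rule needs only~(a) in order to \emph{weakly} satisfy the axiom, this already establishes the proposition. For each rule I will construct a small line-up election together with a winning line-up~$\pi$ and a position~$p$ such that, after increasing $\score_p(\pi_p)$, the line-up~$\pi$ is no longer among the winners. In all three instances I take $C=\{a,b,c\}$, use two or three positions, and set every score that is not explicitly listed to~$0$; this pins down all but one or two candidate line-ups as non-contenders and reduces the verification to a handful of arithmetic comparisons.

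\emph{Egalitarian rule.} Take $P=\{p_1,p_2\}$ with $\score_{p_1}(a)=1$, $\score_{p_2}(b)=1$, $\score_{p_2}(c)=2$ (all other scores~$0$). Then $(a,b)$ and $(a,c)$ both have minimum score~$1$ while every other line-up has minimum score~$0$, so $f^{\eg}$ outputs exactly $\{(a,b),(a,c)\}$; set $\pi=(a,b)$. Increasing $\score_{p_1}(\pi_{p_1})=\score_{p_1}(a)$ to~$3$ leaves the minimum of $(a,b)$ equal to $\min\{3,1\}=1$ --- it is now pinned by position~$p_2$ --- but raises the minimum of $(a,c)$ to $\min\{3,2\}=2$. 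Hence $(a,c)$ becomes the unique winner and $\pi\notin f^{\eg}(\mathbf{S}')$.

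\emph{Harmonic rule.} The crucial feature to exploit is that the OWA weight a score receives depends on its rank \emph{within its own line-up}. Take $P=\{p_1,p_2,p_3\}$ with $\score_{p_1}(a)=1$, $\score_{p_2}(b)=9$, $\score_{p_3}(b)=6$, $\score_{p_2}(c)=6$, $\score_{p_3}(c)=1$ (all other scores~$0$). Then $\Lambda^{\har}(a,b,c)=9+\tfrac12+\tfrac13=\tfrac{59}{6}$, $\Lambda^{\har}(a,c,b)=6+3+\tfrac13=\tfrac{56}{6}$, and each of the remaining four line-ups has a single nonzero entry and thus harmonic score at most~$9<\tfrac{59}{6}$; so $\pi=(a,b,c)$ is the unique winner. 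Now increase $\score_{p_1}(\pi_{p_1})=\score_{p_1}(a)$ to~$10$. In $(a,b,c)$ the value~$10$ is still the \emph{smallest} of the three scores, so it keeps weight~$\tfrac13$, giving $\Lambda^{\har}(a,b,c)=10+\tfrac92+\tfrac13=\tfrac{89}{6}$; in $(a,c,b)$ the same value is now the \emph{largest} entry and gets weight~$1$, giving $\Lambda^{\har}(a,c,b)=10+3+2=15>\tfrac{89}{6}$. The other four line-ups are untouched, so $(a,c,b)$ is the new unique winner and $\pi$ has been dethroned. I expect this rule to be the delicate case: it is tempting, but does not work, to simply make the boosted pair jump to rank~$1$ in a competitor line-up and boost by a large amount, because then the boosted value dominates \emph{both} harmonic scores and $\pi$ stays on top. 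One must also tune the two remaining scores of $\pi$ and of the competitor~$\rho$ so that before the boost $\pi$ beats $\rho$ by a hair and afterwards $\rho$ overtakes by a hair --- i.e.\ engineer the two-element OWA comparison of the non-shared scores of $\pi$ against those of~$\rho$ to reverse when the weight carried by the boosted value moves from the top slot~($1$) to the bottom slot~($\tfrac13$).

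\emph{Min-first rule.} Here increasing $\score_{p_1}(\pi_{p_1})$ also increases $\max_{c}\score_{p_1}(c)$, which can flip the order in which the positions are processed. Take $P=\{p_1,p_2\}$ with $\score_{p_1}(a)=3$, $\score_{p_1}(b)=2$, $\score_{p_1}(c)=1$, $\score_{p_2}(a)=6$, $\score_{p_2}(b)=5$, $\score_{p_2}(c)=0$. Before the change $\max_{c}\score_{p_1}(c)=3<6=\max_{c}\score_{p_2}(c)$, so $p_1$ is filled first with~$a$, and then $p_2$ takes the best remaining candidate~$b$; thus $f^{\seq}_{\mi}$ outputs $\pi=(a,b)$. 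After increasing $\score_{p_1}(\pi_{p_1})=\score_{p_1}(a)$ to~$10$ we have $\max_{c}\score_{p_1}(c)=10>6=\max_{c}\score_{p_2}(c)$, so $p_2$ is now filled first; since $a$ is also the best candidate for~$p_2$, it gets assigned there, leaving~$b$ for~$p_1$, so the output becomes $(b,a)\neq\pi$. At no step does a tie occur --- every relevant maximum is attained by a unique position or candidate --- so the conclusion holds for every tie-breaking rule, and $\pi\notin f^{\seq}_{\mi}(\mathbf{S}')$.
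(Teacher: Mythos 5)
Your proposal is correct and takes essentially the same route as the paper: for each of the three rules you exhibit a concrete election with a winning line-up that is dethroned after increasing the score of one of its assigned pairs, i.e., a violation of condition~(a), and all the score computations check out. One remark on the harmonic case: your explanatory sentence that the boosted value $10$ ``is still the smallest'' entry of $(a,b,c)$ is inaccurate---after the boost it is the largest entry in both $(a,b,c)$ and $(a,c,b)$ and receives weight $1$ in both; the example nevertheless works (and your arithmetic $\tfrac{89}{6}$ versus $15$ is right) because the competitor's remaining scores $6,6$ lose less weight when demoted than the $9$ in $(a,b,c)$ does, which is still the coefficient-shift effect the paper exploits.
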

	\begin{proof}
		We present counterexamples for all three rules: 
		\begin{center}
			$E_1:$ \begin{tabular}{c|c|c} 
				
				& $p_1$ & $p_2$ \\
				\hline
				$a$ & $0$ & $3$  \\
				$b$ & $3$ & $0$ \\
				$c$ & $4$ & $0$  \\
			\end{tabular}
			\qquad
			$E_2:$
			\begin{tabular}{c|c|c|c} 
				
				& $p_1$ & $p_2$ & $p_3$\\
				\hline
				$a$ & $4$ & $1$ & $0$ \\
				$b$ & $4.75$ & $3$ & $0$ \\
				$c$ & $0$ & $0$ & $2$ \\
			\end{tabular}
			\qquad 
			$E_3:$ 
			\begin{tabular}{c|c|c} 
				& $p_1$ & $p_2$ \\
				\hline
				$a$ & 2 & 1 \\
				$b$ & 0 & 0 
			\end{tabular}
		\end{center}
		\noindent In $E_1$, $(b,a)$ and $(c,a)$ are winning under $f^{\eg}$. 
		However, after 
		increasing $\score_{p_2}(a)$ to~$4$,~$(c,a)$ has an egalitarian score 
		of $4$ and thereby becomes the unique winning line-up. 
		We now turn to election $E_2$ and voting rule $f^{\har}$. It is clear
		that 
		$c$ will be assigned to $p_3$ in every outcome. In fact, $(a,b,c)$ is 
		the 
		winning line-up in $E_2$, as~$\Lambda^{\har}(a,b,c)=1\cdot 
		4+\frac{1}{2}\cdot 
		3+\mathbf{\frac{1}{3}\cdot 2}>1\cdot 4.75+\frac{1}{3}\cdot 
		1+\mathbf{\frac{1}{2}\cdot 2}
		=\Lambda^{\har}(b,a,c)$. After increasing $\score_{p_3}(c)$ to $3$, 
		$(b,a,c)$ 
		becomes the unique winning line-up in $E_2$, 
		as~$\Lambda^{\har}(a,b,c)=1\cdot 4+\frac{1}{2}\cdot 
		3+\mathbf{\frac{1}{3}\cdot 3}<1\cdot 4.75+\frac{1}{3}\cdot 
		1+\mathbf{\frac{1}{2}\cdot 3}
		=\Lambda^{\har}(b,a,c)$. By the modification, the score of $(b,a,c)$ 
		increases more than the score of $(a,b,c)$, because in $(b,a,c)$ 
		$\score_{p_3}(c)$ is multiplied by a larger coefficient.
		Lastly, 
		considering $f^{\seq}_{\min}$, $(b,a)$ is the unique winning outcome in 
		$E_3$. 
		After increasing $\score_{p_2}(a)$ to 3, the ordering in which the 
		positions get assigned changes, and thereby, $(a,b)$ becomes the unique 
		winning 
		line-up.
	\end{proof}

	In the context of multi-winner voting, an additional 
	monotonicity axiom is sometimes considered: Committee enlargement 
	monotonicity deals with the
	behavior of the set of winning outcomes if the size of the committee is
	increased~\cite{DBLP:journals/scw/BarberaC08,DBLP:journals/scw/ElkindFSS17}.
	 We 
	generalize this axiom to 
	our
	setting in a straightforward way. 
	\begin{axiom}
		\textbf{Line-up enlargement monotonicity}: Let 
		$(\mathbf{S},C,P)$ be a line-up election 
		and let $(\mathbf{S}',C,P')$ with $P'=P\cup \{p^*\}$ be an election 
		where  
		position $p^*$ and the scores of 
		candidates for this position have been added. It holds that (a) for 
		all  
		$\pi\in f(\mathbf{S},C,P)$ there exists some $\pi'\in 
		f(\mathbf{S}',C,P')$ such that
		$\pi_P\subset\pi'_{P'}$ and (b) for all $\pi'\in f(\mathbf{S}',C,P')$ 
		there exists 
		some $\pi\in f(\mathbf{S},C,P)$ such that $\pi_P\subset \pi'_{P'}$.  
	\end{axiom} 
	Note that line-up enlargement monotonicity does not require that the 
	selected 
	candidates are assigned to the same position in the two outcomes $\pi$ and 
	$\pi'$.
	Despite the fact that this axiom seems to be very natural, neither of the 
	two
	harmonic rules satisfy it at all. Intuitively, the reason for this is that 
	by 
	introducing 
	a new position, the coefficients in the OWA-vector ``shift''. 
	Moreover, surprisingly, the min-first rule also violates the weak version 
	of this axiom. The proof for this consists of a rather involved 
	counterexample (see~\autoref{th:LEMM} in the appendix) exploiting 
	the fact 
	that by 
	introducing a new position, the 
	order in 
	which the positions are filled may change. All other rules, apart from the 
	egalitarian one, satisfy strong 
	line-up enlargement monotonicity; the egalitarian rule 
	satisfies line-up enlargement monotonicity only in the weak sense. We 
	conclude with presenting the proof 
	that 
	the utilitarian rule, 
	$f^{\ut}$, satisfies weak line-up enlargement monotonicity, as the proof 
	nicely 
	illustrates how it is possible to reason about this axiom.

	\begin{proposition} \label{th:utilLEM}
		The utilitarian rule $f^{\ut}$ satisfies weak line-up enlargement 
		monotonicity.
	\end{proposition}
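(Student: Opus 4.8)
The plan is to work inside the assignment graph and recall that $f^{\ut}(\mathbf{S},C,P)$ consists precisely of the maximum-weight matchings of the assignment graph that saturate all positions; throughout I identify a line-up with its set of candidate--position edges, and I write $w(\cdot)$ for the total edge weight (i.e.\ the summed score). Fix a winning line-up $\pi\in f^{\ut}(\mathbf{S},C,P)$. Among all maximum-weight line-ups of the extended election I would choose $\pi'\in f^{\ut}(\mathbf{S}',C,P')$ that minimises $|\pi_P\setminus\pi'_{P'}|$, the number of candidates used by $\pi$ but not by $\pi'$. It then suffices to prove that this minimum is $0$: since $|\pi'_{P'}|=|P|+1>|P|=|\pi_P|$, the inclusion $\pi_P\subseteq\pi'_{P'}$ is automatically proper, which is exactly condition~(a) of the axiom (condition~(b) is not required for weak satisfaction).

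Assume towards a contradiction that some $c^-\in\pi_P\setminus\pi'_{P'}$ exists, and consider $\pi\triangle\pi'$, which decomposes into alternating paths and cycles since every vertex has degree at most two in $\pi\cup\pi'$. The combinatorial heart of the argument is the following bookkeeping. Every $p\in P$ is matched in both $\pi$ and $\pi'$, so it has degree $0$ or $2$ in $\pi\triangle\pi'$ and cannot be a path endpoint. The new position $p^*$ is matched only in $\pi'$, so it has degree $1$ and is an endpoint of a unique path $Q$; tracing $Q$ from $p^*$, whose edges alternate $\pi',\pi,\pi',\dots$, and using that its other endpoint is a candidate of degree $1$, a parity count shows that this other endpoint is a candidate lying in $\pi'_{P'}\setminus\pi_P$. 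Now $c^-$ also has degree $1$, hence is an endpoint of some path $R$; since the far endpoint of $Q$ lies outside $\pi_P$ while $c^-\in\pi_P$, we get $R\neq Q$ and in particular $p^*\notin R$. The other endpoint of $R$ is then a candidate $c'$, and running the same alternation count starting from $c^-$ (whose $R$-edge is a $\pi$-edge) forces $c'\in\pi'_{P'}\setminus\pi_P$ and shows that $R$ has equally many $\pi$-edges and $\pi'$-edges.

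To conclude, I would augment along $R$ in both line-ups. On the one hand, $\pi\triangle R$ is a valid line-up on $P$: each position of $P$ on $R$ is interior to $R$ and merely swaps its $\pi$-partner for its $\pi'$-partner, $c^-$ is released, $c'$ is taken, and everything off $R$ is untouched; optimality of $\pi$ therefore gives $w(\pi\triangle R)\le w(\pi)$, i.e.\ the $\pi'$-edges of $R$ have total weight at most that of the $\pi$-edges of $R$. On the other hand, $\pi'\triangle R$ is a valid line-up on $P'$ (the positions of $P$ on $R$ stay covered and $p^*\notin R$ keeps its partner), and its weight equals $w(\pi')$ minus the weight of the $\pi'$-edges of $R$ plus the weight of the $\pi$-edges of $R$, hence is at least $w(\pi')$; by optimality of $\pi'$ it equals $w(\pi')$, so $\pi'\triangle R\in f^{\ut}(\mathbf{S}',C,P')$. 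But $(\pi'\triangle R)_{P'}=(\pi'_{P'}\setminus\{c'\})\cup\{c^-\}$ uses one more candidate of $\pi_P$ than $\pi'_{P'}$ does (it gains $c^-\in\pi_P$ and loses $c'\notin\pi_P$), contradicting the choice of $\pi'$. Hence no such $c^-$ exists. I expect the only genuinely delicate part of writing this up to be the parity bookkeeping pinning down the endpoints of $Q$ and $R$; once that is nailed down, the two augmentations and the weight comparison are routine.
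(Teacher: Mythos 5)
Your proof is correct, but it follows a genuinely different route from the paper. The paper fixes an arbitrary winning line-up $\pi'$ of the extended election and \emph{repairs} it directly: it reinserts every candidate of $\pi$ that is missing from $\pi'$ at her $\pi$-position, calls $\widetilde{P}\subseteq P$ the set of modified positions, and then argues by contradiction that the repaired line-up is still optimal --- if not, the scores of $\pi'$ on $\widetilde{P}$ would exceed those of $\pi$, and grafting $\pi'|_{\widetilde{P}}$ onto $\pi$ would produce a valid line-up beating $\pi$ in the original election (the validity check being the one slightly delicate point, handled by tracking the order in which kicked-out candidates are reassigned). You instead make an extremal choice of $\pi'$ (minimizing $|\pi_P\setminus\pi'_{P'}|$) and run the classical matching-theoretic exchange in $\pi\triangle\pi'$: your parity bookkeeping for the endpoints of $Q$ and $R$ is sound (paths from a position to a candidate have an odd number of edges, candidate-to-candidate paths an even number, positions of $P$ are never endpoints, and $p^*$ lies only on $Q$, so $R$ avoids $p^*$), and the two augmentations along $R$ give exactly the weight inequalities you state, so the deficiency strictly drops --- contradiction. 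What each approach buys: the paper's argument is elementary and self-contained, needing no matching machinery, but its validity argument for the hybrid line-up $\pi_{\textrm{alt}}$ requires a careful bookkeeping of the replacement sequence; your argument localizes the entire exchange to a single alternating path, which makes the validity of both modified line-ups immediate, and it directly produces an optimal line-up of the extended election containing all candidates of $\pi$ (in fact the same style of exchange also yields the strong version, much as the paper later does with a second construction).
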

	\begin{proof}
		Let $\pi$ be a winning line-up 
		of the 
		initial election $E=(\mathbf{S},C,P)$ and let $\pi'$ be a winning 
		line-up of the extended 
		election $E'=(\mathbf{S}',C,P\cup \{p^*\})$ such that there exists a 
		candidate $c\in C$ with
		$c\in \pi$ and $c\notin 
		\pi'$. We claim that it is always possible to construct 
		from 
		$\pi'$ a winning line-up $\pi^*$ of the extended election such that 
		all candidates from $\pi$ appear in $\pi^*$: Initially, we set 
		$\pi^*:=\pi'$. As long 
		as there 
		exists a candidate $c\in C$ with~$c\in \pi$ and $c\notin \pi^*$, we set 
		$\pi^*_{\pi(c)}:=c$. Let $\widetilde{P}$ be the 
		set of all positions  
		where $\pi'$ and~$\pi^*$ differ. Note that none 
		of the replacements can change the candidate assigned to~$p^*$. Thus, 
		it holds 
		that $\widetilde{P}\subseteq P$. 
		
		Obviously, all candidates 
		from $\pi$ appear in $\pi^*$. For the 
		sake 
		of contradiction, let us assume that $\pi^*$ is not a winning line-up 
		of the extended election. 
		Consequently, the summed score of $\pi^*$ has decreased by the 
		sequence of 
		replacements described above, which implies that the summed scores of 
		candidates on positions from $\widetilde{P}$ is higher in $\pi'$ than 
		in~$\pi$:~$\Lambda^{\ut}(\pi|_{\widetilde{P}})<\Lambda^{\ut}(\pi'|_{\widetilde{P}})$.
		We claim that using this assumption, it is possible to modify~$\pi$ 
		such 
		that its utilitarian score increases, which leads to a contradiction, 
		as we have assumed that~$\pi$ is a winning line-up. Let
		$\pi_{\textrm{alt}}$ be a line-up resulting from copying~$\pi$ and then
		replacing all candidates assigned to positions in $\widetilde{P}$ by the
		candidates assigned to these positions in $\pi'$. This is possible as 
		$p^*
		\notin \widetilde{P}$. By our assumption, $\pi_{\textrm{alt}}$ has a 
		higher
		utilitarian score than $\pi$. It remains to argue that 
		$\pi_{\textrm{alt}}$ is
		still a valid outcome, that is, every candidate is only assigned to at 
		most
		one position. This directly follows from the observation that 
		if~$p\in\widetilde{P}$ with $\pi'_p\in \pi$, then at some point during 
		the
		construction of~$\pi^*$,~$\pi'_p$ is kicked out of the line-up and is 
		assigned
		to position~$\pi(\pi'_p)$ at some later point, which implies 
		that~$\pi(\pi'_p)\in \widetilde{P}$.
	\end{proof}
	
	\noindent\textbf{Summary.}  From an axiomatic perspective, the utilitarian 
	rule is probably the
	most appealing one. Indeed, it satisfies all axioms except weak reasonable 
	satisfaction, which imposes quite rigorous restrictions on every rule 
	fulfilling it. For the egalitarian rule, although this rule is pretty 
	simple, 
	both 
	efficiency axioms are only weakly satisfied and, slightly surprisingly, 
	score 
	consistency and monotonicity are not satisfied at all. As 
	in~\autoref{th::monot}, most of the 
	counterexamples for the egalitarian rule utilize that the OWA-vector of 
	this rule contains some zeros. If one adapts the egalitarian rule such 
	that the egalitarian outcome with the highest summed score is chosen, 
	strong non-wastefulness, 
	strong score
	Pareto optimality, strong monotonicity, and strong line-up enlargement
	monotonicity are additionally satisfied. Note that this variant of the 
	egalitarian rule is still 
	computable in polynomial time.

	Both harmonic rules are less 
	appealing from an axiomatic perspective because they do not satisfy any of 
	our 
	considered voting axioms. On a 
	high level, as illustrated in 
	\autoref{th::monot}, this is because the corresponding OWA-vectors consist 
	of multiple 
	different entries. Thereby, some modifications change the coefficients by 
	which the scores are multiplied in some undesirable way. Overall, the 
	contrast 
	between the utilitarian rule and the harmonic 
	rule is quite remarkable because they come both from the same class and 
	work pretty similarly.
	
	Turning to sequential rules, apart from reasonable dissatisfaction,  
	the fixed-order rule outperforms the other two. However, a clear 
	disadvantage of the fixed-order rule is that a returned winning line-up
	might not be score Pareto optimal. 
	As the max-first rule fulfills all axioms---except score consistency---at 
	least 
	weakly, and it is the only voting rule that is reasonably satisfying,
	this rule is also
	appealing if satisfaction of the candidates or positions 
	is an important criterion. The min-first rule, on the other hand, does not 
	weakly satisfy any axiom except 
	non-wastefulness.
	The reason for this is that
	in some elections modifying the election changes the order in 
	which the positions are filled.
	The considerable differences between the max-first and min-first rule are 
	quite surprising, as they first appear to be symmetric.

	\section{Experiments}
	In this section, we analyze the proposed rules experimentally.
	We first describe how we generated 
	our data, and then present and analyze the results.

	\smallskip
	\noindent \textbf{FIFA data.} In the popular video game FIFA 19 
	\cite{fifa}, 18.207 
	soccer players have their own avatar. To mimic the quality of a player, 
	experts 
	assessed 
	them on 29 attributes, 
	such as sprint speed, shot power, agility, and heading \cite{dataset}. From 
	these, the game 
	computes the quality of a player on each possible position, such as left 
	striker or right wing-back etc., in a soccer 
	formation, using a weighted sum of the attribute 
	scores with coefficients depending on the position in 
	question \cite{fifaPos}. We used this data to 
	model 
	a coach of a national team~$\mathcal{X}$ 
	that wants to find an ``optimal'' assignment of players with
	nationality~$\mathcal{X}$ to positions in a formation he came up with. This 
	can be modeled as a line-up election.
	
	In soccer, there exist several possible formations consisting of different 
	positions a team can play in. We 
	fixed one formation, that is,
	a set of ten different positions (see~\autoref{fi:SocLine} in the appendix 
	for a visualization).

	In the generated elections,
	the candidates are some selected number
	of players of a given nationality with the highest summed score,
	the positions are the field positions in a selected soccer formation,
	and the scores are those assigned by FIFA 19 for the player
	playing on a particular position. 
	We considered 
	84 national teams (those with over ten~field players in FIFA 19).
	
	\smallskip
	\noindent\textbf{Synthetic data.} 
	We also generated a synthetic dataset (M2) consisting of $1.000$~line-up 
	elections. Here, every candidate $c$ has a 
	ground qualification $\mu_c \in [0.4,0.7]$ and every position $p$ has a 
	difficulty $\alpha_p\in[1,2]$ both drawn uniformly at random. For each 
	candidate $c$ and for each position $p$, we sample a basic 
	score~$\beta_{c,p}$ 
	from a Gaussian distribution with mean $\mu_c$ and standard deviation 
	$0.05$. 
	The score of a 
	candidate-position pair is then calculated as: 
	$\score_p(c)=\beta_{c,p}^{\alpha_p}$. The intuition 
	behind this is that very talented candidates are presumably not strongly 
	affected by the difficulty of a position, whereas weaker candidates may 
	feel completely overburdened by a difficult position. In addition, 
	we generated another synthetic dataset (M1) that resembles the FIFA data. 
	The description of this model and the results for it are deferred to 
	\autoref{se:M1} and \autoref{fi:missres}.
	
	For both models, we normalized each line-up 
	election by dividing all scores by the maximum score of a 
	candidate-position pair.
	
	\smallskip
	\noindent \textbf{Analysis of experimental results.}
	We focus on the case of ten candidates on ten 
	positions, as this is the most relevant scenario in the 
	FIFA setting. However, we also conducted experiments for twenty 
	candidates on ten 
	positions and twenty candidates on twenty positions, where we observed the 
	same trends as in 
	the case presented here (see \autoref{se:moreResults} for 
	diagrams). For 
	settings with more candidates than positions, 
	however, 
	the 
	differences between the rules are less visible.	In the following, we refer 
	to the (possibly 
	invalid) 
	outcome 
	where every position gets 
	assigned its best candidate as the \emph{utopic}
	outcome.
	To visualize our 
	results, we use violin 
	plots. In a violin plot, the white dot represents the median, the thick bar 
	represents
	the interquartile range, and the thin line represents the range of the data 
	without 
	outliers. Additionally, a distribution interpolating the 
	data is plotted on both sides of the center.	
	
	\smallskip
	\noindent\textit{Comparison of data models.} To compare the datasets, 
	we calculated different metrics designed to measure the amount of 
	``competition'' 
	in instances. For example, we
	calculated the difference between the summed score of the utopic 
	outcome and
	the summed 
	score of a 
	utilitarian outcome (see
	\autoref{se:CompModels} for details). Generally speaking, 
	the M2 model produces instances with more ``competition'' than the FIFA 
	data which
	helps us to make the 
	differences between the
	rules more pronounced.

	\begin{figure}[t] 
		{\centering 
		\includegraphics[width=0.95\textwidth]{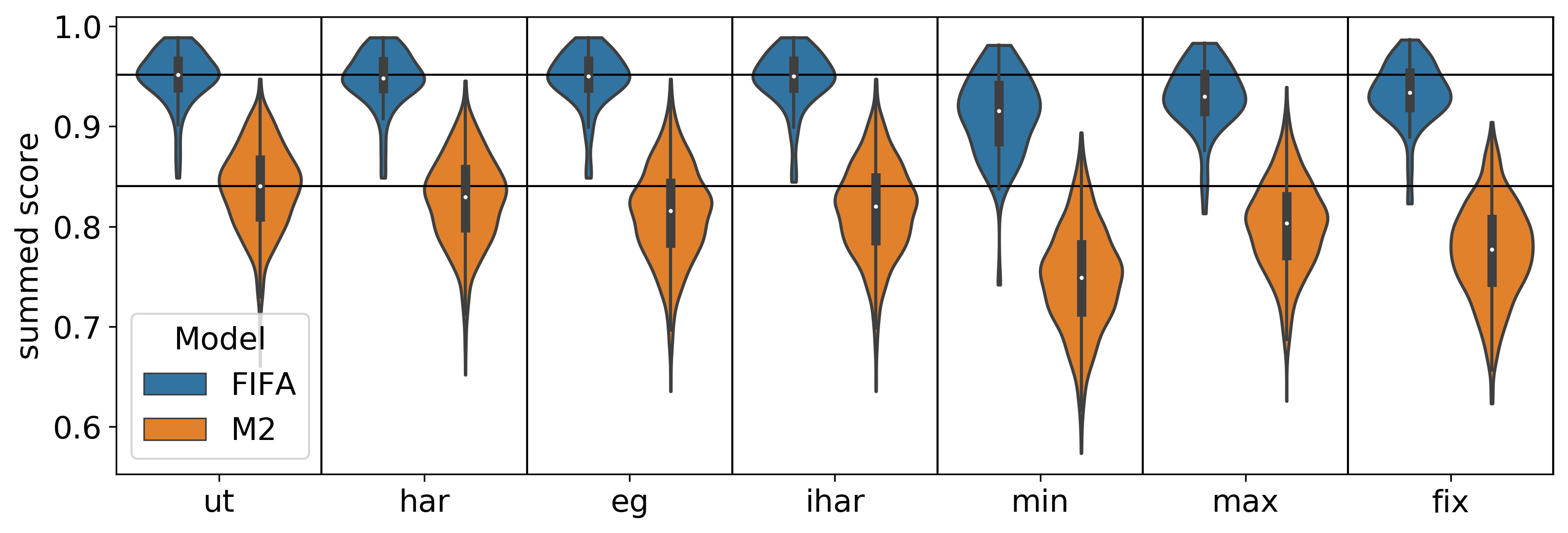}
		
		}
		\caption{Summed score of winning line-ups for different voting rules. 
		The black horizontal lines indicate the 
		median for the utilitarian rule on the two datasets.} 
		\label{fi:SS}
	\end{figure}
	\begin{figure}[t]
			\includegraphics[width=0.95\textwidth]{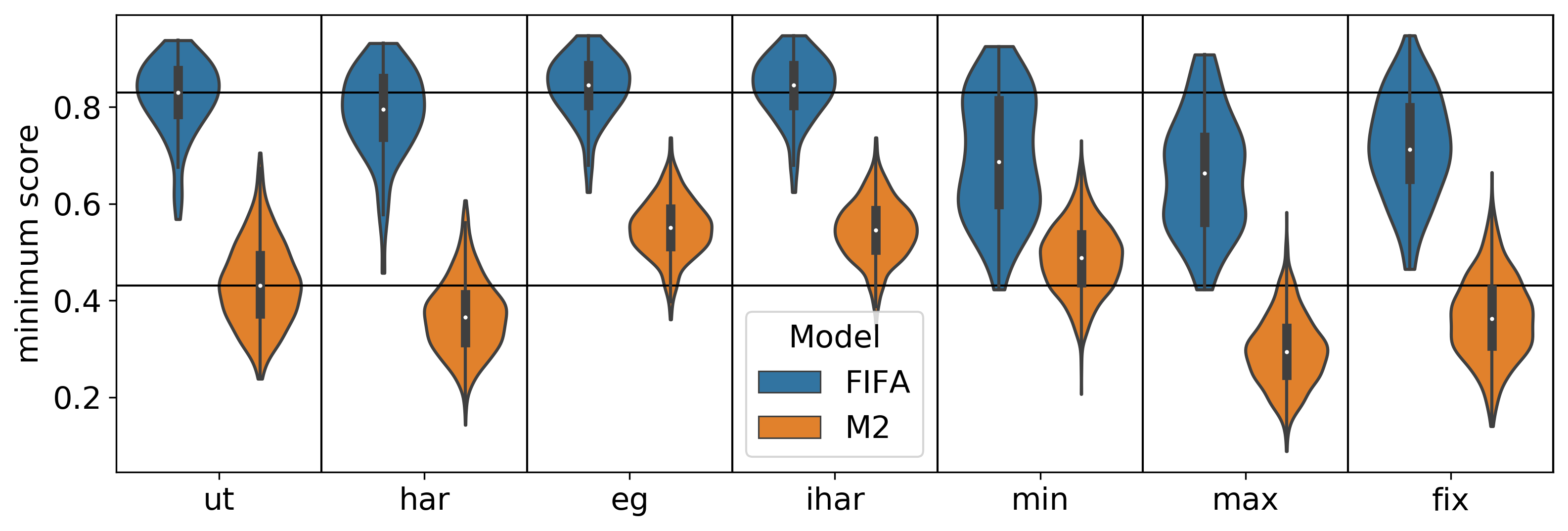}
			\caption{Minimum score in winning line-ups for different voting 
			rules.} 
			\label{fi:min} 
	\end{figure}
	
	\smallskip
	\noindent\textit{Comparison of voting rules.} We compare the different 
	voting 
	rules by examining the following four metrics:
	\begin{enumerate*}[label=(\roman*)]
	 \item the summed score of the computed winning line-up $\pi$ 
	 normalized by the summed score of the utopic
	  outcome,
	 \item the minimum score of a position in
	  the winning line-up,
	 \item the Gini coefficient\footnote{The Gini coefficient is a metric to 
	 measure the dispersion of a probability distribution; it is zero for 
	 uniform distributions and one for distributions with a unit step 
	 cumulative distribution function (see \autoref{se:Gini} for a 
	 formal 
	 definition)}.  
	 of the score vector, and
	 \item the amount of reasonable 
	  dissatisfaction measured as the sum of all reasonable dissatisfactions, 
	  that
	  is, the difference between the score of a position $p$ in $\pi$ and the 
	  score
	  of a better candidate~$c$ on $p$ if the candidate-position pair $(p,c)$ 
	  is 
	  reasonably dissatisfied.
	\end{enumerate*} 
	For the egalitarian rule, if multiple line-ups 
	are winning, then we always select the line-up with the highest summed 
	score. 
	
	Concerning the summed score (see \autoref{fi:SS}), as expected, all 
	four OWA-rules clearly outperform the three sequential rules. The 
	OWA-rules 
	all behave remarkably similar, especially on the FIFA data. The 
	utilitarian rule
	produces by definition line-ups with the highest possible summed score, 
	closely
	followed by the harmonic rule. Turning to
	the sequential rules, the min-first rule produces the worst
	results, while the max-first rule produces the best results.

	\begin{figure}[t]
		{\centering 
		\includegraphics[width=0.95\textwidth]{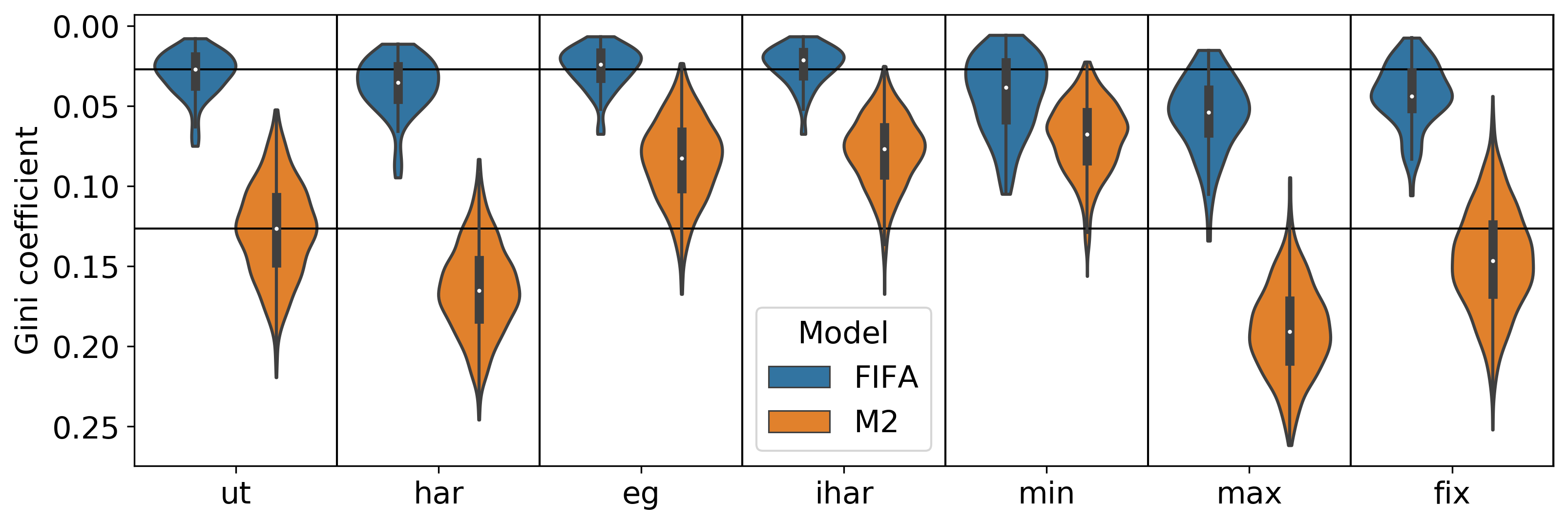}
		
		}
		\caption{Gini coefficient of score vector of line-ups 
		for different voting rules.} \label{fi:gini}
	\end{figure}
	
	Turning to the minimum score (see \autoref{fi:min}),
	the OWA-rules mostly outperform the 
	sequential rules. The two rules with the highest minimum score 
	are 
	the egalitarian rule and the inverse harmonic rule. The utilitarian and 
	harmonic 
	rule produce slightly worse results on the easier FIFA data and 
	significantly worse results on the more demanding M2 data. Among 
	the 
	sequential rules, the min-first rule performs best, 
	sometimes even outperforming the utilitarian rule, while the max-first rule 
	produces the worst results. For the Gini coefficient (see 
	\autoref{fi:gini}), the overall picture is 
	quite similar, i.e., rules that produce line-ups with a higher minimum 
	score 
	also produce line-ups that are more balanced in general.

	\begin{figure}[t] 
		{\centering 
		\includegraphics[width=0.95\textwidth,height=4.3cm]{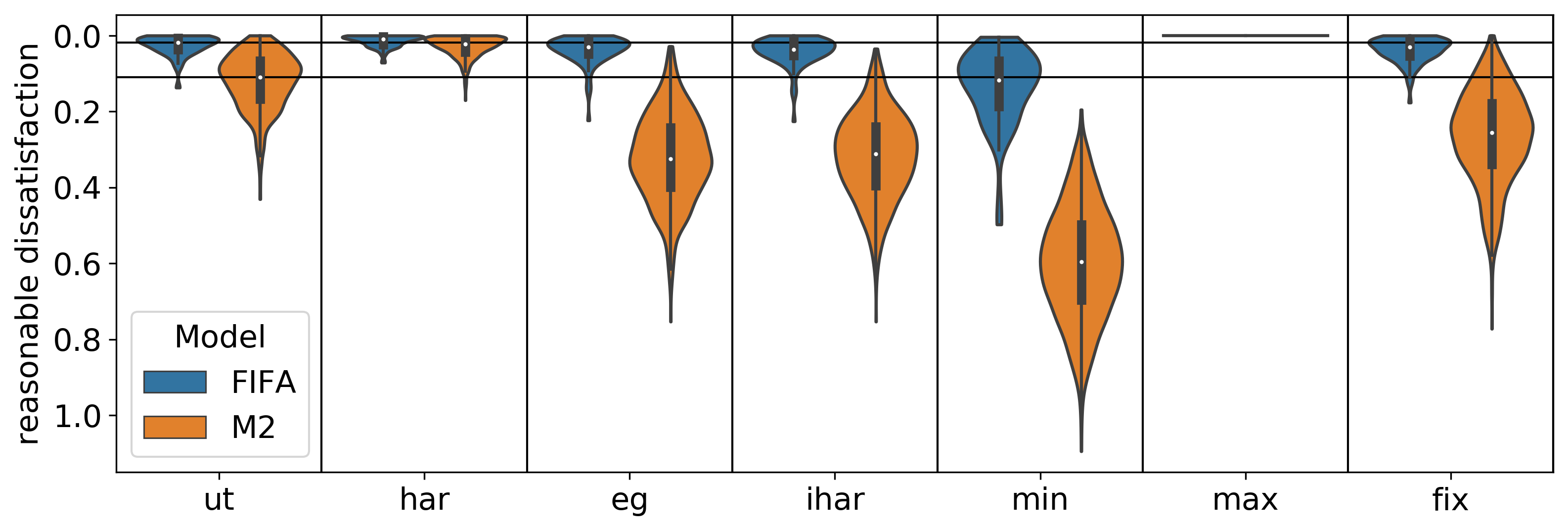}
		
		}
		\caption{
		Reasonable dissatisfaction in winning line-ups for 
		different voting 
		rules.} 
		\label{fi:RDiss}
	\end{figure} 
	Lastly, considering the amount of reasonable dissatisfaction (see 
	\autoref{fi:RDiss}), by 
	definition, the max-first rule does not produce any reasonable 
	dissatisfaction. Among the other rules, the harmonic 
	rule produces the best results followed by the utilitarian rule. 
	The egalitarian, inverse harmonic, and fixed-order
	rule all produce around 
	double the amount 
	of reasonable dissatisfaction, while the min-first rule produces 
	significantly worse results by an additional factor of two.

	\noindent\textbf{Summary.} Somewhat surprisingly, all 
	OWA-rules outperform all 
	sequential rules for all quantities,
	with only two exceptions: The min-first rule produces pretty balanced 
	outcomes and the max-first rule produces no reasonable dissatisfaction.
	However, even if one aims at optimizing mainly one of these two 
	quantities, it is usually recommendable to use an OWA-rule. Selecting 
	the inverse harmonic rule instead of the min-first rule
	results in outcomes which are comparably balanced, have significantly higher
	summed and minimum scores, and have way less reasonable dissatisfaction. 
	Using the harmonic rule instead of the max-first rule will
	introduce only little reasonable dissatisfaction in exchange for more 
	balanced
	line-ups with higher summed and minimum scores. Comparing the different 
	OWA-rules to each other, it is possible to differentiate the utilitarian and
	harmonic rule on the one side, from the egalitarian and inverse harmonic
	rule (which behave particularly similarly) on the other side: Rules from 
	the 
	former class tend to favor more 
	imbalanced line-ups with lower minimum but higher summed score
	and less reasonable dissatisfaction.
	
	
	\section{Discussion}
	Overall, the considered OWA-rules 
	produce better outcomes than the sequential rules. 
	Nevertheless,  
	sequential rules might sometimes be at an advantage, since sequential rules 
	are, generally speaking, more transparent, 
	more intuitive, and easier to explain.
	If one requires a sequential rule, 
	either the 
	fixed-order rule or the
	max-first rule should be chosen, as the min-first rule 
	violates nearly all studied axioms and produces undesirable outcomes. 
	Focusing on OWA-rules, the 
	harmonic and inverse harmonic rules are rather to be avoided, as they fail 
	to fulfill all considered voting and fairness axioms. Comparing the 
	utilitarian and egalitarian rule, from an axiomatic perspective, the 
	utilitarian rule is at an advantage, because it satisfies the most axioms 
	among 
	all considered 
	rules. However, choosing between these two rules in practice should depend 
	on the application, as the line-ups produced by these rules maximize 
	different metrics. A rule that could somehow incorporate
	egalitarian \emph{and}  
	utilitarian considerations is the product rule, which selects 
	outcomes with the highest product
	of scores.

	For future work, it would be interesting to look at line-up elections 
	that take as input the preferences of voters instead of aggregated scores. 
	Analogously to multi-winner voting, new rules for this setting could, for 
	example, focus on selecting proportional and diverse, instead of 
	individually-excellent, line-ups. Such a path would also 
	require developing appropriate
	axioms. Another possible line of future research could be to run 
	experiments using preference data to examine the influence of the selected 
	single-winner voting rule to aggregate the preferences into 
	scores on the selected line-up. There are also several algorithmic problems 
	that arise from our work. For instance, the computational complexity of 
	computing a winning outcome under the (inverse) harmonic rule and even more 
	generally of computing winning outcomes for arbitrary non-increasing 
	OWA-rules is open.
	
	
	\bibliographystyle{splncs04.bst}
	\bibliography{bib}

\begin{thebibliography}{10}
\providecommand{\url}[1]{\texttt{#1}}
\providecommand{\urlprefix}{URL }
\providecommand{\doi}[1]{https://doi.org/#1}

\bibitem{DBLP:journals/siamcomp/AnshelevichDKTWR08}
Anshelevich, E., Dasgupta, A., Kleinberg, J.M., Tardos, {\'{E}}., Wexler, T.,
  Roughgarden, T.: The price of stability for network design with fair cost
  allocation. {SIAM} J. Comput.  \textbf{38}(4),  1602--1623 (2008)

\bibitem{arrow2010handbook}
Arrow, K.J., Sen, A., Suzumura, K. (eds.): Handbook of social choice and
  welfare, vol.~2. Elsevier (2010)

\bibitem{DBLP:journals/networks/Avis83}
Avis, D.: A survey of heuristics for the weighted matching problem. Networks
  \textbf{13}(4),  475--493 (1983)

\bibitem{DBLP:journals/scw/AzizBCEFW17}
Aziz, H., Brill, M., Conitzer, V., Elkind, E., Freeman, R., Walsh, T.:
  Justified representation in approval-based committee voting. Soc. Choice
  Welf.  \textbf{48}(2),  461--485 (2017)

\bibitem{DBLP:conf/atal/AzizGGMMW15}
Aziz, H., Gaspers, S., Gudmundsson, J., Mackenzie, S., Mattei, N., Walsh, T.:
  Computational aspects of multi-winner approval voting. In: {AAMAS} '15. pp.
  107--115 (2015)

\bibitem{DBLP:conf/aies/0001L18}
Aziz, H., Lee, B.E.: Sub-committee approval voting and generalized justified
  representation axioms. In: {AIES} '18. pp.~3--9 (2018)

\bibitem{DBLP:journals/scw/BarberaC08}
Barber{\`{a}}, S., Coelho, D.: How to choose a non-controversial list with $k$
  names. Soc. Choice Welf.  \textbf{31}(1),  79--96 (2008)

\bibitem{DBLP:reference/scw/BramsF2002}
Brams, S.J., Fishburn, P.C.: Voting procedures. In: Handbook of {S}ocial
  {C}hoice and {W}elfare, chap.~4, pp. 173--236 (2002)

\bibitem{conference-BFKKN20}
Bredereck, R., Faliszewski, P., Kaczmarczyk, A., Knop, D., Niedermeier, R.:
  Parameterized algorithms for finding a collective set of items. In: {AAAI}
  '20. pp. 1838--1845 (2020)

\bibitem{fifa}
{EA Sports}: {FIFA} 19. [CD-ROM] (2018)

\bibitem{DBLP:journals/scw/ElkindFSS17}
Elkind, E., Faliszewski, P., Skowron, P., Slinko, A.: Properties of multiwinner
  voting rules. Soc. Choice Welf.  \textbf{48}(3),  599--632 (2017)

\bibitem{DBLP:conf/aldt/ElkindI15}
Elkind, E., Ismaili, A.: {OWA}-based extensions of the {C}hamberlin-{C}ourant
  rule. In: {ADT} '15. pp. 486--502 (2015)

\bibitem{DBLP:reference/trends/FaliszewskiS17}
Faliszewski, P., Skowron, P., Slinko, A., Talmon, N.: Multiwinner voting: A new
  challenge for social choice theory. In: Trends in Computational Social
  Choice, pp. 27--47 (2017)

\bibitem{DBLP:journals/dam/Fishburn82}
Fishburn, P.C.: Monotonicity paradoxes in the theory of elections. Discrete
  Appl. Math.  \textbf{4}(2),  119--134 (1982)

\bibitem{dataset}
Gadiya, K.: {FIFA} 19 complete player dataset (2019),
  \url{https://www.kaggle.com/karangadiya/fifa19}

\bibitem{DBLP:journals/tamm/GaleS13}
Gale, D., Shapley, L.S.: College admissions and the stability of marriage. Am.
  Math. Mon.  \textbf{69}(1),  9--15 (1962)

\bibitem{DBLP:journals/ior/Garfinkel71}
Garfinkel, R.S.: Technical note - {A}n improved algorithm for the bottleneck
  assignment problem. Oper. Res.  \textbf{19}(7),  1747--1751 (1971)

\bibitem{DBLP:journals/algorithmica/GargKKMM10}
Garg, N., Kavitha, T., Kumar, A., Mehlhorn, K., Mestre, J.: Assigning papers to
  referees. Algorithmica  \textbf{58}(1),  119--136 (2010)

\bibitem{DBLP:conf/atal/GoldenP10}
Golden, B., Perny, P.: Infinite order {L}orenz dominance for fair multiagent
  optimization. In: {AAMAS} '10. pp. 383--390 (2010)

\bibitem{DBLP:conf/aaai/Goldsmith07}
Goldsmith, J., Sloan, R.: The {AI} conference paper assignment problem. In:
  MPREF '07 (2007)

\bibitem{DBLP:books/daglib/p/Kuhn10}
Kuhn, H.W.: The hungarian method for the assignment problem. In: 50 Years of
  Integer Programming, pp. 29--47 (2010)

\bibitem{DBLP:journals/mss/LangX09}
Lang, J., Xia, L.: Sequential composition of voting rules in multi-issue
  domains. Math. Soc. Sci.  \textbf{57}(3),  304--324 (2009)

\bibitem{DBLP:journals/algorithmica/LescaMP19}
Lesca, J., Minoux, M., Perny, P.: The fair {OWA} one-to-one assignment problem:
  {NP}-hardness and polynomial time special cases. Algorithmica
  \textbf{81}(1),  98--123 (2019)

\bibitem{DBLP:conf/aaai/LianMNW18}
Lian, J.W., Mattei, N., Noble, R., Walsh, T.: The conference paper assignment
  problem: Using order weighted averages to assign indivisible goods. In: AAAI
  '18. pp. 1138--1145 (2018)

\bibitem{fifaPos}
Murphy, R.: {FIFA} player ratings explained.
  \url{https://www.goal.com/en-ae/news/fifa-player-ratings-explained-how-are-the-card-number-stats/1hszd2fgr7wgf1n2b2yjdpgynu}
  (2018), accessed: 2020-07-08

\bibitem{DBLP:journals/ai/SkowronFL16}
Skowron, P., Faliszewski, P., Lang, J.: Finding a collective set of items: From
  proportional multirepresentation to group recommendation. Artif Intell
  \textbf{241},  191--216 (2016)

\bibitem{DBLP:journals/tsmc/Yager88}
Yager, R.R.: On ordered weighted averaging aggregation operators in
  multicriteria decisionmaking. {IEEE} Trans. Syst. Man Cybern. Syst.
  \textbf{18}(1),  183--190 (1988)

\bibitem{YOUNG197443}
Young, H.: An axiomatization of {B}orda's rule. J. Econ. Theory  \textbf{9}(1),
   43 -- 52 (1974)

\bibitem{DBLP:reference/choice/Zwicker16}
Zwicker, W.S.: Introduction to the theory of voting. In: Handbook of
  Computational Social Choice, pp. 23--56 (2016)

\end{thebibliography}
	\newpage
	
	\appendix 
	\section{Axiomatic Analysis}
	\subsection{Reasonable Dissatisfaction}\label{se:rdiss}
	Recall that it is possible to model the following situation as a line-up 
	election: A company wants to 
	fill different 
	positions in different teams of the company and publishes an open call for 
	applications to which several candidates respond. The task is then to 
	assign the candidates to the positions in the company. In such a setting, 
	each group 
	offering a job wants to get the best 
	candidate for the job and is unsatisfied if this is not 
	the 
	case. Similarly, each candidate wants to be assigned to the position she is 
	most suitable for. The reasoning behind this is that presumably every 
	candidate prefers to do tasks for which she is qualified and wants to
	make most out of herself.  Unfortunately, line-ups which 
	assign all 
	candidates 
	to their best position and each position its best candidate may not
	exist. 
	That is why it is necessary to differentiate between different types of
	dissatisfaction. For example, the dissatisfaction of a candidate who is way
	less suitable for all positions than all other candidates is hard to
	circumvent. In contrast, imagine there exists a
	candidate $c$ who is unsatisfied, as she is either not assigned or assigned 
	to
	a position for which she is less qualified than for another position $p$ for
	which, even more, she is more suitable than the current candidate 
	filling~$p$.
	Then, the dissatisfaction of~$c$ and the dissatisfaction of $p$ are quite
	reasonable and voters might agree that assigning~$c$ to her current position
	treated her and position $p$ unfairly. This setup shall motivate and 
	illustrate our definition of
	reasonable dissatisfaction presented in the main body of the paper. 
	Interestingly, every winning line-up under the
	max-first rule is reasonable satisfying, which proves the following 
	proposition.
	 
	\begin{proposition} \label{th::rSat}
		In every line-up election, an outcome without reasonable 
		dissatisfaction is guaranteed to exist.
	\end{proposition}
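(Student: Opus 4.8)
The plan is to establish the proposition via the max-first rule: I will show that every winning line-up under $f^{\seq}_{\ma}$ is reasonably satisfying. Since $f^{\seq}_{\ma}$ always returns at least one line-up, this immediately gives the claimed existence. So fix a line-up $\pi$ produced by some run of $f^{\seq}_{\ma}$, and recall that such a run adds, step by step, the remaining candidate-position pair of maximum score (as noted when the rule was defined).

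First I would dispatch the ``no unassigned improving candidate'' clause of reasonable satisfaction. Let $p$ be any position and let $t$ be the step at which $p$ is filled, that is, at which the pair $(\pi_p,p)$ is added. Any candidate $c\notin\pi$ is available throughout the whole run, in particular at step $t$, so $(c,p)$ is among the pairs still available at step $t$; maximality of the chosen pair then yields $\score_p(\pi_p)\ge\score_p(c)$. Hence there is no $c\notin\pi$ and position $p$ with $\score_p(c)>\score_p(\pi_p)$.

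Next, for the ``no blocking pair'' clause, I would argue by contradiction: suppose there are positions $p\ne p'$ with $\score_p(\pi_{p'})>\score_p(\pi_p)$ and $\score_p(\pi_{p'})>\score_{p'}(\pi_{p'})$. Writing $c:=\pi_{p'}$ (note $c\ne\pi_p$ since a candidate is assigned at most once), this says $\score_p(c)>\score_p(\pi_p)$ and $\score_p(c)>\score_{p'}(c)$. Split on which of the two pairs $(\pi_p,p)$ and $(c,p')$ the run adds first. If $(\pi_p,p)$ is added first, then at that step $c$ is still unassigned (it is assigned to $p'$ only later) and $p$ is still unfilled, so $(c,p)$ is available; maximality gives $\score_p(\pi_p)\ge\score_p(c)$, contradicting $\score_p(c)>\score_p(\pi_p)$. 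If instead $(c,p')$ is added first, then at that step $c$ is the candidate being assigned and $p$ is still unfilled, so $(c,p)$ is an available pair with $\score_p(c)>\score_{p'}(c)$, contradicting the maximality of the pair $(c,p')$ chosen at that step. Either way we reach a contradiction, so no such $p,p'$ exist, and $\pi$ is reasonably satisfying.

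The only thing that requires care is the bookkeeping of which candidates and positions are still available at a given step of the greedy procedure; once that is made precise, both clauses follow directly from the defining maximality property of $f^{\seq}_{\ma}$, so I do not anticipate a genuine obstacle here.
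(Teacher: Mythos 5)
Your proposal is correct and follows essentially the same route as the paper: the paper also shows that every winning line-up of the max-first rule is reasonably satisfying, splitting on which of the two positions is filled first and using maximality of the greedily chosen pair, and dispatching the unassigned-candidate clause by noting such a candidate would have been picked. Your version merely spells out the availability bookkeeping a bit more explicitly.
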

	\begin{proof} 
		Let $E$ be a line-up election. We claim that all winning line-ups  
		$\pi$ of $E$ under the max-first rule are 
		reasonably 
		satisfying, which proves the proposition. Assume that there exist two 
		positions $p$ and $p'$ 
		fulfilling the first 
		condition 
		for reasonable dissatisfaction. Then, if 
		$p$ has been filled by the rule before $p'$, then it needs to 
		hold that $\score_p(\pi_p)\geq \score_p(\pi_{p'})$, thereby 
		contradicting 
		the 
		condition. If $p'$ were filled by the algorithm before~$p$, then it 
		would 
		need to
		hold that $\score_{p'}(\pi_{p'})\geq \score_p(\pi_{p'})$, thereby 
		contradicting
		the condition. No candidate $c\notin \pi$ and position~$p$ can fulfill 
		the
		second condition of reasonable satisfaction, as otherwise $c$ would 
		have been
		assigned to $p$. 
	\end{proof}
	
	\subsection{Intuitive Explanations of Axioms} 
	\label{se:intuition}
	In the following, we try to give a more intuitive understanding 
	of the considered axioms by interpreting them in the language of our 
	introductory soccer example. Here, the candidates 
	are 
	the players of the team and positions are positions in a soccer formation. 
	Scores of candidate-positions pairs are derived from approvals of the 
	coaching staff. 
	For all 
	axioms, we present a plausible drawback that might occur if this axiom is 
	violated.
	\begin{description}
		\item[Non-wastefulness] After agreeing on a 
		line-up, the coaches realize that there exists a position and an 
		unassigned
		player such that the coaches agree that he is more suitable to fill this
		position than the currently assigned player.
		\item[Score Pareto optimality] After agreeing on a line-up, 
		the coaches realize that there exists a different line-up where the
		coaches agree that, taking all positions into account, this 
		line-up is better than the one which they decided for.
		\item[Score consistency] The team of 
		defensive coaches 
		meets and agrees on a line-up. At the same time, the team of 
		offensive 
		coaches meets and agrees independently on the 
		same 
		line-up. Afterwards, all coaches meet together and agree on a 
		line-up that is different from the one that each set of coaches 
		came up with independently. 
		\item[Position consistency] The defensive 
		coaches agree on a line-up of defenders and midfielders. The 
		offensive 
		coaches agree on a line-up of strikers and midfielders. Both 
		line-ups 
		coincide on the midfielders and no player who is placed as a 
		defender 
		in the first line-up is placed as a striker in the second line-up. 
		However, in a joint meeting, the coaches decide on a full line-up 
		that is different from the line-up of the defensive coaches on the 
		defenders or midfielders 
		(despite the fact that it would have been possible to
		combine the two initial line-ups).
		\item[Monotonicity] The coaches agree on a 
		line-up for a game. In this 
		game, the center-forward plays very well, so afterwards more 
		coaches believe that he is suitable to fill this position. All 
		other opinions remain unchanged. In the next game, some other line-up 
		is chosen.
		\item[Line-up enlargement monotonicity] In a training session on a 
		smaller field, the coaches select their best line-up consisting out 
		of 
		five players. The day later, for the next normal game with eleven 
		players, one of these 
		players is benched.
		\item[Reasonable satisfaction] The team has a star player 
		where 
		everyone agrees that he is a perfect center-forward and better than 
		everyone else on this position. In the 
		current line-up, he is placed in the midfield and very unsatisfied 
		and demotivated by this.
	\end{description}
	
	\subsection{Missing Proofs} \label{se:proofs}
	In the following, we provide for all considered rules and axioms proofs 
	whether the rule satisfies the axiom strongly, weakly, or not at all. 
	We split these proofs into two parts: We start by examining the OWA-rules, 
	before we turn to sequential rules. Moreover, within 
	each section, because often similar ideas are required, we group the 
	results by the axioms. For each axiom, the proofs for the 
	different rules from the considered class can be found next to each other. 
	For an overview of the results, we refer to \autoref{ta:sum} from the 
	main body of the paper. We present the different axioms in the same order 
	as they appear in~\autoref{ta:sum}.
	\subsubsection{OWA-rules}
	We only 
	consider non-negative normalized OWA-vectors $\Lambda$, 
	that is,
	vectors in which all entries are non-negative and the largest entry is one. 
	It is easily possible to normalize arbitrary OWA-vectors by dividing all 
	entries by the maximum entry in the vector. 
	To explain how we will reason about OWA-rules in the following, let us look 
	at the following line-up election:
	\begin{center}
		\begin{tabular}{c|c|c|c} 
			
			\diagbox{$C$}{$P$} & $p_1$ & $p_2$ & $p_3$\\
			\hline
			$a$ & $1$ & $0$ & $8$ \\
			$b$ & $2$ & $4$ & $5$ \\
			$c$ & $0$ & $6$ & $7$ \\
		\end{tabular}
	\end{center}
	Consider the line-up $(a,b,c)$. It consists of one 
	candidate-position pair with score~$1$, namely~$a$ who is assigned to 
	$p_1$, one candidate-position pair with score $4$, namely $b$ who is 
	assigned to $p_2$, and one candidate-position pair of score $7$ namely, $c$ 
	who is assigned to~$p_3$. Now, imagine that we want to compute the score 
	assigned to this outcome by the harmonic rule, that is, 
	$f^{\Lambda^{\har}}$ with $\Lambda^{\har}=(1,\frac{1}{2},\frac{1}{3})$. The 
	score is the sum of the highest score of a candidate-position pair plus 
	$\frac{1}{2}$ times the second highest score of a candidate-position pair 
	plus 
	$\frac{1}{3}$ times the lowest score of a candidate-position pair. So for 
	the line-up~$(a,b,c)$ it is $7+\frac{1}{2}*4+\frac{1}{3}*1=11 \frac{1}{3}$. 
	We denote the score of $(a,b,c)$ under the harmonic rule 
	as~$\Lambda^{\har}(a,b,c)=\frac{1}{3}*1+\frac{1}{2}*4+7$. Note that we sort 
	the terms of the sum by the ordering of the candidates in the line-up. For 
	instance, the score of the line-up $(b,a,c)$ under 
	the harmonic rule is 
	$\Lambda^{\har}(b,a,c)=\frac{1}{2}*2+\frac{1}{3}*0+1*7.$ We start by 
	examining non-wastefulness.

	\begin{theorem}
		All OWA-rules $f^\Lambda$ are weakly non-wasteful and weakly score 
		Pareto 
		optimal. For an OWA vector $\Lambda$ with only strictly positive 
		entries,  
		$f^\Lambda$~is strongly non-wasteful and score Pareto optimal.
	\end{theorem}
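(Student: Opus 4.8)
The plan is to prove the contrapositive for the weak statements and a direct exchange argument for the strong statements. First I would handle \emph{weak non-wastefulness}: take any line-up $\pi$ and suppose it is wasteful, so there is a candidate $c \notin \pi$ and a position $p$ with $\score_p(c) > \score_p(\pi_p)$. Form $\pi'$ from $\pi$ by setting $\pi'_p := c$ (and leaving the former occupant $\pi_p$ unassigned); this is still a valid line-up since $c$ was previously unassigned. The score vector of $\pi'$ agrees with that of $\pi$ on every position except $p$, where it strictly increases. Since the OWA operator $\Lambda(\cdot)$ is non-decreasing in each coordinate (all $\lambda_i \ge 0$), we get $\Lambda(\pi') \ge \Lambda(\pi)$, so some maximizer of $\Lambda$ is not wasteful; hence at least one winning line-up is non-wasteful. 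The same construction proves \emph{weak score Pareto optimality}: if $\pi$ is Pareto dominated by $\pi'$, then coordinatewise monotonicity of $\Lambda$ again gives $\Lambda(\pi') \ge \Lambda(\pi)$, so a dominating line-up has score at least as large, and by finiteness of $C^q$ one may pass to an undominated line-up of maximal score, which must then be a winner.

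For the \emph{strong} statements, assume now $\Lambda = (\lambda_1,\dots,\lambda_q)$ has all entries strictly positive, and let $\pi$ be an arbitrary winning line-up. Suppose for contradiction $\pi$ is wasteful via $c \notin \pi$ and $p$ with $\score_p(c) > \score_p(\pi_p)$. Build $\pi'$ as above by placing $c$ on $p$. The score vectors of $\pi'$ and $\pi$ differ only in the entry for $p$, which strictly increases. The key point is that a strict increase in one coordinate of the argument, with all other coordinates fixed, forces a strict increase of $\Lambda$ whenever all weights are positive: sorting the two score vectors in non-increasing order, the multiset for $\pi'$ dominates that for $\pi$ entrywise with at least one strict inequality, and since $\lambda_i > 0$ for every $i$ the weighted sum $\sum_i \lambda_i a[i]$ strictly increases. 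Hence $\Lambda(\pi') > \Lambda(\pi)$, contradicting that $\pi$ is a winner. So no winning line-up is wasteful.

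The analogous argument for strong score Pareto optimality: if a winning $\pi$ were Pareto dominated by some $\pi'$, then the sorted score vector of $\pi'$ entrywise dominates that of $\pi$ with at least one strict inequality, so positivity of all $\lambda_i$ again yields $\Lambda(\pi') > \Lambda(\pi)$, contradicting optimality of $\pi$. The one place that needs a little care — and the step I expect to be the main technical obstacle — is the monotonicity lemma for the \emph{ordered} weighted average: changing one coordinate of $a$ can permute the sorted order $a[1] \ge \dots \ge a[q]$, so one cannot argue coordinatewise on the sorted vectors naively. The clean way is to observe that increasing one component of $a$ (all others fixed) yields a vector $a'$ whose $i$-th largest entry is $\ge$ the $i$-th largest entry of $a$ for every $i$ (a standard fact: the sorted vector is a monotone function of the multiset under coordinatewise increases), with strict inequality for at least one $i$; then positivity of the weights finishes it. I would state this as a short auxiliary observation and then apply it twice, once for each axiom.
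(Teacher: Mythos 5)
Your proposal is correct and follows essentially the same route as the paper: the weak versions via the observation that any line-up dominating (or improving) a winner has OWA score at least as large and is hence itself a winner, and the strong versions via the strict increase of the OWA when all weights are positive. The only difference is cosmetic: you make explicit the auxiliary monotonicity fact that a componentwise increase of the score vector weakly increases every entry of its sorted rearrangement (strictly for at least one), which the paper uses implicitly.
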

	\begin{proof}
		First of all, note that weak score Pareto optimality implies weak 
		non-wastefulness. To prove that all OWA-rules $f^\Lambda$ are weakly 
		score Pareto optimal, for the sake of contradiction,  let us assume 
		that $f^\Lambda$ has returned only score 
		Pareto dominated line-ups. Let $\pi$ be one of these line-ups,
		which is score Pareto dominated by some line-up $\pi'$. 
		However, this implies that $\Lambda(\pi')\geq \Lambda(\pi)$,
		meaning that $\pi'$ is also selected as a winning line-up.

		Note that for all $\Lambda$ with strictly positive entries, it even 
		holds that $\Lambda(\pi')> 
		\Lambda(\pi)$ 
		and thereby that only score Pareto optimal line-ups are selected.
		However, for OWA-vectors~$\Lambda$ with a zero entry at position $i$, 
		$f^\Lambda$ is not strongly score Pareto optimal. To see this, consider 
		an election where all scores are different natural numbers. Let $\pi$ 
		be a line-up in which some position $p\in P$ has the $i$-th highest 
		score. However, then the outcome $\pi'$ in which 
		each position except $p$ gets assigned the same candidate and $p$ gets 
		assigned 
		a candidate $c$ with $\score_{p}(c)=\score_{p}(\pi)-1$ is still a 
		winning line-up but obviously score Pareto dominated by $\pi'$.
		
		The same arguments can also be used to prove the statement for 
		non-wastefulness.
	\end{proof}
	
	\begin{theorem}
		For all OWA-vectors $\Lambda\neq (1,0,\dots,0)$, $f^{\Lambda}$ is not 
		weakly 
		reasonable satisfying. For $\Lambda=(1,0,\dots,0)$, 
		$f^{\Lambda}$ is weakly but not strongly reasonably satisfying.
	\end{theorem}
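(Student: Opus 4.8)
The plan is to treat the two cases of the theorem with two separate constructions, both built around the two-position gadget from the main body. For $\Lambda\neq(1,0,\dots,0)$: since $\Lambda$ is normalized (largest entry $1$) and not $(1,0,\dots,0)$, there is an index $j\in\{2,\dots,q\}$ with $\lambda_j>0$; fix such a $j$ and assume $q\ge 2$ (the case $q=1$ is vacuous). I build an election $E$ on $q$ positions and $q$ candidates as follows: for $i\in\{1,\dots,j-2\}$ a ``high'' position $p_i$ with a dedicated dummy $h_i$ whose only positive score is $\score_{p_i}(h_i)=3$; for $i\in\{1,\dots,q-j\}$ a ``low'' position $p_{j+i}$ with a dedicated dummy $\ell_i$ whose only positive score is $\score_{p_{j+i}}(\ell_i)=1$; and on the remaining positions $p_{j-1},p_j$ the gadget with candidates $a,b$ given by $\score_{p_{j-1}}(a)=2$, $\score_{p_{j-1}}(b)=\score_{p_j}(a)=2-\epsilon$, $\score_{p_j}(b)=0$ (and $a,b$ scoring $0$ on every other position), for a small rational $\epsilon\in(0,1)$ fixed later. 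Let $\pi_1$ send $h_i\mapsto p_i$, $\ell_i\mapsto p_{j+i}$, $a\mapsto p_{j-1}$, $b\mapsto p_j$, and let $\pi_2$ be $\pi_1$ with $a$ and $b$ swapped on the two gadget positions.

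The two claims to establish are: \emph{(i)} $\pi_1$ is the unique reasonably satisfying line-up of $E$, and \emph{(ii)} $\Lambda(\pi_2)>\Lambda(\pi_1)$ for $\epsilon$ small enough. Granting these, $\pi_1$ is not a winner (being beaten by $\pi_2$), so $f^\Lambda(E)$ contains no reasonably satisfying line-up and $f^\Lambda$ fails weak reasonable satisfaction. For \emph{(i)}: because $h_i$ is the only candidate with positive score on $p_i$ and $p_i$ is the only position where $h_i$ scores positively, any line-up not assigning $h_i$ to $p_i$ has the blocking pair $(p_i,p')$ with $p'$ the position of $h_i$ (there $\score_{p_i}(\cdot)$ jumps from $0$ to $3$), and symmetrically for each $\ell_i$; hence a reasonably satisfying line-up agrees with $\pi_1$ outside the gadget, while $\pi_2$ itself has the blocking pair $(p_{j-1},p_j)$ since $\score_{p_{j-1}}(a)=2$ strictly exceeds both $\score_{p_{j-1}}(b)=2-\epsilon$ and $\score_{p_j}(a)=2-\epsilon$; a direct check shows $\pi_1$ has no blocking pair and no waste (every candidate is assigned). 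For \emph{(ii)}: using $1<2-\epsilon<3$, the sorted score vectors are $(3^{j-2},2,1^{q-j},0)$ for $\pi_1$ and $(3^{j-2},(2-\epsilon)^2,1^{q-j})$ for $\pi_2$, and expanding $\Lambda$ against these gives $\Lambda(\pi_2)-\Lambda(\pi_1)=\lambda_j+\lambda_q-\epsilon(\lambda_{j-1}+\lambda_j)$, which is positive once $\epsilon<\min\{1,(\lambda_j+\lambda_q)/(\lambda_{j-1}+\lambda_j)\}$, the denominator being $\ge\lambda_j>0$.

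For $\Lambda=(1,0,\dots,0)$, observe that $\Lambda(\pi)$ is just the largest entry of the score vector of $\pi$, so the winners are exactly the line-ups containing a pair of globally maximal score $M:=\max_{c,p}\score_p(c)$. For the weak direction, run the max-first rule $f^{\seq}_{\ma}$: its first step adds a pair of score $M$, so its output is $f^\Lambda$-winning, and by the argument behind \autoref{th::rSat} every max-first output is reasonably satisfying, so some winner is always reasonably satisfying. For the failure of strong satisfaction one small election suffices: take $P=\{p_1,p_2\}$ and $C=\{a,b,c\}$ with $\score_{p_1}(a)=\score_{p_2}(a)=2$, $\score_{p_1}(b)=\score_{p_2}(c)=1$, and all other scores $0$; then the line-up placing $a$ on $p_1$ and $b$ on $p_2$ has largest score $2=M$, hence is winning, yet the unassigned candidate $c$ has $\score_{p_2}(c)=1>0=\score_{p_2}(b)$, so this winner is wasteful and not reasonably satisfying.

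The delicate points are claim \emph{(i)} of the first case --- checking that in \emph{every} reasonably satisfying line-up the high and low dummies are pinned to their home positions, not just in one candidate outcome --- and the bookkeeping in \emph{(ii)}, in particular the degenerate sub-cases $j=2$ (no high positions) and $j=q$ (no low positions), where the empty blocks of $3$'s or $1$'s and the placement of the single $0$ must be tracked carefully so that the identity $\Lambda(\pi_2)-\Lambda(\pi_1)=\lambda_j+\lambda_q-\epsilon(\lambda_{j-1}+\lambda_j)$ still holds.
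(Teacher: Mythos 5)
Your proof is correct and follows essentially the same route as the paper's: a two-position gadget in which the unique reasonably satisfying line-up is strictly beaten by its swap whenever some $\lambda_j>0$ for $j\ge 2$, padded with dedicated dummy candidates to length $q$, together with the max-first argument (via \autoref{th::rSat}) for the weak direction at $\Lambda=(1,0,\dots,0)$. The only difference is presentational: you carry out the length-$q$ padding and the exact score comparison $\Lambda(\pi_2)-\Lambda(\pi_1)=\lambda_j+\lambda_q-\epsilon(\lambda_{j-1}+\lambda_j)$ explicitly and uniformly (covering $(0,\dots,0,1)$ in the same stroke), where the paper proves the $2\times 2$ case and only sketches the padding, and your $q=2$ example for the failure of strong satisfaction at $(1,0,\dots,0)$ generalizes to larger $q$ just as trivially as the paper's corresponding remark.
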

	\begin{proof}
		Here and in some of the following proofs, we prove the two 
		negative parts only for OWA-vectors of size two. However, it is easily 
		possible to generalize the provided counterexamples to arbitrary OWAs 
		of size $k>2$ by inserting $k-2$
		positions and, for each position, a designated candidate. For each 
		inserted position, all candidates except the designated candidate have 
		a high negative score, while the designated candidate has a 
		score which is either above every score used in the example or slightly 
		below 
		every score used in the example. By selecting how many of the $k-2$ 
		designated candidates have a score above or below the scores from the 
		example, it is possible to 
		select the window in the OWA-vector the example accounts for. 
		
		Let us consider the following two-candidates two-positions line-up 
		election.
		Let $\Lambda$ be an arbitrary length-two OWA-vector without zero 
		entries and
		let $0<x<1$ be the smaller of the two entries of $\Lambda$ (i.e.,\,we 
		have
		either~$\Lambda = (1,x)$ or~$\Lambda = (x,1)$; we will treat the two 
		cases
		$(1,0)$ and $(0,1)$ separately). Let us consider the following line-up
		election:
		\begin{center}
			\begin{tabular}{c|c|c|} 
				
				\diagbox{$C$}{$P$}  & $p_1$ & $p_2$ \\
				\hline
				$a$ & $1$ & $1+\frac{y}{2} $ \\
				$b$ & $0$ & $1$
			\end{tabular}
		\end{center}
		with~$y = x$. In this election, outcome $(a,b)$ has score $1+x$ under 
		OWA-vector 
		$\Lambda$, while outcome~$(b,a)$ depending 
		on whether~$\Lambda$ is decreasing or increasing has score either 
		~$1+\frac{x}{2}$ or~$x+\frac{x\cdot x}{2}$. Consequently,~$(a,b)$ is 
		selected 
		as the unique winner by all $f^\Lambda$ with $\Lambda\neq (1,0)$ and 
		$\Lambda\neq (0,1)$. 
		However, in the outcome $(a,b)$, the candidate-position pair $(a,p_2)$ 
		is reasonably dissatisfied, as $a$'s score for position~$p_2$ is higher 
		than $b$'s score for position~$p_2$ and higher than $a$'s score for 
		position~$p_1$. 
		
		For $\Lambda=(0,1)$, we can use the election from above with arbitrary 
		strictly positive~$y$. In any case, $f^{\Lambda}$ will select $(a,b)$ 
		as the unique winning line-up. In this line up, the candidate-position 
		pair $(a,p_2)$ is reasonably dissatisfied.
		
		To prove that $f^{(1,0,\dots)}$ is weakly reasonable satisfying, 
		let $\pi$ be a winning line-up of some line-up election under 
		$f^{(1,0,\dots)}$. Let $(c,p)$ be the candidate-position 
		pair 
		with the highest score in $\pi$. It follows that this is the highest 
		score $c$ can achieve and the highest score $p$ can achieve, as 
		otherwise $\pi$ would not be a winning line-up. Moreover, all 
		line-ups~$\pi'$ with $\pi_p=\pi'_p$ are also winning line-ups. One of 
		them is
		guaranteed to be reasonable satisfying, as all line-ups provided by the
		max-first rule are included in them. Clearly, due to the zero
		entries, one can easily construct examples where some of these winning 
		line-ups
		cause reasonable dissatisfaction.
	\end{proof}
	
	\begin{theorem}
		For all OWA-vectors $\Lambda\neq (1,\dots 1)$, $f^\Lambda$ is not 
		weakly score consistent. The utilitarian rule, $f^{\ut}$, is strongly 
		score 
		consistent.
	\end{theorem}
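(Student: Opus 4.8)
The plan is to prove the two halves separately.

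First, for the utilitarian rule I would use the fact that $\Lambda^{\ut}(\pi)=\sum_{j\in[q]}\score_{p_j}(\pi_{p_j})$ is \emph{additive} in the score matrix (the sorting is irrelevant here since all weights equal $1$): for every line-up $\pi$ and all matrices $\mathbf{S},\mathbf{S}'$ we have $\Lambda^{\ut}_{\mathbf{S}+\mathbf{S}'}(\pi)=\Lambda^{\ut}_{\mathbf{S}}(\pi)+\Lambda^{\ut}_{\mathbf{S}'}(\pi)$. Write $W:=f^{\ut}(\mathbf{S},C,P)$ and $W':=f^{\ut}(\mathbf{S}',C,P)$, assume $W\cap W'\neq\emptyset$, and fix $\rho\in W\cap W'$; letting $M$ and $M'$ denote the maximum utilitarian values under $\mathbf{S}$ and under $\mathbf{S}'$, we have $\Lambda^{\ut}_{\mathbf{S}}(\rho)=M$ and $\Lambda^{\ut}_{\mathbf{S}'}(\rho)=M'$. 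Then for every line-up $\pi$ one gets $\Lambda^{\ut}_{\mathbf{S}+\mathbf{S}'}(\pi)\le M+M'$, with equality exactly when $\pi\in W$ and $\pi\in W'$, and $\rho$ attains $M+M'$; hence $f^{\ut}(\mathbf{S}+\mathbf{S}',C,P)=W\cap W'$, which is precisely what both conditions (a) and (b) of score consistency demand.

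For the negative part I would first reduce to length-two OWA-vectors using the padding technique from the earlier proofs: to ``activate'' a target pair of consecutive ranks $i,i+1$, add $q-2$ dummy positions, each with a designated candidate whose scores are either far above or far below all other scores of the instance, so that in every winning line-up of $\mathbf{S}$, of $\mathbf{S}'$, and of $\mathbf{S}+\mathbf{S}'$ each dummy position is filled by its designated candidate and contributes a constant term to the OWA value, while the two remaining positions behave exactly like a two-position election under the OWA-vector $(\lambda_i,\lambda_{i+1})$. Because $\Lambda$ is normalized and $\Lambda\neq(1,\dots,1)$, not all of its entries are equal, so such a pair with $\lambda_i\neq\lambda_{i+1}$ exists; it thus suffices to treat a normalized length-two vector $\Lambda=(\lambda_1,\lambda_2)$ with $\lambda_1\neq\lambda_2$ over $C=\{a,b\}$ and $P=\{p_1,p_2\}$. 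If $\lambda_1>\lambda_2$, let $\mathbf{S}$ give score $1$ to both candidates on $p_1$ and $0$ to both on $p_2$, and let $\mathbf{S}'$ give score $1$ to $b$ on both positions and $0$ to $a$ on both positions. In $\mathbf{S}$ both line-ups $(a,b)$ and $(b,a)$ have score vector $(1,0)$, and in $\mathbf{S}'$ both again have OWA value $\lambda_1$, so $(a,b)\in f^{\Lambda}(\mathbf{S},C,P)\cap f^{\Lambda}(\mathbf{S}',C,P)$; but under $\mathbf{S}+\mathbf{S}'$ the line-up $(a,b)$ has score vector $(1,1)$ and value $\lambda_1+\lambda_2$, whereas $(b,a)$ has score vector $(2,0)$ and value $2\lambda_1>\lambda_1+\lambda_2$, so $(a,b)\notin f^{\Lambda}(\mathbf{S}+\mathbf{S}',C,P)$ and condition (a) of score consistency fails. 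If instead $\lambda_1<\lambda_2$, the mirrored instance works: let $\mathbf{S}$ give score $1$ to $a$ on both positions and $0$ to $b$ on both, and let $\mathbf{S}'$ give score $1$ to both candidates on $p_1$ and $0$ to both on $p_2$; then $(a,b)$ again co-wins both elections with value $\lambda_1$, while under $\mathbf{S}+\mathbf{S}'$ the line-up $(b,a)$ has score vector $(1,1)$ and $(a,b)$ has score vector $(2,0)$, so $\lambda_1+\lambda_2>2\lambda_1$ makes $(b,a)$ the unique winner and excludes $(a,b)$.

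The OWA-value computations in these small matrices are immediate. The one step that needs care is the padding reduction: one must check that the designated candidates are genuinely forced onto the dummy positions in all three elections, and that the active scores keep their intended ranks in the summed matrix as well (both hold because the gaps created by the dummies can be made arbitrarily large while the active scores stay bounded and non-negative). I expect this routine but slightly fiddly bookkeeping to be the main obstacle.
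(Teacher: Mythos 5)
Your proposal is correct and follows essentially the same route as the paper: for the positive part, the additivity $\Lambda^{\ut}_{\mathbf{S}+\mathbf{S}'}(\pi)=\Lambda^{\ut}_{\mathbf{S}}(\pi)+\Lambda^{\ut}_{\mathbf{S}'}(\pi)$ (the paper proves the two inclusions separately, while you obtain $f^{\ut}(\mathbf{S}+\mathbf{S}',C,P)=W\cap W'$ in one stroke), and for the negative part an explicit two-candidate, two-position counterexample combined with the same dummy-position padding the paper invokes to reach arbitrary $\Lambda\neq(1,\dots,1)$. Your $2\times 2$ matrices differ slightly from the paper's (your sub-elections have tied winners, the paper's have unique ones), but the computations are valid for every consecutive pair $\lambda_i\neq\lambda_{i+1}$, including zero entries, so the difference is only cosmetic.
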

	\begin{proof}
		To prove the first part, let us consider the following two-candidates 
		two-positions line-up
		election that
		can be extended to an arbitrary number of candidates and positions: 
		\begin{center}
			$E$: \begin{tabular}{c|c|c} 
				 \diagbox{$C$}{$P$} & $p_1$ & $p_2$ \\
				\hline
				$a$ & 4 & 3 \\
				$b$ & 2 & 1
			\end{tabular}
			\qquad
			$E'$:
			\begin{tabular}{c|c|c} 
				\diagbox{$C$}{$P$} & $p_1$ & $p_2$ \\
				\hline
				$a$ & 1 & 3 \\
				$b$ & 2 & 4
			\end{tabular}
			\qquad
			$E^*$: 
			\begin{tabular}{c|c|c} 
				\diagbox{$C$}{$P$} & $p_1$ & $p_2$ \\
				\hline
				$a$ & 5 & 6 \\
				$b$ & 4 & 5
			\end{tabular}
		\end{center}
		For $f^{(1,x)}$ with $x\in [0,1)$, in the first election $E$, it holds 
		that 
		$\Lambda(a,b)=4+x>2x+3=\Lambda(b,a)$. 	
		In the second election, it holds that 
		$\Lambda(a,b)=x+4>2x+3=\Lambda(b,a)$. 
		However, in the combined election $E^*$, where we summed up the 
		score 
		matrices of the first two elections, it holds that 
		$\Lambda(a,b)=5+5x<4x+6=\Lambda(b,a)$. 
		Hence, weak score consistency is 
		violated. 
		
		For $f^{(x,1)}$ with $x\in [0,1)$, in the first election, it holds that 
		$\Lambda(a,b)=4x+1<2+3x=\Lambda(b,a)$  
		in the first election. In the second election, it holds that 
		$\Lambda(a,b)=1+4x<2+3x=\Lambda(b,a)$. 
		However, in the combined election $E^*$, where we summed up the 
		score 
		matrices of the first two elections, it holds that
		$\Lambda(a,b)=5x+5>4+6x=\Lambda(b,a)$, which violates score consistency.
		
		It remains to prove that $f^{\ut}$ is strongly score 
		consistent. 
		Let $\pi$ be some winning line-up in the two original elections 
		$E=(\mathbf{S},C,P)$ and $E'=(\mathbf{S}',C,P)$. We show
		that~$\pi$ is a winner in~$E^*$. For the 
		sake of 
		contradiction, 
		let us assume that there exists a line-up~$\pi'$ with a higher 
		utilitarian score than $\pi$ in the combined 
		election $E^*=(\mathbf{S}+\mathbf{S}',C,P)$. 
		However, as it holds that the score of a line-up in~$E^*$
		is equal to the sum of the scores of this line-up in~$E$ and~$E'$, this 
		implies that for at least one of the two 
		initial elections 
		it also needs to 
		hold that the score 
		of~$\pi'$ is higher than the score of~$\pi$. However, this contradicts 
		the assumption 
		that $\pi$ is a winning line-up for both $E$ and $E'$. 
		
		To prove that the utilitarian rule also satisfies condition (b) of 
		score consistency, let~$\pi$ be some winning line-up $\pi$ 
		of the
		combined election $E^*$ that is also a winning line-up in both
		$E$ and $E'$ (such~$\pi$ must exist due to the above argument for the
		first direction).
		Let $\Lambda^{\ut}_{E}(y)$ denote the utilitarian score
		of a line-up $y$ in
		an election $E$. Let~$\pi' \neq \pi$ be another line-up that is
		also winning in the combined election (if such a line-up exists). We 
		show that~$\pi'$ is
		a winner in~$E$ and~$E^*$ as well. Since both~$\pi'$ and~$\pi$ win 
		in~$E^*$,
		it holds that $\Lambda^{\ut}_{E^*}(\pi)=\Lambda^{\ut}_{E^*}(\pi')$. 
		Moreover,
		as $\pi$ wins in $E$ and $E'$, $\Lambda^{\ut}_{E}(\pi)\geq
		\Lambda^{\ut}_{E}(\pi')$ and~$\Lambda^{\ut}_{E'}(\pi)\geq 
		\Lambda^{\ut}_{E'}(\pi')$. 
		For every line-up~$x$, we have that 
		$\Lambda^{\ut}_{E^*}(x)=\Lambda^{\ut}_{E}(x)+\Lambda^{\ut}_{E'}(x)$,
		which implies that 
		$\Lambda^{\ut}_{E}(\pi)= \Lambda^{\ut}_{E}(\pi')$ and 
		$\Lambda^{\ut}_{E'}(\pi)= 
		\Lambda^{\ut}_{E'}(\pi')$. Hence, $\pi'$ is a winning line-up in $E$ 
		and $E'$.
	\end{proof}
	
	\begin{theorem} 
		The harmonic rule, $f^{\har}$, and the inverse harmonic rule, 
		$f^{\Ihar}$, violate weak position consistency. The egalitarian rule, 
		$f^{\eg}$, 
		and the utilitarian rule, $f^{\ut}$, are weak but not strong position 
		consistent.
	\end{theorem}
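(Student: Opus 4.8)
The statement has four essentially independent parts, and I would establish them in the following order.

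\textbf{(i) $f^{\har}$ and $f^{\Ihar}$ violate weak position consistency.} For each of these two rules I would build a small election with $P',P''\subsetneq P$, $P'\cup P''=P$, together with overlapping-disjoint outcomes $\pi'\in f(\mathbf{S},C,P')$ and $\pi''\in f(\mathbf{S},C,P'')$ whose union is \emph{not} winning on $P$, which already refutes condition~(a). The phenomenon to exploit is the coefficient shift: a candidate-position pair that is the $i$-th largest among the $k$ pairs of a line-up in a $k$-position sub-election may become the $j$-th largest once extra positions are added, hence is multiplied by a different OWA-coefficient, so an outcome forced to be optimal in each sub-election becomes beatable on $P$. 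For $f^{\har}$ three positions suffice: take $P=\{p_1,p_2,p_3\}$, $P'=\{p_1,p_2\}$, $P''=\{p_2,p_3\}$, $C=\{a,b,c\}$ with $\score_{p_1}(a)=4$, $\score_{p_1}(b)=4.75$, $\score_{p_2}(a)=1$, $\score_{p_2}(b)=3$, $\score_{p_3}(c)=3$, and all remaining scores $0$. Then $(a,b)$ is the unique winner on $P'$ and $(b,c)$ the unique winner on $P''$; they agree on $p_2$ and use disjoint candidates on $\{p_1\}$ versus $\{p_3\}$, hence are overlapping-disjoint, yet on $P$ their union $(a,b,c)$ has $\Lambda^{\har}$-score $6\tfrac12$, which is beaten by $(b,a,c)$ with score $6\tfrac{7}{12}$. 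For $f^{\Ihar}$ the non-increasing OWA-vector makes the relevant shift happen one step later (the smallest pair's coefficient only halves, from $1$ to $\tfrac12$, when a still smaller pair is appended), so I would use four positions, $P'=\{p_1,p_2,p_3\}$, $P''=\{p_2,p_3,p_4\}$, with candidates arranged so that $c$ is locked onto $p_3$ and $d$ onto $p_4$ with a tiny score, such that $(a,b,c)$ wins $P'$, $(b,c,d)$ wins $P''$, these are overlapping-disjoint, but $(a,b,c,d)$ is beaten on $P$ by $(b,a,c,d)$.

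\textbf{(ii) Weak position consistency for $f^{\eg}$.} Let $\pi'\in f^{\eg}(\mathbf{S},C,P')$ and $\pi''\in f^{\eg}(\mathbf{S},C,P'')$ be overlapping-disjoint and set $\pi^*:=\pi'\cup\pi''$. For an arbitrary line-up $\rho$ on $P$, the restrictions $\rho|_{P'}$ and $\rho|_{P''}$ are valid line-ups on $P'$ respectively $P''$, so their minimum scores are at most $\Lambda^{\eg}(\pi')$ and $\Lambda^{\eg}(\pi'')$; since $\pi^*$ coincides with $\pi'$ on all of $P'$ and with $\pi''$ on all of $P''$, it follows that $\min_{p\in P}\score_p(\rho)=\min\big(\min_{p\in P'}\score_p(\rho),\min_{p\in P''}\score_p(\rho)\big)\le\min\big(\Lambda^{\eg}(\pi'),\Lambda^{\eg}(\pi'')\big)=\min_{p\in P}\score_p(\pi^*)$, so $\pi^*$ is winning on $P$.

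\textbf{(iii) Weak position consistency for $f^{\ut}$.} The same restriction idea still gives $\Lambda^{\ut}(\rho|_{P'})\le\Lambda^{\ut}(\pi')$ and $\Lambda^{\ut}(\rho|_{P''})\le\Lambda^{\ut}(\pi'')$, but now the summed scores on $P'$ and on $P''$ overcount the shared positions $P'\cap P''$, so one cannot conclude directly. Instead I would argue in the assignment graph: if $\pi^*$ were not a maximum-weight matching saturating $P$, there would be an improving alternating cycle or augmenting path $Q$ with respect to $\pi^*$, and replacing the witness by $\pi^*\triangle Q$ we may assume $Q$ is all of $\pi^*\triangle\rho$. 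The key claim is that $Q$ can be chosen with all its positions inside $P'$ or all inside $P''$; one then obtains an improving alternating structure for $\pi'$ on $(C,P')$ or for $\pi''$ on $(C,P'')$, contradicting optimality. This localisation is where the overlapping-disjointness is used essentially (in particular, that the candidates $\pi^*$ places on $P''\setminus P'$ do not occur in $\pi'$), together with the fact that $\pi^*|_{P'}=\pi'$ and $\pi^*|_{P''}=\pi''$. Making the localisation step fully rigorous --- carefully tracking how $Q$ may enter and leave each side through candidate vertices --- is the step I expect to be the main obstacle; a cleaner alternative I would try is induction on $|P'\cap P''|$, peeling off one shared position and the candidate both $\pi'$ and $\pi''$ assign to it, reducing to the disjoint-position base case, which is trivial because there the utilitarian score genuinely decomposes.

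\textbf{(iv) $f^{\eg}$ and $f^{\ut}$ are not strongly position consistent.} Since $\pi^*=\pi'\cup\pi''$ forces $\pi'=\pi^*|_{P'}$ and $\pi''=\pi^*|_{P''}$, condition~(b) fails as soon as some winner $\pi^*$ on $P$ restricts to a non-winner on $P'$ or on $P''$. For $f^{\ut}$ I would use $P=\{p_1,p_2,p_3\}$, $P'=\{p_1,p_2\}$, $P''=\{p_2,p_3\}$, $C=\{a,b,c\}$ with $\score_{p_1}(a)=2$, $\score_{p_2}(b)=2$, $\score_{p_1}(c)=3$, $\score_{p_2}(a)=3$, $\score_{p_3}(c)=5$, and all other scores $0$: then $(a,b,c)$ is the unique winner on $P$ (weight $9$), while within $P'$ the outcome $(c,a)$ (weight $6$) beats $(a,b)$ (weight $4$), so $(a,b,c)|_{P'}=(a,b)\notin f^{\ut}(\mathbf{S},C,P')$. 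The same scheme, with scores tuned so that the relevant comparisons are about the minimum rather than the sum, yields an analogous instance for $f^{\eg}$. In both instances no overlapping-disjoint pair of sub-election winners exists (the unique winners of $P'$ and $P''$ both place $c$ on a non-shared position), so condition~(a) is vacuously satisfied, in agreement with parts (ii) and (iii).
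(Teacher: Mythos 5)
Your harmonic counterexample and your egalitarian argument are both correct (the former checks out numerically, the latter is exactly the restriction argument the paper uses), but two of the four parts have genuine problems. The most serious is part (iv). In the paper's axiom the antecedent ``if there exist two overlapping-disjoint outcomes $\pi'\in f(\mathbf{S},C,P')$ and $\pi''\in f(\mathbf{S},C,P'')$'' governs \emph{both} conditions (a) and (b); condition (b) is not unconditional. In your $f^{\ut}$ instance the unique sub-winners are $(c,a)$ on $P'$ and $(a,c)$ on $P''$, both using $c$ on a non-shared position, so---as you yourself observe---no overlapping-disjoint pair of sub-winners exists. Hence the hypothesis fails, the axiom (strong version included) holds vacuously on that election, and nothing is refuted; the same applies to the egalitarian instance you plan ``by the same scheme''. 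A valid counterexample must make the hypothesis true and then exhibit a full winner that does not decompose: e.g., for $f^{\ut}$ take $C=\{a,b,c\}$, $P'=\{p_1,p_2\}$, $P''=\{p_2,p_3\}$, $\score_{p_1}(b)=\score_{p_2}(b)=1$ and all other scores $0$; then $(a,b)$ and $(b,c)$ are overlapping-disjoint sub-winners, yet the full winner $(b,a,c)$ restricts on $P''$ to a line-up of score $0$. An analogous tie-based example (exploiting that not every egalitarian winner is Pareto optimal) handles $f^{\eg}$.

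Part (iii) is an acknowledged gap, and neither of your routes closes it. The ``localization'' claim---that an improving alternating path or cycle can be chosen with all its positions inside $P'$ or all inside $P''$---is precisely the crux: a component of $\pi^*\triangle\rho$ may run through $P'\setminus P''$, $P'\cap P''$ and $P''\setminus P'$, and you give no argument excluding that only such mixed components carry the gain. Your fallback induction on $|P'\cap P''|$ breaks at the lifting step: after peeling a shared position $p$ and the common candidate $c$, the induction hypothesis yields optimality of the peeled line-up only among line-ups on $P\setminus\{p\}$ avoiding $c$, while a competing full line-up may place $c$ elsewhere or fill $p$ with a different candidate, so optimality of $\pi^*$ on $P$ does not follow. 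Some exchange argument is still required, e.g.\ the paper's: from $\Lambda^{\ut}(\widetilde\pi|_{P'})\le\Lambda^{\ut}(\pi')$ and $\Lambda^{\ut}(\widetilde\pi|_{P''})\le\Lambda^{\ut}(\pi'')$ a better $\widetilde\pi$ would have to gain on both $P'\setminus P''$ and $P''\setminus P'$ while losing less on $P'\cap P''$, and one then argues from the optimality of $\pi'$ and $\pi''$ that any gain on $P'\setminus P''$ costs at least as much on $P'\cap P''$ and cannot be financed by candidates freed on $P''\setminus P'$. Finally, your inverse-harmonic part is only a sketch: no scores are given, and your reason for needing a fourth position is off---passing from two to three positions already changes the weights of the two largest pairs from $(\tfrac12,1)$ to $(\tfrac13,\tfrac12)$, which is not a proportional rescaling, so a three-position example of exactly the shape of your harmonic one exists (the paper gives one). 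Your four-position scheme can in fact be instantiated (e.g., $\score_{p_1}(a)=2$, $\score_{p_2}(a)=1$, $\score_{p_1}(b)=3.75$, $\score_{p_2}(b)=2$, $\score_{p_3}(c)=5$, $\score_{p_4}(d)=0.1$, all else $0$), but as written that claim is unverified.
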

	\begin{proof}
               \underline{Harmonic rule $f^{\har}$}:
	    Let $P'=\{p_1,p_2\}$ and $P''=\{p_2,p_3\}$ and 
		let 
		$E^*=(\mathbf{S},C,P'\cup P'')$ be the following line-up election: 
		\begin{center}
			\begin{tabular}{c|c|c|c} 
				
				 \diagbox{$C$}{$P$} & $p_1$ & $p_2$ & $p_3$\\
				\hline
				$a$ & $1$ & $3$ & $0$ \\
				$b$ & $3$ & $4.1$ & $0$ \\
				$c$ & $0$ & $0$ & $5$ \\
			\end{tabular}
		\end{center}
		In the first sub-election, $E'=(\mathbf{S},C,P')$, it holds that 
		$\Lambda^{\har}(a,b)=\frac{1}{2}\cdot 1+1\cdot 4.1>1\cdot 
		3+\frac{1}{2}\cdot 3=\Lambda^{\har}(b,a)$. Thus, $(a,b)$ is the unique 
		winning 
		line-up.
		In the second sub-election, $E''=(\mathbf{S},C,P'')$, $(b,c)$ is the 
		unique 
		winning 
		line-up. 
		However, in the full election~$E^*$, $\Lambda^{\har}(a,b,c)= 
		\frac{1}{3}\cdot 1+\frac{1}{2}\cdot 4.1+1\cdot 5<\frac{1}{2}\cdot 
		3+\frac{1}{3}\cdot 3+1\cdot 
		5=\Lambda^{\har}(b,a,c)$.
		Thereby,~$(a,b,c)$ is not a winning line-up, which violates weak 
		position 
		consistency.
		
		\medskip
		\noindent \underline{Inverse harmonic rule $f^{\Ihar}$}:
		Let 
		$P'=\{p_1,p_2\}$ and $P''=\{p_2,p_3\}$ and let 
		$E^* = (\mathbf{S},C,P'\cup P'')$ be the following line-up election: 
		\begin{center}
			\begin{tabular}{c|c|c|c} 
				
				\diagbox{$C$}{$P$} & $p_1$ & $p_2$ & $p_3$\\
				\hline
				$a$ & $2$ & $1$ & $0$ \\
				$b$ & $3 \frac{3}{4}$ & $2$ & $0$ \\
				$c$ & $0$ & $0$ & $\frac{1}{2}$ \\
			\end{tabular}
		\end{center}
		In the first sub-election, $E'=(\mathbf{S},C,P')$, $(a,b)$ is the 
		unique 
		winning line-up, 
		as~$\Lambda^{\Ihar}(a,b)=\frac{1}{2}\cdot 2+1\cdot 2>\frac{1}{2}\cdot 3 
		\frac{3}{4}+1\cdot 1=\Lambda^{\Ihar}(b,a)$. 
		In the second sub-election,~$E''=(\mathbf{S},C,P'')$, $(b,c)$ is the 
		unique 
		winning line-up. 
		However, in the full 
		election~$E^*$,~$\Lambda^{\Ihar}(a,b,c)=\frac{1}{3}\cdot 
		2+\frac{1}{2}\cdot 2+1\cdot 
		\frac{1}{2}< 
		\frac{1}{3}\cdot 3 
		\frac{3}{4}+\frac{1}{2}\cdot 1+1\cdot 
		\frac{1}{2}=\Lambda^{\Ihar}(b,a,c)$.
		Thereby,~$(a,b,c)$ is not a winning line-up, which violates weak 
		position 
		consistency.
		
		\medskip
		\noindent \underline{Egalitarian rule $f^{\eg}$}:  We start by proving 
		that this rule 
		satisfies weak position consistency. 
		Assume that $E^*=(\mathbf{S},C,P)$ is a line-up election and 
		$P',P''\subseteq P$ with $P'\cup P'' = P$. Let~$\pi'$ be a winning 
		line-up of the election $E'=(\mathbf{S},C,P')$ and $\pi''$ be a winning 
		line-up of the 
		election $E''=(\mathbf{S},C,P'')$, where $\pi'$ and $\pi''$ are 
		overlapping-disjoint. We claim that $\pi'\cup \pi''$ is then a winning 
		line-up of the full election $E^*$. First of all, note that for all 
		outcomes~$\pi^*$ of the full election $E^*$ it holds that 
		$\min(\Lambda^{\eg}(\pi^*|_{P}),\Lambda^{\eg}(\pi^*|_{P'}))=\Lambda^{\eg}(\pi^*)$.
		Thereby,
		the 
		existence 
		of an outcome of the combined election $\pi^*$ with higher score than 
		$\pi'\cup 
		\pi''$ implies that either 
		$\Lambda^{\eg}(\pi^*|_{P'})>\Lambda^{\eg}(\pi')$, which 
		contradicts the 
		assumption that $\pi'$ is a winning line-up of $E'$, or 
		$\Lambda^{\eg}(\pi^*|_{P''})>\Lambda^{\eg}(\pi'')$, which 
		contradicts the assumption that
		$\pi''$ is a winning line-up of $E''$. 
		
		To prove that $f^{\eg}$ does not satisfy strong position consistency, 
		we exploit the property that not all winning line-ups are score Pareto 
		optimal. Consider the following line-up election 
		$E^*=(\mathbf{S},C,P)$: 
		\begin{center}
			\begin{tabular}{c|c|c|c} 
				
				\diagbox{$C$}{$P$} & $p_1$ & $p_2$ & $p_3$\\
				\hline
				$a$ & $1$ & $0$ & $0$ \\
				$b$ & $0$ & $3$ & $2$ \\
				$c$ & $0$ & $2$ & $3$ \\
			\end{tabular}
		\end{center}
		In the sub-election on $P'=\{p_1,p_2\}$, $(a,b)$ is the unique winning 
		line-up. On 
		$P''=\{p_2,p_3\}$, $(b,c)$ is the unique winning line-up. However, in 
		the full 
		election, $(a,c,b)$ is also a winning line-up, which violates strong 
		position consistency. 
		
		\medskip
		\noindent \underline{Utilitarian rule $f^{\ut}$}: We prove that 
		$f^{\ut}$ satisfies weak but not strong position 
		consistency.  Let again 
		$E^*=(\mathbf{S},C,P)$ 
		be a line-up elections and 
		$P',P''\subseteq P$ with $P'\cup P'' = P$. In the following, we write 
		$L$ to 
		denote $P'\setminus P''$, $M$ to denote $P'\cap P''$ and 
		$R$ to 
		denote $P''\setminus P'$. 
		
		Assume that $\pi'$ is a winning line-up of the first sub-election 
		$E'=(\mathbf{S},C,P')$ on $P'$ 
		and~$\pi''$ of the second sub-election $E''=(\mathbf{S},C,P'')$ on 
		$P''$,
		where $\pi'$ and $\pi''$ are 
		overlapping-disjoint. We claim that $\pi^*=\pi'\cup \pi''$ is then a 
		winning line-up 
		of the 
		full election $E^*$.
		
		Assume, for the sake of contradiction, that there exists a line-up 
		$\widetilde{\pi}^*$ of $E^*$ with a higher utilitarian score than 
		$\pi^*$. Let $\widetilde{\pi}':=\widetilde{\pi}^*|_{P'}$ and 
		$\widetilde{\pi}'':=\widetilde{\pi}^*|_{P''}$. 
		As 
		$\pi'$ and $\pi''$ are winning line-ups, it needs to hold that 
		$\Lambda^{\ut}(\pi')\geq 
		\Lambda^{\ut}(\widetilde{\pi}')$ and $\Lambda^{\ut}(\pi'')\geq 
		\Lambda^{\ut}(\widetilde{\pi}'')$. 
		From 
		this it 
		follows that 
		$$\Lambda^{\ut}(\widetilde{\pi}^*|_L)+ 
		\Lambda^{\ut}(\widetilde{\pi}^*|_M)\leq 
		\Lambda^{\ut}(\pi^*|_L)+ 
		\Lambda^{\ut}(\pi^*|_M) \text{ and }$$ 
		$$\Lambda^{\ut}(\widetilde{\pi}^*|_M)+ 
		\Lambda^{\ut}(\widetilde{\pi}^*|_R)\leq \Lambda^{\ut}(\pi^*|_M)+ 
		\Lambda^{\ut}(\pi^*|_R).$$ Hence,
		\begin{equation}\label{eq:in}
			\Lambda^{\ut}(\widetilde{\pi}^*|_L)+ 
			2\cdot \Lambda^{\ut}(\widetilde{\pi}^*|_M)+	
			\Lambda^{\ut}(\widetilde{\pi}^*|_R)\leq 
			\Lambda^{\ut}(\pi^*|_L)+2\cdot \Lambda^{\ut}(\pi^*|_M)+	
			\Lambda^{\ut}(\pi^*|_R)
		\end{equation}		
		
		\noindent Recall that, assuming that $\widetilde{\pi}^*$ has a higher 
		score than 
		$\widetilde{\pi}$, it needs to hold that:
		\begin{equation}\label{eq:in2}
		\begin{aligned}
			\Lambda^{\ut}(\widetilde{\pi}^*) 
			=&\Lambda^{\ut}(\widetilde{\pi}^*|_L) +
			\Lambda^{\ut}(\widetilde{\pi}^*|_M) +	
			\Lambda^{\ut}(\widetilde{\pi}^*|_R)> \\
			&\Lambda^{\ut}(\pi^*|_L)+\Lambda^{\ut}(\pi^*|_M)+	
			\Lambda^{\ut}(\pi^*|_R)=\Lambda^{\ut}(\pi^*).
		\end{aligned}
		\end{equation}
		The only possibility that both inequality \eqref{eq:in} and 
		\eqref{eq:in2} hold at the same time is 
		that~$\Lambda^{\ut}(\pi^*|_M)-\Lambda^{\ut}(\widetilde{\pi}^*|_M)=x$ 
		and 
		$\Lambda^{\ut}(\widetilde{\pi}^*|_L)-\Lambda^{\ut}(\pi^*|_L)=y$ and 
		$\Lambda^{\ut}(\widetilde{\pi}^*|_R)-\Lambda^{\ut}(\pi^*|_R)=y$
		for some $x,y,z>0$ with 
		$x<y+z$. 
		
		For the sake of contradiction, assume that an outcome 
		$\widetilde{\pi}^*$ that fulfills the thee conditions specified above 
		exists. We now compare $\pi^*$ and 
		$\widetilde{\pi}^*$ and argue why this is not possible. Let us first 
		look 
		at the 
		differences of $\pi^*$ and $\widetilde{\pi}^*$ on $L$: Because of the 
		optimality 
		of $\pi'$, it is not possible to increase the 
		score of $\pi^*$ on~$L$ by using candidates who are not already 
		assigned 
		in $\pi^*$ to positions from $M$. Thereby, in~$\widetilde{\pi}^*|_L$ 
		some candidates from~$\pi^*|_{L}$ must have been 
		replaced by 
		some candidates from $\pi^*|_{M}$. From the optimality of $\pi'$ 
		it follows that by modifying~$L$ as described above, an increase of
		$\Lambda^{\ut}(\pi^*|_{L})$ by 
		some~$x$ always results in a decrease of at least $x$ on 
		$\Lambda^{\ut}(\pi^*|_{M})$. Moreover, 
		it is 
		never 
		possible to increase the score on~$R$ simultaneously by such 
		rearrangements, as 
		all candidates that might get free are candidates from $R$ which 
		are---due to 
		the optimality of $\pi''$--- not usable to increase the score on $L$. 
		As 
		the 
		same argument also holds for $R$, it is never possible to increase the 
		scores 
		as required and thereby $\pi^*$ is always a winning line-up.  
		
		We consider the following line-up election $E^*=(\mathbf{S},C,P)$ to 
		prove that $f^{\ut}$ 
		violates strong position consistency:
		\begin{center}
			\begin{tabular}{c|c|c|c} 
				
				\diagbox{$C$}{$P$}& $p_1$ & $p_2$ & $p_3$\\
				\hline
				$a$ & $0$ & $0$ & $0$ \\
				$b$ & $1$ & $1$ & $0$ \\
				$c$ & $0$ & $0$ & $0$ \\
			\end{tabular}
		\end{center}
		In the sub-election on $P'=\{p_1,p_2\}$, $(a,b)$ is a winning outcome. 
		On~$P''=\{p_2,p_3\}$,~$(b,c)$ is winning outcome. However, in the full 
		election $E^*$,
		$(b,a,c)$ is a winning outcome. However $(a,c)$ is not a winning 
		outcome of the 
		second sub-election on $P''$.
	\end{proof}
	
	\begin{theorem}
		The utilitarian rule, $f^{\ut}$, satisfies strong monotonicity. The 
		egalitarian rule,~$f^{\eg}$, violates weak 
		monotonicity. For all OWA-vectors $\Lambda$ 
		with 
		at least 
		three succeeding non-zero entries of which the last two are unequal, 
		$f^\Lambda$ 
		does not 
		satisfy weak monotonicity. 
	\end{theorem}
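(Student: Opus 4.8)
The plan is to prove the three claims separately.

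\emph{Utilitarian rule, strong monotonicity.} This is the easy direction. Fix a line-up election $(\mathbf{S},C,P)$ with winning line-up $\pi\in f^{\ut}(\mathbf{S},C,P)$, a position $p$, and let $(\mathbf{S}',C,P)$ be obtained by raising $\score_p(\pi_p)$ by some $\delta>0$. The only altered matrix entry is that of candidate $\pi_p$ at position $p$, so the utilitarian (summed) score of an arbitrary line-up $\sigma$ increases by exactly $\delta$ if $\sigma_p=\pi_p$ and stays unchanged otherwise. Hence the score of $\pi$ grows by $\delta$ while every other line-up grows by at most $\delta$; since $\pi$ was a maximiser in $(\mathbf{S},C,P)$, it is still a maximiser in $(\mathbf{S}',C,P)$, giving~(a). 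For~(b), any winner $\sigma$ of $(\mathbf{S}',C,P)$ attains the new optimum, which equals the old optimum plus $\delta$; this forces $\sigma_p=\pi_p$ (so that $\sigma$'s score really increased) and $\sigma$ to have already been a maximiser in $(\mathbf{S},C,P)$, i.e.\ $\sigma\in f^{\ut}(\mathbf{S},C,P)$. For the egalitarian rule, no new work is needed: election $E_1$ from the proof of \autoref{th::monot} suffices, since $(b,a)\in f^{\eg}(E_1)$ but $(b,a)$ ceases to be winning once $\score_{p_2}(a)$ is increased, which already contradicts part~(a) of monotonicity.

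\emph{General OWA rules, the construction.} This generalises the $E_2$-counterexample of \autoref{th::monot}. By hypothesis there is an index $j$ with $\lambda_j,\lambda_{j+1},\lambda_{j+2}>0$ and $\lambda_{j+1}\ne\lambda_{j+2}$; assume $\lambda_{j+1}>\lambda_{j+2}$ (the case $\lambda_{j+1}<\lambda_{j+2}$ is entirely analogous, exchanging ranks $j+1$ and $j+2$). I would build a $k$-position election whose positions split into three ``core'' positions $r_1,r_2,r_3$, a block of $j-1$ ``upper'' padding positions on which \emph{every} candidate scores the same huge value $M$, and a block of $k-j-2\ge 0$ ``lower'' padding positions on which every candidate scores the same tiny value $\varepsilon>0$. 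The candidates are three core candidates $a,b,c$ --- with $c$ forced onto $r_3$ exactly as in $E_2$ ($c$ scores well on $r_3$ and badly elsewhere, $a,b$ score badly on $r_3$) --- together with $k-3$ filler candidates who score $0$ on each of $r_1,r_2,r_3$. Because the padding positions are ``neutral'', in \emph{every} line-up the $j-1$ copies of $M$ occupy ranks $1,\dots,j-1$, the $k-j-2$ copies of $\varepsilon$ occupy ranks $j+3,\dots,k$, and ranks $j,j+1,j+2$ are exactly the three core scores of that line-up. Inside the core I would mirror $E_2$: the intended winner $\pi$ assigns $a\!\to\!r_1,\ b\!\to\!r_2,\ c\!\to\!r_3$ with the other core entries chosen so that $\score_{r_3}(c)$ --- the entry I later increase --- lands at core rank $j+2$ (the \emph{smaller} coefficient $\lambda_{j+2}$), while the competitor $\pi'$ assigns $b\!\to\!r_1,\ a\!\to\!r_2,\ c\!\to\!r_3$ with entries chosen so that the \emph{same} entry $\score_{r_3}(c)$ lands at core rank $j+1$ (the \emph{larger} coefficient $\lambda_{j+1}$). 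I would then fix the remaining core entries, and an overall scaling keeping all core scores strictly between $\varepsilon$ and $M$, so that $\Lambda(\pi)>\Lambda(\pi')$ but by strictly less than $(\lambda_{j+1}-\lambda_{j+2})$ times the planned increase of $\score_{r_3}(c)$; this is a single linear feasibility condition on the core entries that is solvable precisely because $\lambda_{j+1}>\lambda_{j+2}$, and it can be satisfied with rationals. Then $\pi$ is a winner of the original election, but raising $\score_{r_3}(\pi_{r_3})=\score_{r_3}(c)$ by the planned amount grows $\Lambda(\pi)$ only by $\lambda_{j+2}$ times that amount whereas it grows $\Lambda(\pi')$ by $\lambda_{j+1}$ times it --- so $\pi'$ overtakes $\pi$ and $\pi$ stops being winning, refuting part~(a). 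The harmonic rule is the special case $j=1$, $(\lambda_1,\lambda_2,\lambda_3)=(1,\tfrac12,\tfrac13)$ with no padding, where this collapses to exactly election $E_2$.

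\emph{Main obstacle.} Everything above except the padded-election argument is routine linear bookkeeping. The delicate point is to show that $\pi$ is a winner of the \emph{whole} padded election --- against all line-ups, not just $\pi'$ --- and that this survives $\Lambda$ having zero entries outside $\{j,j+1,j+2\}$, in particular trailing zeros. The ``neutral padding'' is precisely what makes this robust: since a neutral padding position contributes the same value no matter who fills it, any line-up deviating from $\pi$ cannot ``hide'' the deviation at a zero-weight rank; it is forced to replace one of the three core scores by $0$ (or by a very small off-diagonal core entry) at one of the \emph{positively weighted} core ranks $j,j+1,j+2$ while gaining nothing on the padding, so it strictly loses to $\pi$. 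Verifying this domination, together with the sorted orders of the relevant score vectors before and after the perturbation and the two linear inequalities on the core entries, is the remaining (and routine) work.
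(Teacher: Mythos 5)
Your proposal is correct and takes essentially the same route as the paper: the utilitarian and egalitarian parts coincide with the paper's arguments (the latter reusing election $E_1$ from \autoref{th::monot}), and your general OWA counterexample is exactly the paper's $E_2$-style three-candidate core in which the increased entry $\score_{r_3}(c)$ is weighted by the smaller of the two unequal coefficients in the current winner but by the larger one in the rival, so the rival overtakes after the increase. The only difference is presentational: the paper fixes explicit core scores for $\Lambda=(x,y,z)$ (giving $b$ the score $4+\frac{0.5y+1.5z}{x}$) and leaves the embedding into longer OWA-vectors implicit via its earlier padding remark, whereas you make the padding explicit and leave the core entries to a linear feasibility condition, which is a comparable level of detail.
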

	\begin{proof}
		\underline{Utilitarian rule $f^{\ut}$}: We show that $f^{\ut}$ 
		satisfies strong
		monotonicity.  Let $\pi$ be a winning line-up 
		of some 
		line-up election $(\mathbf{S},C,P)$. Then, by 
		increasing 
		$\score_p(c)$ 
		for some $(c,p)\in C\times P$ with $c=\pi_p$ by some arbitrary~$x$, the 
		score of 
		$\pi$ 
		is increased by~$x$. Moreover, 
		the 
		score of all other line-ups is also increased by at most $x$. Thereby, 
		$\pi$ 
		remains a winning line-up, and no new winning line-ups 
		can be created, as every line-up that is winning in the modified 
		elections needs to have the same utilitarian score as $\pi$ before the 
		modification. 
		
		\medskip
		\noindent \underline{Egalitarian rule $f^{\eg}$}: Consider the 
		following line-up 
		election: \begin{center}
			\begin{tabular}{c|c|c} 
				
				\diagbox{$C$}{$P$} & $p_1$ & $p_2$ \\
				\hline
				$a$ & $0$ & $3$  \\
				$b$ & $3$ & $0$ \\
				$c$ & $4$ & $0$  \\
			\end{tabular}
		\end{center}
		Under the egalitarian rule, $(b,a)$ is a winning line-up of this 
		election. However, after increasing the score of $a$ on $p_2$ to four, 
		$(b,a)$ is no longer a winning line-up, as now $(c,a)$ is the unique 
		winning line-up.
		
		\medskip
		\noindent \underline{(Inverse) harmonic rule $f^{\har}$/$f^{\Ihar}$}: 
		Finally, we prove that for all OWA-vectors~$\Lambda$ 
		with 
		at least 
		three succeeding non-zero entries of which the last two are not equal, 
		$f^\Lambda$ 
		does not 
		satisfy weak monotonicity. Note that this also includes the harmonic 
		rule and inverse harmonic rule. 
		Let $\Lambda=(x,y,z)$. We assume that $x\neq 0$ and $y\neq z$. Consider 
		the following line-up election:
		\begin{center}
			\begin{tabular}{c|c|c|c} 
				
				\diagbox{$C$}{$P$} & $p_1$ & $p_2$ & $p_3$\\
				\hline
				$a$ & $4$ & $1$ & $0$ \\
				$b$ & $4+\frac{0.5y+1.5z}{x}$ & $3$ & $0$ \\
				$c$ & $0$ & $0$ & $2$ \\
			\end{tabular}
		\end{center}
		Clearly, in all winning line-ups, $c$ is assigned to $p_3$. Thereby, 
		the two 
		line-ups which could be winning are $(a,b,c)$ and $(b,a,c)$ with scores
		$\Lambda(a,b,c)=4x+3y+2z$ and $\Lambda(b,a,c)=x\cdot 
		(4+\frac{0.5y+1.5z}{x})+1z+2y =4x+2.5y+2.5z$. We now 
		increase $\score_{p_3}(c)$ to $3$. The scores of $(a,b,c)$ and 
		$(b,a,c)$ 
		change 
		as follows: $\Lambda(a,b,c)=4x+3y+3z$ and $\Lambda(b,a,c)=4x+3.5y+2.5z$.
		Assuming now that $y>z$, $(a,b,c)$ is the unique winning line-up in 
		the 
		original election. However, in the altered election, $(b,a,c)$ is 
		the unique 
		winning line-up. This violates weak monotonicity. 
		Assuming $z>y$, $(b,a,c)$ is the unique winning line-up in the 
		original 
		election, while $(a,b,c)$ is the unique winning line-up in the altered 
		election. This violates weak monotonicity.
	\end{proof}
	
	\begin{theorem}
		The egalitarian rule, $f^{\eg}$, satisfies weak line-up enlargement 
		monotonicity but fails to 
		satisfy 
		strong line-up enlargement monotonicity. The utilitarian rule, 
		$f^{\ut}$, satisfies strong 
		line-up enlargement monotonicity. The harmonic rule, $f^{\har}$, and 
		inverse harmonic rule, $f^{\Ihar}$, both 
		fail weak 
		line-up enlargement monotonicity. 
	\end{theorem}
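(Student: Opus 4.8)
The plan is to treat the four rules separately, reusing \autoref{th:utilLEM} for half of the utilitarian claim, running two short augmenting-path arguments for the positive parts, and constructing two small counterexamples for the negative parts. \emph{Utilitarian rule, condition~(b).} Part~(a) is exactly \autoref{th:utilLEM}, so only part~(b) remains: given a winning line-up $\pi'$ of the enlarged election $E'=(\mathbf S',C,P\cup\{p^*\})$, I must produce a winning line-up of $E$ whose candidate set lies in $\pi'_{P'}$. I would fix any winning line-up $\sigma$ of $E$ and view it as a matching of the assignment graph of $E'$ that leaves $p^*$ unmatched. In the symmetric difference $\pi'\triangle\sigma$ the vertex $p^*$ has degree one, so its component is an alternating path $p^*-c^*-\cdots-e$, where $c^*=\pi'_{p^*}$ and $e$ is a candidate unmatched by $\sigma$ (with $e=c^*$ when the path is the single edge $p^*c^*$). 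Flipping $\pi'$ along this path yields a matching that leaves $p^*$ unmatched; its restriction $\tau$ to $P$ is a valid line-up of $E$ that uses $c^*$ in place of $e$, hence only candidates of $\pi'_{P'}$. Flipping $\sigma$ along the same path yields a valid line-up $\rho$ of $E'$. Bookkeeping the edge weights exchanged in the two flips gives $\Lambda^{\ut}(\tau)+\Lambda^{\ut}(\rho)=\Lambda^{\ut}(\pi')+\Lambda^{\ut}(\sigma)$; since $\pi'$ is optimal in $E'$ we have $\Lambda^{\ut}(\rho)\le\Lambda^{\ut}(\pi')$, hence $\Lambda^{\ut}(\tau)\ge\Lambda^{\ut}(\sigma)$, so $\tau$ is a winning line-up of $E$ contained in $\pi'_{P'}$.

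\emph{Egalitarian rule, condition~(a).} The preliminary observation is that adding a position cannot raise the optimal bottleneck value: restricting any line-up of $E'$ to $P$ gives a line-up of $E$, so $\Lambda^{\eg}$ of a winning line-up of $E'$ is at most $\Lambda^{\eg}$ of a winning line-up of $E$. Now let $\pi$ be winning in $E$ and $\tau$ winning in $E'$. If $\tau_{p^*}\notin\pi$, I would simply place $\tau_{p^*}$ on $p^*$ on top of $\pi$. Otherwise, I would flip $\pi$ along the alternating path of $\pi\triangle(\tau|_P)$ that starts at $\tau_{p^*}$: this frees $\tau_{p^*}$ while keeping each position of $P$ filled with score at least $\min(\Lambda^{\eg}(\pi),\Lambda^{\eg}(\tau))=\Lambda^{\eg}(\tau)$, and then I would place $\tau_{p^*}$ on $p^*$. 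In both cases the resulting line-up has bottleneck value $\Lambda^{\eg}(\tau)$---hence is winning in $E'$---and contains every candidate of $\pi$, since the only candidate the flip removes from $\pi$ is $\tau_{p^*}$, which reappears on $p^*$. This gives weak line-up enlargement monotonicity.

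\emph{Negative parts.} For the failure of condition~(b) for $f^{\eg}$, I would use a one-position election in which a single candidate is strictly best, enlarged by a second position on which two \emph{other} candidates tie for the top score; the line-up pairing those two candidates is bottleneck-optimal in the enlarged election, but its candidate set omits the unique winner of the original election, so (b) fails---this exploits, as in \autoref{th::monot}, that $f^{\eg}$ may return score-Pareto-dominated outcomes. For the two harmonic rules I would give explicit small instances realising the ``coefficient shift'': adding a position changes the weight multiplying each entry of the sorted score vector, so a line-up optimal on $P$ need not remain near-optimal on $P'$. Concretely, take a single candidate $m$ that is the best available on two of the original positions; on $P$ it is assigned to the slot carrying the larger weight, but after the enlargement---the new position being filled by a candidate whose score lands at the appropriate place in the sorted vector---the two slots swap their relative weights, $m$ migrates to the other slot, and a further candidate (not worth including on $P$) is pulled in, permanently displacing one of the original winner's candidates; a few chosen numbers make this precise for both $\Lambda^{\har}$ and $\Lambda^{\Ihar}$.

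\emph{Main obstacle.} I expect the delicate step to be calibrating these last two counterexamples so that the intended line-up is the \emph{unique} winner of the small election and \emph{no} winning line-up of the enlarged election still contains all of that winner's candidates; by contrast, the weight bookkeeping in the utilitarian augmenting-path argument is routine but needs a word on the degenerate case where the alternating path collapses to the single edge $p^*c^*$.
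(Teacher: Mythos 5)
Your proposal is correct, but for the positive claims it takes a genuinely different route from the paper. For the utilitarian rule's condition~(b) the paper iteratively overwrites a copy of the original winner $\pi$ with the assignments of the extended winner $\pi'$ and then needs a somewhat delicate argument that the resulting swap back into $\pi'$ stays a valid line-up; you instead take the symmetric difference of $\pi'$ with any winner $\sigma$ of the original election, follow the alternating path at $p^*$, and use the exchange identity $\Lambda^{\ut}(\tau)+\Lambda^{\ut}(\rho)=\Lambda^{\ut}(\pi')+\Lambda^{\ut}(\sigma)$, which delivers validity and optimality in one stroke -- a cleaner, more standard matching-theoretic argument (and it would prove condition~(a) the same way, though citing \autoref{th:utilLEM} is fine). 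For the egalitarian rule's condition~(a) the paper's proof is the more elementary of the two: it simply re-inserts every candidate of $\pi$ at its $\pi$-position on top of $\pi'$, using only that all of $\pi$'s scores are at least the (weakly larger) bottleneck of the original election; your alternating-path flip of $\pi$ followed by placing $\tau_{p^*}$ on $p^*$ is also correct, just heavier machinery than needed. The negative parts follow the paper's approach in spirit: your one-position-plus-one-position example for the failure of condition~(b) for $f^{\eg}$ is easily instantiated (the paper uses a $4$-candidate, $2{+}1$-position variant of the same idea), and your ``coefficient shift'' template for $f^{\har}$ and $f^{\Ihar}$ is exactly the mechanism behind the paper's explicit instances, so the promised ``few chosen numbers'' do exist; the only descriptive slip is that under the harmonic rules the two original slots do not literally swap weights -- the new top (respectively bottom) score pushes both coefficients down the OWA-vector so that their ratio shrinks (e.g., from $1:\tfrac{1}{2}$ to $\tfrac{1}{2}:\tfrac{1}{3}$), and it is this ratio change that flips the comparison and expels a candidate of the original unique winner.
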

	\begin{proof}
		\underline{Egalitarian rule $f^{\eg}$}: We prove that $f^{\eg}$ 
		satisfies weak line-up enlargement 
		monotonicity.  Let $\pi$ be a winning line-up 
		of the 
		initial election $E=(\mathbf{S},C,P)$
		of 
		egalitarian score $x$. Let $\pi'$ be a winning line-up of the extended 
		election~$E'=(\mathbf{S}',C,P\cup \{p^*\})$ such that there exists a 
		candidate $c\in C$ with
		$c\notin 
		\pi'$ and $c\in \pi$. Let $y$ be the egalitarian score of~$\pi'$. It 
		needs 
		to hold that $x\geq y$, as otherwise $\pi$ cannot be a winning line-up 
		of 
		the initial election. We claim that it is always possible to construct 
		from~$\pi'$ a winning line-up~$\pi''$ of the extended election such 
		that 
		$\pi_{P}\subseteq \pi''_{P\cup \{p^*\}}$. We construct $\pi''$ from 
		$\pi$ and $\pi'$ using 
		the 
		following 
		iterative procedure. Initially, we set $\pi'':=\pi'$. As long as there 
		exists a candidate $c\in C$ with $c\in \pi$ and $c\notin \pi''$, we set 
		$\pi''_{\pi(c)}=c$.  As the egalitarian score of $\pi$ is~$x$, for all 
		$c\in \pi$, it holds 
		that $\score_{\pi(c)}(c)\geq x$. Thus, the 
		egalitarian 
		score of $\pi''$ will not fall below $y$ by the described replacements. 
		Consequently, $\pi''$ is a winning line-up of the extended election 
		$E'$ with all candidates appearing in $\pi$ also appearing in $\pi''$. 
		
		However, $f^{\eg}$ fails to satisfy strong line-up enlargement 
		monotonicity. 
		To see this, consider the following line-up election: 
		\begin{center}
			\begin{tabular}{c|c|c|c|} 
				
				\diagbox{$C$}{$P$}  & $p_1$ & $p_2$ & $p^*$ \\
				\hline
				$a$ & $3$ & $0$ & $0$ \\
				$b$ & $2$ & $3$ & $0$\\
				$c$ & $0$ & $2$ & $0$ \\
				$d$ & $0$ & $0$ & $1$
			\end{tabular}
		\end{center}
		We consider the election on $p_1$ and $p_2$ as the initial election. In 
		this initial election, 
		$(a,b)$ is the unique winning line-up. However, as the new position 
		decreases the lowest score in an optimal outcome, $(b,c,d)$ is also a 
		winning line-up in the extended election. Hence, strong 
		line-up-monotonicity is violated.  
		
		\medskip
		\noindent \underline{Utilitarian rule $f^{\ut}$}: For a proof 
		that$f^{\ut}$ satisfies weak 
		line-up enlargement monotonicity see \Cref{th:utilLEM}. 
		
		Using a similar argument as in \Cref{th:utilLEM}, we prove that 
		$f^{\ut}$
		is even strong line-up enlargement monotone. Let $\pi'$ be a winning 
		outcome
		of the extended election and $\pi$ a winning outcome of the original 
		election
		where there exists a candidate $c\in C$ that is part of $\pi$ but not of
		$\pi'$, that is, $c\in \pi$ but $c\notin \pi'$. To prove that for every
		winning outcome of the extended election, there always exists a winning
		outcome of the original election consisting of a subset of 
		candidates of
		the extended outcome, we construct from~$\pi$ an outcome of the original
		election $\widetilde{\pi}$ with $\widetilde{\pi}|_{P} \subseteq 
		\pi'_{P\cup
		\{p^*\}}$ as follows. We start by setting~$\widetilde{\pi}:=\pi$.
		Subsequently, as long as there exists a position $p\in P$ with
		$\widetilde{\pi}_p\notin \pi'$ or an unassigned position $p\in P$, we 
		set
		$\widetilde{\pi}_p=\pi'_p$. If $\pi'_p$ was already assigned to some 
		other
		position in $\widetilde{\pi}$, then we delete $\pi'_p$ from the 
		position she
		was assigned to in $\widetilde{\pi}$.
		
		It obviously holds that $\widetilde{\pi}\subseteq \pi'$. We claim that
		$\widetilde{\pi}$ is a winning line-up of the initial election. For the
		sake of contradiction, let us assume that this is not the case, that 
		is,~$\Lambda^{\ut}(\widetilde{\pi})<\Lambda^{\ut}(\pi)$. Let 
		$\widetilde{P}\subseteq P$ be the set of positions which have been 
		modified 
		in~$\widetilde{\pi}$, where it needs to hold that 	
		$\Lambda^{\ut}(\pi'|_{\widetilde{P}})<\Lambda^{\ut}(\pi|_{\widetilde{P}})$.
		Let~$\pi'_{\textrm{alt}}$ be a line-up resulting from
		taking~$\pi'$ and modifying it such that
		$\pi'_{\textrm{alt}}|_{\widetilde{P}}=\pi|_{\widetilde{P}}$.
		Consequently, $\pi'_\textrm{alt}$ has a higher utilitarian score
		than that of~$\pi'$. It remains to argue that after the modification 
		$\pi'_\textrm{alt}$ remains a valid line-up, that is, every candidate 
		is 
		assigned at most once in $\pi'_\textrm{alt}$. To prove this it is 
		enough to
		show 
		that, for each $p\in \widetilde{P}$, if $\pi_p\in \pi'$, then 
		$\pi'(\pi_p)\in
		\widetilde{P}$. Let us fix some position~$p \in \widetilde{P}$ such
		that~$\pi_{p} \in \pi'$. Note that position~$p$ had to be empty at some 
		point
		during the construction of $\widetilde{\pi}$. This implies that	in
		some preceding step, $\pi_p$ had been assigned to some other position
		in~$\widetilde{\pi}$. However, by the construction
		of~$\widetilde{\pi}$, each $c\in \pi'$ can be only assigned to
		position~$\pi'(c)$ in $\widetilde{\pi}$. Thus, since $\pi_p$ must have 
		once
		been assigned, $\pi_p$ was assigned to position $\pi'(\pi_p)$; hence,
		$\pi'(\pi_p)\in \widetilde{P}$. So $\pi'_{\textrm{alt}}$ is indeed a 
		valid
		line-up.
		
		\medskip
		\noindent 
		\underline{Harmonic rule $f^{\har}$}: To prove that $f^{\har}$ violates 
		weak line-up enlargement monotonicity, consider 
		the following 
		line-up election:
		\begin{center}
			\begin{tabular}{c|c|c|c|} 
				
				\diagbox{$C$}{$P$}  & $p_1$ & $p_2$ & $p^*$ \\
				\hline
				$a$ & $0.5$ & $0$ & $0$ \\
				$b$ & $3$ & $3.3$ & $0$\\
				$c$ & $0$ & $1$ & $0$ \\
				$d$ & $0$ & $0$ & $4$
			\end{tabular}
		\end{center}  Let the election on $p_1$ and $p_2$ be the initial 
		election. The underlying 
		idea of the counterexample is that the weighting of the scores may 
		change after adding an additional position.  In the initial election, 
		$(a,b)$ is the unique winning line-up as 
		$\Lambda(a,b)=\frac{1}{2}\cdot 0.5+3.3>3+\frac{1}{2}\cdot 
		1=\Lambda(b,c)$. 
		However, in the full election, $(b,c,d)$ is the unique winning 
		line-up, as 
		$\Lambda(b,c,d)=\frac{1}{2}\cdot 3+\frac{1}{3}\cdot 
		1+4>\frac{1}{3}\cdot 0.5+\frac{1}{2}\cdot 3.3+4=\Lambda(a,b,d)$.\\
		
		\medskip
		\noindent \underline{Inverse harmonic rule $f^{\Ihar}$}: To prove that 
		$f^{\Ihar}$ violates weak line-up enlargement monotonicity, let us 
		consider the 
		following line-up election:
		\begin{center}
			\begin{tabular}{c|c|c|c|} 
				
				\diagbox{$C$}{$P$}  & $p_1$ & $p_2$ & $p^*$ \\
				\hline
				$a$ & $1.7$ & $0$ & $0$ \\
				$b$ & $3$ & $1.7$ & $0$\\
				$c$ & $0$ & $1$ & $0$ \\
				$d$ & $0$ & $0$ & $1$
			\end{tabular}
		\end{center}  Again, let the election on $p_1$ and $p_2$ be the initial 
		election. In the initial election, 
		$(a,b)$ is the unique winning line-up as 
		$\Lambda(a,b)=\frac{1}{2}\cdot 1.7+1.7>\frac{1}{2}\cdot 
		3+1=\Lambda(b,c)$. 
		However, in the full election, $(b,c,d)$ is the unique winning 
		line-up, as 
		$\Lambda(b,c,d)=\frac{1}{3}\cdot 3+\frac{1}{2}\cdot 
		1+1>\frac{1}{3}\cdot 1.7+\frac{1}{2}\cdot 1.7+1=\Lambda(a,b,d)$.
	\end{proof}

	\subsubsection{Sequential rules}
	We now turn to sequential rules and prove all statements from 
	\autoref{ta:sum}. We group again the statements for all three rules by the 
	axioms. Note that for the fixed-order rule, without loss of generality, we 
	assume that the positions in $P=\{p_1,\dots, p_q\}$ are filled in 
	increasing 
	order of their index, that is, $p_1$ is filled first, then $p_2$ 
	is filled, and so on.
	\begin{theorem}
		All sequential rules satisfy strong non-wastefulness.
	\end{theorem}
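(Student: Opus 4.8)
The plan is to argue by contradiction, exploiting the single common feature of all sequential rules: once function $g$ selects a position $p$, that position is filled by the remaining candidate with the \emph{highest} score on $p$. Concretely, fix an arbitrary sequential rule $f^{\seq}_g$, an arbitrary line-up election $E=(\mathbf{S},C,P)$, and an arbitrary winning line-up $\pi\in f^{\seq}_g(E)$. Suppose, for contradiction, that $\pi$ is wasteful: there is a candidate $c\notin\pi$ and a position $p\in P$ with $\score_p(c)>\score_p(\pi_p)$.

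The key step is to look at the moment in the run of $f^{\seq}_g$ at which position $p$ is filled. Since $c\notin\pi$, candidate $c$ is never assigned to any position during the entire execution, so in particular $c\notin C_{\text{as}}$ at that moment; that is, $c$ is still available when $p$ is processed. By definition of a sequential rule, $p$ is then assigned the candidate in $C\setminus C_{\text{as}}$ maximizing the score on $p$, so the assigned candidate $\pi_p$ satisfies $\score_p(\pi_p)\ge\score_p(c')$ for every available $c'$, including $c'=c$. This gives $\score_p(\pi_p)\ge\score_p(c)$, contradicting $\score_p(c)>\score_p(\pi_p)$. Hence no such $c$ and $p$ exist, so $\pi$ is non-wasteful; as $\pi$ was an arbitrary winning line-up, $f^{\seq}_g$ strongly satisfies non-wastefulness.

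There is essentially no technical obstacle here: the argument is uniform over all choices of $g$ (it never uses which position $g$ picks, only that every position eventually gets filled by the greedy rule) and over all tie-breaking schemes (the strict inequality $\score_p(c)>\score_p(\pi_p)$ rules out $\pi_p$ as a maximizer regardless of how ties are resolved). The only point to state carefully is the observation that an unassigned candidate is available at \emph{every} step of the procedure, which is immediate from $c\notin\pi$.
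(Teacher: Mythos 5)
Your argument is correct and is essentially the paper's own proof, just spelled out in more detail: the paper likewise observes that an unassigned candidate remains available when $p$ is filled, so a higher-scoring unassigned candidate would have been chosen for $p$, contradicting wastefulness. Your explicit remarks on independence from $g$ and from tie-breaking are accurate but not needed beyond what the paper states.
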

	\begin{proof}
		Assume that $\pi$ is selected as a winning line-up by some sequential 
		rule~$f^{\seq}$ and that $\pi$ is wasteful because there exist a 
		$c\notin \pi$ 
		and a position~$p\in P$ such that $\score_{p}(c)>\score_{p}(\pi)$. This 
		implies 
		that $c$ would have been selected instead of $\pi_p$ to fill position 
		$p$.
	\end{proof}

	\begin{theorem}
		The fixed-order rule, $f^{\seq}_{\fix}$, and the max-first rule, 
		$f^{\seq}_{\ma}$, are weakly but not strongly 
		score Pareto optimal. The min-first rule, $f^{\seq}_{\mi}$, fails weak 
		score Pareto 
		optimality. 
	\end{theorem}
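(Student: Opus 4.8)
The plan is to treat the three rules in turn. For $f^{\seq}_{\fix}$ and $f^{\seq}_{\ma}$ the positive (weak) direction rests on a single observation: a sequential rule has enough tie-breaking freedom to realise a line-up that is \emph{lexicographically} optimal, and lexicographic optimality of the score vector forces score Pareto optimality. The failures of the strong versions, as well as the complete failure of $f^{\seq}_{\mi}$, are witnessed by small elections with ties. I expect the genuinely delicate point to be the last one: to show that $f^{\seq}_{\mi}$ violates even \emph{weak} score Pareto optimality I must build an election in which \emph{every} winning line-up is score Pareto dominated, and the naive attempts always leave one tie-breaking branch that happens to produce a Pareto-optimal line-up.

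\textbf{Fixed-order rule.} I would first prove, by induction on the number of positions $q$, that $f^{\seq}_{\fix}$ always has a winning line-up $\pi$ whose score vector $(\score_{p_1}(\pi),\dots,\score_{p_q}(\pi))$ is lexicographically maximal among the score vectors of all line-ups. The case $q=1$ is immediate. For $q>1$, fix a line-up $\rho$ realising the lexicographically maximal vector $v^*$; since some line-up attains the maximal score on $p_1$, the candidate $\rho_{p_1}$ must itself have maximal score on $p_1$, so the rule may legally place $\rho_{p_1}$ on $p_1$ in its first step, after which it runs on the sub-election with positions $p_2,\dots,p_q$ and candidates $C\setminus\{\rho_{p_1}\}$; the lexicographically maximal vector of that sub-election must be the tail $(v^*_2,\dots,v^*_q)$ (otherwise prepending $\rho_{p_1}$ would beat $v^*$), and the induction hypothesis applies. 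Then I would note that a line-up with lexicographically maximal score vector is score Pareto optimal: a dominating line-up would have a componentwise-at-least-as-large but different score vector, hence a lexicographically strictly larger one, a contradiction. This gives weak score Pareto optimality. That the strong version fails is seen on $C=\{a,b\}$, $P=\{p_1,p_2\}$ with $\score_{p_1}(a)=\score_{p_1}(b)=\score_{p_2}(a)=1$ and $\score_{p_2}(b)=0$: breaking the first-step tie towards $a$ produces the winning line-up $(a,b)$ with score vector $(1,0)$, which is score Pareto dominated by $(b,a)$ with score vector $(1,1)$.

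\textbf{Max-first rule.} The same scheme works with the score vector replaced by its non-increasing \emph{sorting}, which is the natural invariant here because $f^{\seq}_{\ma}$ is not tied to a fixed order of positions. Recalling that $f^{\seq}_{\ma}$ repeatedly adds the remaining candidate-position pair of largest score, I would show by induction on $q$ that it can realise a line-up whose sorted score vector is lexicographically maximal among all line-ups: the largest entry $M$ of the score matrix is the largest possible first coordinate of any sorted score vector, and it is attained at some position by a lexicographically optimal line-up $\rho$, so the rule may add that pair first and then recurse on the remaining election. Since componentwise domination survives sorting and strictly increases the total, a line-up whose sorted score vector is lexicographically maximal is score Pareto optimal, giving weak score Pareto optimality. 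For the strong version I take $C=\{a,b\}$, $P=\{p_1,p_2\}$ with $\score_{p_1}(a)=\score_{p_1}(b)=5$, $\score_{p_2}(a)=4$, $\score_{p_2}(b)=0$: the rule may add $(a,p_1)$ first (it ties $(b,p_1)$ for the largest score), forcing the winning line-up $(a,b)$ with score vector $(5,0)$, which is dominated by $(b,a)$ with score vector $(5,4)$.

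\textbf{Min-first rule.} To show $f^{\seq}_{\mi}$ fails weak score Pareto optimality I would exhibit the election on $C=\{a,b,c\}$, $P=\{p_1,p_2,p_3\}$ with $\score_{p_1}(a)=\score_{p_1}(b)=1$, $\score_{p_1}(c)=0$, $\score_{p_2}(a)=4$, $\score_{p_2}(b)=2$, $\score_{p_2}(c)=1$, $\score_{p_3}(a)=3$, $\score_{p_3}(b)=5$, $\score_{p_3}(c)=0$. Here $p_1$ is the unique weakest position, so the rule fills it first, with $a$ or $b$ (which tie for best on $p_1$). If it places $a$ on $p_1$, the remaining candidates make $p_2$ weaker than $p_3$, so the rule is forced to fill $p_2$ with $b$ and then $p_3$ with $c$, giving $(a,b,c)$ with score vector $(1,2,0)$; if it places $b$ on $p_1$, the relative strengths of $p_2$ and $p_3$ flip, so the rule is forced to fill $p_3$ with $a$ and then $p_2$ with $c$, giving $(b,c,a)$ with score vector $(1,1,3)$. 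These are the only two winning line-ups, and $(a,b,c)$ is score Pareto dominated by $(b,a,c)$ with score vector $(1,4,0)$, while $(b,c,a)$ is dominated by $(a,c,b)$ with score vector $(1,1,5)$, so no winning line-up is score Pareto optimal. The crux of the construction---and the step I expect to be the main obstacle to discover---is that placing $b$ on $p_1$ reverses the order in which $p_2$ and $p_3$ get processed, so the rule can never output the dominating line-up $(b,a,c)$; once the election is in hand, verifying the run of the rule and the two dominations is routine.
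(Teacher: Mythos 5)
Your proposal is correct and follows essentially the same route as the paper: weak score Pareto optimality of the fixed-order and max-first rules is certified by the fact that, with suitable tie-breaking, they output a line-up whose (sorted, for max-first) score vector is lexicographically maximal—exactly the paper's ``score number'' argument, just packaged as a direct induction—while strong failure and the min-first weak failure are shown by small tie-based counterexamples of the same structure as the paper's (your $3\times 3$ min-first instance, though numerically different, exploits the same mechanism of the first-step tie reversing the processing order of the remaining positions, and it checks out).
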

	\begin{proof}
		\underline{Fixed-order rule $f^{\seq}_{\fix}$}: We prove that 
		$f^{\seq}_{\fix}$ is weak score Pareto optimal. 
		Assume that 
		$\pi$ is a 
		winning line-up 
		which is score Pareto dominated by some line-up score Pareto optimal 
		line-up $\pi'$. Then, 
		it 
		follows that there exists a winning outcome $\widetilde{\pi}$ that 
		score 
		Pareto 
		dominates $\pi$ on the first $i$ positions for some $i\in [1,k]$: This 
		can 
		be constructed by choosing at every position $p$ the candidate with the 
		maximum score. If ties are present, then choose the candidate $\pi'_p$ 
		if she 
		is 
		among the candidates with the highest score and otherwise arbitrarily 
		some 
		candidate with the highest score. If, for all positions~$p\in 
		P$,~$\pi'_p$ got selected, then we are done, as this implies that the 
		score 
		Pareto optimal line-up~$\pi'$ is winning under the fixed-order rule. 
		Otherwise, let~$p_i$ be 
		the first position where~$\pi'_{p_i}$ is not 
		selected, 
		which implies that there exists a candidate with a higher score for 
		$p_i$ 
		and 
		that this candidate is selected. Thereby, the score of 
		$\widetilde{\pi}$ is 
		equal to the score of $\pi'$ on the first~$i-1$ positions and higher 
		than 
		the 
		score of $\pi'$ at position $p_i$. Now, either $\widetilde{\pi}$ is 
		score Pareto optimal and we are done or there exists again a line-up 
		that score Pareto dominates $\widetilde{\pi}$. If this is again the case, 
		then we 
		can again construct a winning outcome~$\widetilde{\pi}'$ that 
		score 
		Pareto 
		dominates $\widetilde{\pi}$ on the first $i$ positions for some $i\in 
		[1,k]$. To prove that this
		process converges to an outcome outputted by the fixed-order rule that 
		is score Pareto optimal, we read 
		the 
		score vector of an outcome $(\score_{p_1}(\pi),\dots, 
		\score_{p_k}(\pi))$ 
		as a 
		decimal number and call this the line-up's score number. As described 
		above, 
		each time a 
		winning outcome~$\pi$ is score Pareto dominated by some outcome 
		$\pi'$, 
		there exists a winning line-up~$\widetilde{\pi}$ with a higher score 
		number than 
		$\pi$. The winning line-up with the highest score number 
		needs to be 
		score
		Pareto optimal.
		
		In contrast to this, $f^{\seq}_{\fix}$ is not strongly score Pareto 
		optimal. In the following line-up election, $(a,b)$ is a winning 
		line-up, where $(a,b)$ is score Pareto dominated by $(b,a)$, which, of 
		course, also is a winning outcome.  
		\begin{center}
			\begin{tabular}{c|c|c|} 
				
				\diagbox{$C$}{$P$}  & $p_1$ & $p_2$ \\
				\hline
				$a$ & $2$ & $1 $ \\
				$b$ & $2$ & $0$
			\end{tabular}
		\end{center}

		\medskip 
		\noindent \underline{Max-first rule $f^{\seq}_{\ma}$}: To 
		prove that $f^{\seq}_{\ma}$ satisfies weak score 
		Pareto optimality, we apply the 
		same reasoning 
		as before. However, in this case, the score number of an 
		outcome is 
		the score vector sorted in decreasing order. The example from above can 
		also be used to prove that $f^{\seq}_{\ma}$ does not satisfy strong 
		score Pareto optimality. 
		
		\medskip
		\noindent \underline{Min-first rule $f^{\seq}_{\mi}$}: Finally, we 
		prove that $f^{\seq}_{\mi}$ violates weak score Pareto optimality. 
		Here, it is not 
		possible to 
		modify a 
		dominated 
		outcome such that certain positions or the maximum scores always 
		improve. 
		To prove that $f^{\seq}_{\mi}$ violates weak score Pareto optimality, 
		consider the following line-up election:
		\begin{center}
			\begin{tabular}{c|c|c|c} 
				\diagbox{$C$}{$P$} & $p_1$ & $p_2$ & $p_3$  \\
				\hline
				$a$ & 1 & 1  & 3\\
				$b$ & 1 & 3 & 2\\
				$c$ & 0 & 0  & 0
			\end{tabular}
		\end{center}
		Because all candidates have at most score one on $p_1$, $p_1$ is 
		required to pick first but is indifferent between $a$ and $b$. 
		Assuming that $p_1$ picks $a$, $p_3$ is required to pick next and picks 
		$b$. Then, $p_2$ gets assigned $c$. Assuming that $p_1$ picks $b$, 
		$p_2$ is 
		required 
		to pick 
		next and picks $a$. Then, $p_3$ gets assigned $c$. Consequently, the 
		two 
		outcomes of 
		the game are~$(a,c,b)$ and $(b,a,c)$. 
		However $(a,c,b)$ is score Pareto dominated by $(b,c,a)$ and 
		$(b,a,c)$ 
		is score Pareto dominated by $(a,b,c)$. Note that $(a,b,c)$ is not a 
		winning line-up, as in the case where position~$p_1$ picks $a$, 
		position $p_3$ is required to select next, and not, as it would be 
		necessary to achieve this outcome, position $p_2$. Thereby, both 
		winning 
		line-ups 
		returned by $f^{\seq}_{\min}$ are score Pareto dominated.
	\end{proof}

		\begin{theorem}
		The max-first rule, $f^{\seq}_{\ma}$, satisfies strong reasonable 
		satisfaction.  
		The fixed-order rule, $f^{\seq}_{\fix}$, and the min-first rule, 
		$f^{\seq}_{\mi}$, violate weak reasonable 
		satisfaction.
	\end{theorem}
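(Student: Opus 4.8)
The first claim needs no new work: the proof of \Cref{th::rSat} establishes that \emph{every} winning line-up under the max-first rule is reasonably satisfying, and by the definition of strong satisfaction ($f(E)$ contains \emph{only} line-ups satisfying the axiom) this is exactly the statement that $f^{\seq}_{\ma}$ strongly satisfies reasonable satisfaction. So I would simply recall \Cref{th::rSat} here. For the two negative claims the strategy is identical in both cases: exhibit one election on which the rule in question has a \emph{unique} winning line-up $\pi$ that is not reasonably satisfying; since then no winning line-up is reasonably satisfying, the \emph{weak} version fails. Because both sequential rules are strongly non-wasteful (by the preceding theorem), the second clause in the definition of reasonable satisfaction (the one forbidding an unassigned $c\notin\pi$ with $\score_p(c)>\score_p(\pi_p)$) can never be the source of the violation; the counterexample must instead exploit the first clause, i.e.\ produce two positions $p,p'$ with $\score_p(\pi_{p'})>\score_p(\pi_p)$ and $\score_p(\pi_{p'})>\score_{p'}(\pi_{p'})$ --- intuitively, the candidate placed on $p'$ is strictly better on $p$ than both the incumbent of $p$ and its own score on $p'$.

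\emph{Fixed-order rule.} I would use two positions and two candidates:
\begin{center}
\begin{tabular}{c|c|c}
 & $p_1$ & $p_2$ \\
\hline
$a$ & $2$ & $3$ \\
$b$ & $1$ & $0$
\end{tabular}
\end{center}
Filling $p_1$ first forces $\pi_{p_1}=a$ (as $\score_{p_1}(a)=2>1=\score_{p_1}(b)$) and then $\pi_{p_2}=b$, so $\pi=(a,b)$ is the unique winning line-up. Taking $p=p_2$, $p'=p_1$ we get $\score_{p_2}(\pi_{p_1})=\score_{p_2}(a)=3>0=\score_{p_2}(\pi_{p_2})$ and $\score_{p_2}(a)=3>2=\score_{p_1}(a)=\score_{p_1}(\pi_{p_1})$, so $\pi$ is not reasonably satisfying.

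\emph{Min-first rule.} Here I would reuse the election $E_3$ from the proof of \Cref{th::monot}: positions $p_1,p_2$ and candidates $a,b$ with $\score_{p_1}(a)=2$, $\score_{p_2}(a)=1$, $\score_{p_1}(b)=\score_{p_2}(b)=0$. Since $\max_c\score_{p_2}(c)=1<2=\max_c\score_{p_1}(c)$, position $p_2$ is filled first and gets $a$, after which $p_1$ gets $b$; thus $\pi=(b,a)$ is the unique winning line-up. With $p=p_1$, $p'=p_2$ we have $\score_{p_1}(\pi_{p_2})=\score_{p_1}(a)=2>0=\score_{p_1}(\pi_{p_1})$ and $\score_{p_1}(a)=2>1=\score_{p_2}(a)=\score_{p_2}(\pi_{p_2})$, so $\pi$ is again not reasonably satisfying.

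\textbf{Main obstacle.} There is no substantial difficulty beyond choosing the right numbers; the two points that require care are (i) verifying that the displayed election really has a \emph{unique} winning line-up, since otherwise the failure of the \emph{weak} version would not follow, and (ii) checking both strict inequalities of the first clause of the definition. If larger instances were desired, the padding trick used elsewhere in the appendix --- adjoining dummy positions each with a designated candidate whose score lies above (or below) everything else --- lifts either counterexample to arbitrarily many candidates and positions without affecting the argument.
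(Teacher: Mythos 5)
Your proposal is correct and follows essentially the same route as the paper: the max-first claim is delegated to \Cref{th::rSat}, and each negative claim is settled by a tiny election with a unique winning line-up that violates the first clause of reasonable satisfaction. The only cosmetic difference is that the paper uses a single shared $2\times 2$ counterexample (scores $1,2$ for $a$ and $0,1$ for $b$) for both the fixed-order and min-first rules, whereas you give two separate but equally valid ones.
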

	\begin{proof}
		\underline{Fixed-order rule $f^{\seq}_{\fix}$ and min-first rule 
		$f^{\seq}_{\mi}$}: We start by presenting a counterexample for 
		$f^{\seq}_{\fix}$  and $f^{\seq}_{\mi}$ 
		proving that both rules do not satisfy weak reasonable satisfaction.
		\begin{center}
			\begin{tabular}{c|c|c|} 
				
				\diagbox{$C$}{$P$} & $p_1$ & $p_2$ \\
				\hline
				$a$ & $1$ & $2$ \\
				$b$ & $0$ & $1$
			\end{tabular}
		\end{center}
		Both $f^{\seq}_{\fix}$  and $f^{\seq}_{\mi}$ return $(a,b)$ as the 
		unique 
		winner. However, in the outcome~$(b,a)$, the candidate-position pair 
		$(a,p_2)$ is reasonably dissatisfied.
		
		\medskip 
		\noindent For the \underline{max-first rule $f^{\seq}_{\ma}$}, see 
		\Cref{th::rSat}.
	\end{proof}

	\begin{theorem}
		The fixed-order rule, $f^{\seq}_{\fix}$, satisfies weak but not strong 
		score consistency. 
		The max-first rule, $f^{\seq}_{\ma}$, and the min-first rule, 
		$f^{\seq}_{\mi}$, both violate weak 
		score consistency.
	\end{theorem}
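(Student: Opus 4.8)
I would prove the statement in three pieces: a short structural argument for the fixed-order rule's weak consistency, a three-candidate election showing it is not strongly consistent, and small two-candidate elections for the two ``first'' rules.

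\textbf{Fixed-order rule, condition (a).} The decisive point is that $f^{\seq}_{\fix}$ fills positions in the order $p_1,\dots,p_q$ regardless of the scores, so for every line-up $\pi$ the candidates still available when $p_j$ is to be filled are exactly $R_j(\pi):=C\setminus\{\pi_{p_1},\dots,\pi_{p_{j-1}}\}$, a set that does not depend on the score matrix. Hence $\pi\in f^{\seq}_{\fix}(\mathbf{S},C,P)$ holds precisely when, for every $j$ and every $c\in R_j(\pi)$, the inequality $\score_{p_j}(\pi_{p_j})\ge\score_{p_j}(c)$ holds in $\mathbf{S}$. If $\pi$ lies in both $f^{\seq}_{\fix}(\mathbf{S},C,P)$ and $f^{\seq}_{\fix}(\mathbf{S}',C,P)$, I would add, for each such $j$ and $c$, the corresponding inequality in $\mathbf{S}$ to the one in $\mathbf{S}'$ to conclude $\score_{p_j}(\pi_{p_j})\ge\score_{p_j}(c)$ in $\mathbf{S}+\mathbf{S}'$ as well; thus $\pi_{p_j}$ remains a legal choice at step $j$ of the combined election, so $\pi\in f^{\seq}_{\fix}(\mathbf{S}+\mathbf{S}',C,P)$. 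This is condition (a); no genuine induction is required.

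\textbf{Fixed-order rule, failure of condition (b).} I would build an election on $C=\{a,b,c\}$, $P=\{p_1,p_2\}$ in which $p_1$ is a tie between $a$ and $b$ in both matrices: the branch filling $p_1$ with $a$ leaves $\{b,c\}$ for $p_2$ and yields the common winner $(a,b)$, whereas the branch filling $p_1$ with $b$ leaves $\{a,c\}$ for $p_2$, where the two elections are made to disagree while their sum creates a fresh tie. Concretely, in both $\mathbf{S}$ and $\mathbf{S}'$ set $\score_{p_1}(a)=\score_{p_1}(b)=10$ and $\score_{p_1}(c)=0$; in $\mathbf{S}$ set $\score_{p_2}(a)=4$, $\score_{p_2}(b)=5$, $\score_{p_2}(c)=3$, and in $\mathbf{S}'$ set $\score_{p_2}(a)=2$, $\score_{p_2}(b)=5$, $\score_{p_2}(c)=3$. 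Then $f^{\seq}_{\fix}(\mathbf{S})=\{(a,b),(b,a)\}$, $f^{\seq}_{\fix}(\mathbf{S}')=\{(a,b),(b,c)\}$, their intersection is $\{(a,b)\}$, but in $\mathbf{S}+\mathbf{S}'$ one has $\score_{p_2}(a)=\score_{p_2}(c)=6$, so $f^{\seq}_{\fix}(\mathbf{S}+\mathbf{S}')=\{(a,b),(b,a),(b,c)\}\not\subseteq\{(a,b)\}$. A two-candidate version cannot work, since summing two ``$a$ strictly beats $b$ at $p_1$'' relations keeps the relation strict; the third candidate is what makes the residual set at $p_2$ depend on the tie-break at $p_1$.

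\textbf{Max-first and min-first rules, failure of condition (a).} For each I would give a two-candidate two-position election, keeping in mind that $\mathbf{S}$ and $\mathbf{S}'$ must differ (summing an election with itself only rescales scores) and that the sum should change which position is filled first. For $f^{\seq}_{\ma}$ on $C=\{a,b\}$, $P=\{p_1,p_2\}$: in $\mathbf{S}$ set $\score_{p_1}(a)=10$, $\score_{p_1}(b)=2$, $\score_{p_2}(a)=\score_{p_2}(b)=0$, and in $\mathbf{S}'$ set $\score_{p_1}(a)=0$, $\score_{p_1}(b)=9$, $\score_{p_2}(a)=0$, $\score_{p_2}(b)=10$; the top candidate-position pair is $(a,p_1)$ in $\mathbf{S}$ and $(b,p_2)$ in $\mathbf{S}'$, so both elections uniquely return $(a,b)$, while the top pair in $\mathbf{S}+\mathbf{S}'$ is $(b,p_1)$ of score $11$, so that election uniquely returns $(b,a)$. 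For $f^{\seq}_{\mi}$: in $\mathbf{S}$ set $\score_{p_1}(a)=3$, $\score_{p_1}(b)=0$, $\score_{p_2}(a)=5$, $\score_{p_2}(b)=0$, and in $\mathbf{S}'$ set $\score_{p_1}(a)=3$, $\score_{p_1}(b)=0$, $\score_{p_2}(a)=0$, $\score_{p_2}(b)=4$; in both elections the position whose best available candidate is weakest is $p_1$, so $p_1$ is filled first with $a$ and the unique winner is $(a,b)$, but in $\mathbf{S}+\mathbf{S}'$ one has $\max_c\score_{p_1}(c)=6>5=\max_c\score_{p_2}(c)$, so $p_2$ is filled first with $a$ and the unique winner is $(b,a)$. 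In each case the common winner $(a,b)$ disappears in the combined election, so weak score consistency fails.

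\textbf{Main obstacle.} The fixed-order part (a) is essentially immediate once one isolates the score-independence of the residual candidate sets; the real effort is in the negative examples, where one has to secure the axiom's hypothesis $f(\mathbf{S},C,P)\cap f(\mathbf{S}',C,P)\neq\emptyset$ while still breaking the intended conclusion. The naive moves — anti-aligning the extreme scores outright, or taking $\mathbf{S}'=\mathbf{S}$ — either destroy the shared winner or change nothing, and resolving that tension is exactly what forces the extra candidate in the fixed-order example and the carefully ``misaligned'' numbers in the max-first and min-first examples.
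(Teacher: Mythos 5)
Your proposal is correct and follows essentially the same route as the paper: the weak consistency of $f^{\seq}_{\fix}$ is shown via the additivity of scores applied step by step (your characterization through the score-independent residual sets $R_j(\pi)$ is just a repackaging of the paper's inductive argument), and the negative claims are established by small explicit counterexamples of the same structure as the paper's (a three-candidate, two-position election with a tie at $p_1$ for the strong failure of the fixed-order rule, and two-candidate, two-position elections where the combined scores flip which position is filled first for the max-first and min-first rules). All your examples check out, including the nonemptiness of the intersection hypothesis in each case.
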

	\begin{proof}
		\underline{Fixed-order rule $f^{\seq}_{\fix}$}:
		If $\pi$ 
		is a winning line-up in both 
		$E=(\mathbf{S},C,P)$ and~$E'=(\mathbf{S}',C,P)$, then
		it holds that $\pi_{p_{i}}$ had the maximum score for position $p$ 
		among all candidates not already assigned to some position $\{p_j\mid 
		1\leq
		j<i\}$ 
		in both $E$ and $E'$. By induction and by the fact that the score of a 
		candidate for a position in $(\mathbf{S}+\mathbf{S}',C,P)$ is the sum 
		of her scores in $E$ and $E'$, it follows that 
		the same is true in $(\mathbf{S}+\mathbf{S}',C,P)$. 
		
		However, the strong version of this property does not hold: 
		\begin{center}
			$E$: \begin{tabular}{c|c|c} 
				\diagbox{$C$}{$P$} & $p_1$ & $p_2$ \\
				\hline
				$a$ & 3 & 1 \\
				$b$ & 3 & 3 \\
				$c$ & 0 & 3 
			\end{tabular}
		\qquad
			$E'$:
			\begin{tabular}{c|c|c} 
				\diagbox{$C$}{$P$} & $p_1$ & $p_2$ \\
				\hline
				$a$ & 3 & 3 \\
				$b$ & 3 & 3 \\
				$c$ & 0 & 1
			\end{tabular}
		\qquad
			$E^*$: 
			\begin{tabular}{c|c|c} 
				\diagbox{$C$}{$P$} & $p_1$ & $p_2$ \\
				\hline
				$a$ & 6 & 4 \\
				$b$ & 6 & 6 \\
				$c$ & 0 & 4
			\end{tabular}
		\end{center} In both elections~$E$ and $E'$, $(a,b)$ is a winning line-up. 
		Strong score 
		consistency is violated, as for the combined election $E^*$, $(b,a)$ is 
		also 
		a 
		winning 
		line-up. However, $(b,a)$ is not a winning line-up in election~$E$.
		
		\medskip
		\noindent \underline{Max-first rule $f^{\seq}_{\ma}$}: Weak score 
		consistency does not hold 
		by the following example: 
		\begin{center}
			$E$: \begin{tabular}{c|c|c} 
				\diagbox{$C$}{$P$} & $p_1$ & $p_2$ \\
				\hline
				$a$ & 4 & 3 \\
				$b$ & 2 & 1 
			\end{tabular}
			\qquad
			$E'$:
			\begin{tabular}{c|c|c} 
				\diagbox{$C$}{$P$} & $p_1$ & $p_2$ \\
				\hline
				$a$ & 1 & 3 \\
				$b$ & 2 & 4
			\end{tabular}
			\qquad
			$E^*$: 
			\begin{tabular}{c|c|c} 
				\diagbox{$C$}{$P$} & $p_1$ & $p_2$ \\
				\hline
				$a$ & 5 & 6 \\
				$b$ & 4 & 5
			\end{tabular}
		\end{center}
	In both elections~$E$ and $E'$, $(a,b)$ is the unique winning line-up, 
	while in the combined election $E^*$, $(b,a)$ is the unique winning 
	line-up. 
		
		\medskip 
		\noindent \underline{Min-first rule $f^{\seq}_{\mi}$}: The min-first 
		rule violates weak score consistency by 
		examining the following line-up election:
		\begin{center}
			$E$: \begin{tabular}{c|c|c} 
				\diagbox{$C$}{$P$} & $p_1$ & $p_2$ \\
				\hline
				$a$ & 1 & 3 \\
				$b$ & 0 & 0 
			\end{tabular}
			\qquad
			$E'$:
			\begin{tabular}{c|c|c} 
				\diagbox{$C$}{$P$} & $p_1$ & $p_2$ \\
				\hline
				$a$ & 0 & 0 \\
				$b$ & 4 & 1 
			\end{tabular}
			\qquad
			$E^*$: 
			\begin{tabular}{c|c|c} 
				\diagbox{$C$}{$P$} & $p_1$ & $p_2$ \\
				\hline
				$a$ & 1 & 3 \\
				$b$ & 4 & 1 
			\end{tabular}
		\end{center} 
		In both elections~$E$ and $E'$, $(a,b)$ is selected as the unique 
		winning line-up.
		However, in the combined election $E^*$, the unique winning  
		line-up is $(b,a)$.
	\end{proof}

	\begin{theorem}
		All three sequential rules satisfy weak position consistency but fail 
		to satisfy 
		strong position consistency.
	\end{theorem}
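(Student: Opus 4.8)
The plan is to handle the two conditions separately: condition~(a) (weak position consistency) has to be verified for each of the three rules, while condition~(b) fails for all of them and I would just give small counterexamples.

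For condition~(a), fix a line-up election $(\mathbf{S},C,P)$ with $P'\cup P''=P$ and overlapping-disjoint winners $\pi'\in f(\mathbf{S},C,P')$, $\pi''\in f(\mathbf{S},C,P'')$; set $\pi^*:=\pi'\cup\pi''$. Overlapping-disjointness is exactly what makes $\pi^*$ a well-defined valid line-up, since $\pi'$ and $\pi''$ agree on $P'\cap P''$ and their candidates on $P'\setminus P''$, on $P''\setminus P'$, and on $P'\cap P''$ are pairwise distinct. The common idea is to build a run of the rule on the full election that outputs $\pi^*$ by interleaving the run producing $\pi'$ with the run producing $\pi''$, keeping the set of already filled positions equal to a ``prefix'' of the $P'$-run together with a ``prefix'' of the $P''$-run. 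For the fixed-order rule this is almost immediate: positions are processed in index order, so when the full run reaches a position $p\in P'$ the candidates already placed are those placed before $p$ in the $P'$-run plus some candidates placed on positions of $P''\setminus P'$; since $\pi'_p$ had maximum score for $p$ among the former (larger) set and, by overlapping-disjointness plus validity of $\pi^*$, $\pi'_p$ is not among the extra candidates, $\pi'_p$ is still an available candidate of maximum score for $p$ (and symmetrically for $p\in P''\setminus P'$; shared positions may be served by either run). For the max-first rule the same bookkeeping works once one merges the two runs by processing picks in non-increasing order of assigned scores, and an analogous ``superset'' argument shows the greedily best remaining candidate-position pair in the full election is exactly the pick dictated by $\pi^*$.

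The min-first rule is the delicate case and I expect it to be the main obstacle, because positions are not filled in a score-monotone order, so prefixes of the two runs need not stay ``aligned'': a candidate removed because it was placed on a position of $P''\setminus P'$ could in principle be an available candidate of maximum score for a not-yet-reached position $q$ of $P'$, lowering $q$'s current best score and forcing the full run to fill $q$ out of turn. The key point I would use is that this is impossible without violating validity of $\pi^*$: if such a candidate $e$ had maximum score for $q$ in the $P'$-run's current state, then $e\notin\pi'$ (otherwise $e$ would be assigned to two positions in $\pi^*$), hence $e$ remains available until the $P'$-run fills $q$, so the $P'$-run could have picked $e$ for $q$; since it picked $\pi'_q\neq e$ instead, there is a tie, so $\pi'_q$ -- which is \emph{not} one of the extra removed candidates -- is itself an available candidate of maximum score for $q$ with the same score. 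Hence removing $e$ does not push $q$'s best score below that of the position the $P'$-run wants to fill next, the alignment is preserved, and induction on the number of filled positions yields a min-first run outputting $\pi^*$; shared positions (which have the same fill score in both runs) and ties are handled by the same kind of argument.

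For condition~(b), the simplest counterexamples exploit tie-breaking in the full election. A uniform template is $P=\{p_1,p_2\}$ with $P'=P$ and $P''=\{p_2\}$: pick scores so that on $P$ the rule has two winning line-ups, one placing on $p_2$ the unique best candidate $a$ for $p_2$ and one placing a worse candidate $c$ there. For the fixed-order rule this needs a tie for $p_1$ whose two resolutions leave different best candidates for $p_2$; for max-first a tie between the two top pairs $(a,p_1)$ and $(a,p_2)$; for min-first a tie between $p_1$ and $p_2$ for the minimum best score. Then $\pi'$ (the winner of $f(\mathbf{S},C,P)$ with $a$ on $p_2$) and $\pi''=(a\text{ on }p_2)\in f(\mathbf{S},C,P'')$ are overlapping-disjoint, so (a) applies, but the other full-election winner -- with $c$ on $p_2$ -- is not of the form $\pi'\cup\pi''$ for any overlapping-disjoint sub-election winners, because every winner of $f(\mathbf{S},C,P'')$ puts $a$, not $c$, on $p_2$. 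Writing down a concrete $3\times 2$ score matrix realising each tie is routine.
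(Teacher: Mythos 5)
Your proposal is correct, and for the negative half it takes a genuinely different route from the paper. For weak position consistency your interleaving/superset argument is the same idea the paper uses, but you spell it out more carefully: the paper disposes of all three rules with the remark that the relative fill orders within $P'$ and $P''$ are unchanged and ``no position steals a candidate,'' whereas you isolate the one real subtlety (min-first) and supply the key lemma --- if an extra candidate $e$ removed by the other sub-line-up were the current best for an unfilled $q\in P'$, then $e\notin\pi'$ forces $\score_q(e)=\score_q(\pi'_q)$, so $q$'s best-available score never drops below what the $P'$-run sees. That is exactly the missing ingredient, though note that your closing remark that ``shared positions and ties are handled by the same kind of argument'' still leaves implicit the synchronization of the two runs when a shared position is the next pick of one run but not the other; this bookkeeping is asserted rather than proved, but the paper's own proof is even terser on this point, so it is not a gap relative to the published argument. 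For strong position consistency the paper gives a single $3\times 3$ counterexample with genuinely overlapping position sets $P'=\{p_1,p_2\}$, $P''=\{p_2,p_3\}$ (scores $\score_{p_1}(a)=\score_{p_2}(b)=\score_{p_1}(c)=\score_{p_3}(c)=1$, all else $0$), where the full-election winner $(c,b,a)$ cannot be decomposed because $(b,a)$ is not a winner on $P''$; this one instance kills all three rules simultaneously. You instead use the degenerate but formally admissible split $P'=P$, $P''=\{p_2\}$ and exploit tie-breaking in the full election separately for each rule; your template is sound (the hypothesis of the axiom is met by the full winner placing the unique $p_2$-optimal candidate $a$ on $p_2$ together with $\pi''=(a)$, and the other full winner fails to decompose since the single-position election has $a$ as its unique winner), and concrete $3\times 2$ matrices realizing the required ties are indeed routine. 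The paper's example is more uniform and stays closer to the intended ``genuine overlap'' spirit of the axiom; yours is arguably easier to verify rule by rule.
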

	\begin{proof}
		We start by proving that all considered sequential rules satisfy weak 
		position consistency, before presenting a line-up election on which all 
		all rules violate strong position consistency.
		
		\medskip
		\noindent \underline{Fixed-order rule $f^{\seq}_{\fix}$}: Let 
		$E=(\mathbf{S},C,P)$ be a 
		line-up election and $P',P''\subseteq P$ two subsets of positions such 
		that 
		$P'\cup P''=P$. To prove that weak position 
		consistency is satisfied, let $\pi'$ be a 
		winning line-up in the election $E'=(\mathbf{S},C,P')$ on $P'$ and 
		$\pi''$ be an 
		overlapping-disjoint 
		winning line-up of the election $E''=(\mathbf{S},C,P'')$ on $P''$. If 
		we now consider a 
		combination of 
		the two elections where the order within the sets $P'$ and $P''$ 
		remains 
		unchanged, then $\pi'\cup \pi''$ is a winning line-up. This follows 
		from the 
		fact 
		that $\pi'$ and $\pi''$ are disjoint on the disjoint sets of positions. 
		Thus, 
		no position will ``steal'' a candidate from some other position.
		
		\medskip \noindent \underline{Max-first rule $f^{\seq}_{\ma}$ and  
		min-first rule 
		$f^{\seq}_{\mi}$}: For the max-first rule and min-first rule, the 
		argument is 
		analgous to the above one, as again the order within $P$ in which the 
		positions 
		are 
		filled 
		and the order within $P'$ in which the positions are filled remain 
		unchanged. 
		
		\medskip \noindent
		To prove that none of the three rules satisfies strong position 
		consistency, consider the following line-up election 
		$E=(\mathbf{S},C,P)$:
		\begin{center}
			\begin{tabular}{c|c|c|c} 
				
				\diagbox{$C$}{$P$} & $p_1$ & $p_2$ & $p_3$\\
				\hline
				$a$ & $1$ & $0$ & $0$ \\
				$b$ & $0$ & $1$ & $0$ \\
				$c$ & $1$ & $0$ & $1$ \\
			\end{tabular}
		\end{center}
		Let $P'=\{p_1,p_2\}$ and $P''=\{p_2,p_3\}$. The winning line-ups of the 
		election $E'=(\mathbf{S},C,P')$ are $(a,b)$ and $(c,b)$. The unique 
		winning line-up of 
		the election $E''=(\mathbf{S},C,P'')$ is $(b,c)$. The two line-ups 
		$(a,b)$ and $(b,c)$ 
		are overlapping-disjoint 
		coalitions. 
		Nevertheless, for the winning line-up $\pi^*=(c,b,a)$ of the full 
		election $E$, 
		there do not exist two overlapping-disjoint winning coalitions $\pi'$ 
		and $\pi''$ 
		such 
		that $\pi^*=\pi\cup \pi'$.
	\end{proof}
	
	\begin{theorem}
		The fixed-order rule, $f^{\seq}_{\fix}$, and the max-first rule, 
		$f^{\seq}_{\ma}$, both satisfy strong 
		monotonicity. The min-first rule, $f^{\seq}_{\mi}$, violates weak 
		monotonicity.
	\end{theorem}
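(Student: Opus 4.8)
The plan is to get the two positive claims from one structural fact---a sequential rule's run is almost undisturbed when one raises the score of a pair it already chose---and to get the negative claim from an example the paper has already exhibited. Fix a winning line-up $\pi$, let $p_i$ be the position whose entry $\score_{p_i}(\pi_{p_i})$ is increased, and note that $\mathbf S'$ differs from $\mathbf S$ only in the column of $p_i$, and there only in the row of $\pi_{p_i}$. For the \emph{fixed-order rule} $f^{\seq}_{\fix}$ I would run the computations on $\mathbf S$ and on $\mathbf S'$ in parallel, position by position. Before $p_i$ the available-candidate pools and the relevant columns are identical, so the same candidates are selected and $p_i$ is reached with the same pool; at $p_i$, the candidate $\pi_{p_i}$ was already (among) the top-scoring available candidates, so after a strict increase it becomes the \emph{unique} top-scoring one and is selected again; after $p_i$ the pools and all remaining columns are again identical, so the runs stay in lock-step and produce the same line-up(s). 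This yields both (a) $\pi\in f^{\seq}_{\fix}(\mathbf S',C,P)$ and (b) $f^{\seq}_{\fix}(\mathbf S',C,P)\subseteq f^{\seq}_{\fix}(\mathbf S,C,P)$.

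For the \emph{max-first rule} $f^{\seq}_{\ma}$ I would use an exchange argument on the greedy matching. Write $e^{\star}=(\pi_p,p)$ for the modified pair; since $(\pi_p,p)\in\pi$, the greedy procedure on $\mathbf S$ adds $e^{\star}$. View max-first as scanning all candidate-position pairs in non-increasing order of score and adding a pair whenever both its endpoints are still uncovered; raising the score of $e^{\star}$ only pulls $e^{\star}$ to an earlier point of the scan, leaving the order of all other pairs fixed. Then, going along the scan: when $e^{\star}$ is now reached its endpoints are still free (they were free at its later, original position, hence also after the shorter prefix that precedes it now); every pair that $e^{\star}$ now overtakes and that the original run added is disjoint from $e^{\star}$ (otherwise it would already have blocked $e^{\star}$ in the original run), so these pairs behave the same way in both runs; and once $e^{\star}$ has been added the set of covered vertices equals that of the original run just after $e^{\star}$, after which the two runs coincide step for step. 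Hence the greedy matching, and so the winning line-up, is literally unchanged, which gives strong monotonicity.

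For the \emph{min-first rule} $f^{\seq}_{\mi}$ I would just invoke the counterexample from \Cref{th::monot}: there the unique winner of election $E_3$ is $(b,a)$, but raising $\score_{p_2}(a)$ to $3$ flips the order in which the two positions are processed, so $(a,b)$ becomes the unique winner and $(b,a)$ is no longer winning; this violates even condition~(a), i.e.\ weak monotonicity.

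The step I expect to need the most care is making the max-first exchange argument airtight---especially the claim that a pair overtaken by $e^{\star}$ cannot conflict with $e^{\star}$, and the bookkeeping of covered vertices across the point where $e^{\star}$ is inserted into the scan. A related point to check is that, for $f^{\seq}_{\fix}$, ties at $p_i$ cause no trouble: this is exactly what the observation above handles, since the increase makes $\pi_{p_i}$ the \emph{strict} maximiser at $p_i$ while every other step's set of top-scoring candidates is untouched, so no run that was previously impossible becomes possible.
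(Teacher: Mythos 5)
Your argument is correct and follows essentially the same route as the paper's own proof: the paper likewise argues that under $f^{\seq}_{\fix}$ the modified position simply re-selects its (now strictly best) candidate when its turn comes, that under $f^{\seq}_{\ma}$ the modified pair only moves up in the sorted list of candidate-position pairs so the greedy run is unaffected, and it refutes weak monotonicity for $f^{\seq}_{\mi}$ with exactly the $E_3$ counterexample (raising $\score_{p_2}(a)$ to $3$). Your write-up is in fact more detailed than the paper's short arguments, and like the paper it handles condition (b) by appealing to favourable tie-breaking rather than a full case analysis of tied runs.
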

	\begin{proof}
		\underline{Fixed-order rule $f^{\seq}_{\fix}$}: The fixed-order rule 
		satisfies strong monotonicity, as if the score of 
		some 
		candidate gets 
		increased for the position she got
		assigned to, then this position will select again this candidate once 
		it is 
		the 
		position's turn.
		
		\medskip 
		\noindent \underline{Max-first rule $f^{\seq}_{\ma}$}: If the score of 
		an assigned 
		candidate-position 
		pair from a line-up $\pi$
		is increased, then this pair moves up in the list of candidate-position 
		pairs sorted 
		by scores. In the max-first rule, the first still realizable element 
		from this list is always chosen as the next 
		assigned 
		pair. Thereby, breaking ties in the right way, all candidate-position 
		pairs from $\pi$ will still be selected. Moreover, no new winning 
		line-ups will be created.
		
		\medskip
		\noindent \underline{Min-first rule $f^{\seq}_{\mi}$}: To prove 
		that $f^{\seq}_{\mi}$ violates weak monotonicity, 
		consider the 
		following 
		election: 
		\begin{center}
			\begin{tabular}{c|c|c} 
				\diagbox{$C$}{$P$} & $p_1$ & $p_2$ \\
				\hline
				$a$ & 2 & 1 \\
				$b$ & 0 & 0 
			\end{tabular}
		\end{center} 
		The unique winning line-up is $(b,a)$. However, after increasing 
		$\score_{p_2}(a)$ by two, the unique winning line-up is $(a,b)$.
	\end{proof}

	\begin{theorem} \label{th:LEMM}
		The fixed-order rule, $f^{\seq}_{\fix}$, and the max-first rule, 
		$f^{\seq}_{\ma}$, both satisfies strong 
		line-up enlargement monotonicity. The min-first rule, $f^{\seq}_{\mi}$, 
		violates weak line-up enlargement monotonicity.  
	\end{theorem}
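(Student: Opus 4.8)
The plan is to use that, when the new position $p^*$ is placed last in the order, the first $q$ steps of the sequential procedure on $(\mathbf{S}',C,P')$ are literally the steps of the procedure on $(\mathbf{S},C,P)$: step $i$ consults only position $p_i$ and the set of candidates not yet assigned, and both of these evolve identically in the two elections because $p^*$ is never touched before step $q+1$. For part~(a) I would take $\pi\in f^{\seq}_{\fix}(\mathbf{S},C,P)$ obtained by some tie-breaking, run the same tie-breaking on $(\mathbf{S}',C,P')$ to recover $\pi$ on $P$, and then fill $p^*$ with the best remaining candidate, so $\pi_P\subset\pi'_{P'}$. For part~(b) I would truncate any $\pi'\in f^{\seq}_{\fix}(\mathbf{S}',C,P')$ to its first $q$ positions; this is a winning line-up of the original election, and $\pi'_P\subset\pi'_{P'}$.

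\textbf{Max-first rule.} Here I would use that $f^{\seq}_{\ma}$ is greedy maximum-weight matching in the assignment graph and that adding $p^*$ merely interleaves the new edges $\{(c,p^*):c\in C\}$ into the sorted edge list. Fix a greedy run on $(\mathbf{S}',C,P')$ with output $\pi'$, put $c^*:=\pi'_{p^*}$ and $w^*:=\score_{p^*}(c^*)$. Since no $p^*$-edge heavier than $w^*$ is ever taken (otherwise $p^*$ would receive a different candidate), the run coincides with a greedy run on $(\mathbf{S},C,P)$ up to the moment $(c^*,p^*)$ is processed; once $(c^*,p^*)$ is added, $p^*$ is matched and all later $p^*$-edges are skipped, so the $P$-part of the run continues exactly like a greedy run on the election with $c^*$ removed. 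Hence $\pi'|_P$ is a greedy matching of $(\mathbf{S},C\setminus\{c^*\},P)$ and $\pi'_{P'}$ is its candidate set together with $c^*$. The last ingredient is that a greedy matching of the instance with $c^*$ deleted and a greedy matching of the full instance (for the same tie-breaking) both saturate all positions --- here $|C|\ge|P|+1$ --- and therefore use the same number of candidates, so their candidate sets differ only by swapping $c^*$ for at most one other candidate. This gives (b) at once, and gives (a) by starting from a prescribed $\pi\in f^{\seq}_{\ma}(\mathbf{S},C,P)$ with greedy run $R$ and running greedy on $(\mathbf{S}',C,P')$ with a tie-breaking that agrees with $R$ and, whenever $p^*$ gets matched and a tie permits, prefers a candidate outside $\pi$: then either $c^*\notin\pi$ and $R$ is reproduced verbatim on $P$, or $c^*\in\pi$, in which case $\score_{p^*}(c^*)\ge\score_{\pi(c^*)}(c^*)$ forces the run to split precisely at $(c^*,p^*)$, and the swap fact shows every candidate of $\pi$ other than $c^*$ survives on $P$ while $c^*$ reappears on $p^*$.

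\textbf{Min-first rule.} For the negative part I would give an explicit counterexample. As $f^{\seq}_{\mi}$ is deterministic up to ties, it suffices to design an election on a position set $P$ whose essentially unique winning line-up uses a candidate set $S$, plus an extra position $p^*$, such that the essentially unique winning line-up over $P\cup\{p^*\}$ omits some member of $S$. The feature to exploit is that inserting $p^*$ introduces a ``fill $p^*$'' step that deletes a candidate, which changes the quantities $\max_{c}\score_p(c)$ for the remaining positions and hence the $\argmin$ comparisons that govern the order in which $P$ is filled; with a carefully chosen score pattern this reordering becomes a cascade that draws at least two new candidates into the line-up, forcing a candidate of $S$ out. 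The main obstacle is exactly this construction: the scores must simultaneously keep $p^*$'s best candidate low enough that $p^*$ is filled at a point that matters, have $p^*$ capture a candidate needed by the original winner, and make the induced reordering of the other positions genuinely change the outcome --- which is why the instance needs several positions and candidates rather than the usual two-by-two example. (The fixed-order argument is routine, and the max-first argument, while it needs the bookkeeping around the splitting edge $(c^*,p^*)$ and the counting argument for the matching change, is mechanical once that is set up.)
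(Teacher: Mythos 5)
Your write-up has three genuine gaps, one per rule. For the fixed-order rule you assume that $p^*$ is appended \emph{last} in the order, in which case the statement is indeed immediate (the first $q$ steps of the two runs coincide). But the axiom does not fix where the added position appears, and the paper's proof explicitly treats an arbitrary insertion point $j$; that is the only nontrivial case, because once $p^*$ is filled before some original positions its chosen candidate may be exactly one that $\pi$ needs later, and the resulting cascade has to be controlled by an induction (the paper shows that, with suitable tie-breaking, after each step the candidates placed in the extended run form a superset of those placed one step earlier in the original run, and a symmetric construction handles condition~(b)). Your argument says nothing about this case. For the max-first rule, the identification of $\pi'|_P$ with a greedy run on $(\mathbf{S},C\setminus\{c^*\},P)$ is fine, but the step you call ``the last ingredient'' is not proved: from the fact that both greedy matchings saturate all positions, and hence use the same number of candidates, it does \emph{not} follow that their candidate sets differ only by swapping $c^*$ for at most one other candidate --- equal cardinality is compatible with an arbitrarily large symmetric difference. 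The one-swap property does hold for greedy with consistent tie-breaking, but it needs an alternating-path or step-by-step induction over the edge-processing order, and that induction is precisely the substance of the paper's proof (it shows that after $j$ steps of the extended run the assigned candidates contain those assigned after $j-1$ steps of the original run, with a mirrored argument for condition~(b)). As written, the crux of your max-first argument is asserted rather than proved, and your part~(a) leans on the same unproved fact.

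For the min-first rule you only describe what a counterexample would have to do; no instance is exhibited, so the negative claim --- the very part of the theorem that the main text flags as requiring a ``rather involved counterexample'' --- is not established. The mechanism you identify (filling $p^*$ removes a candidate early and thereby changes the $\argmin$ comparisons that determine the order in which the remaining positions are filled) is the right one, and the paper realizes it with a six-candidate, five-position election in which $f^{\seq}_{\mi}$ uniquely returns $(d,b,a,c)$ on $\{p_1,\dots,p_4\}$ but uniquely returns $(b,f,a,e,c)$ once $p^*$ is added, so that candidate $d$ is dropped and even weak line-up enlargement monotonicity fails. Without such an explicit instance, verified step by step, this part of the theorem remains open in your proposal.
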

	\begin{proof}
		\underline{Fixed-order rule $f^{\seq}_{\fix}$}: To prove that  
		$f^{\seq}_{\fix}$ satisfies weak line-up enlargement monotonicity, let 
		$P=\{p_1,\dots, 
		p_q\}$ and let 
		$j$ 
		be the place in the ordering of positions at which~$p^*$ is inserted. 
		To prove weak line-up enlargement monotonicity, let $\pi$ be a winning 
		outcome of the initial election without $p^*$. From 
		this, we 
		construct an outcome $\pi'$ of the extended election as follows. 
		For $i\in [1,j-1]$, we assign $\pi_{p_i}$ to $p_i$. For $i=j$, we 
		assign to~$p^*$ the remaining candidate with the highest score on 
		$p^*$. For 
		$i\in 
		[j+1,q+1]$, we assign~$\pi_{p_{i-1}}$ to~$p_{i-1}$  if~$\pi_{p_{i-1}}$ 
		is still unassigned 
		and 
		has the highest score 
		among all unassigned candidates; otherwise, we pick an arbitrary 
		unassigned 
		candidate with the highest score on~$p_{j-1}$. The line-up $\pi'$ is 
		obviously a 
		winning line-up under the sequential rule in the extended election.
		
		We prove by induction that for each step $i\in [1,q+1]$ of the 
		construction, the candidates who have been assigned to $\pi'$ at steps 
		$1,\dots i$ are a superset of 
		$\pi_{\{p_1,\dots,p_{i-1}\}}$. From this it immediately follows that 
		$\pi_{P}\subseteq \pi'_{P\cup \{p^*\}}$. For $i\in [1,j]$, the 
		induction statement is 
		trivially 
		fulfilled, as it holds that $\pi'_{p_i}=\pi_{p_i},$ for all $i\in 
		[1,j-1]$. 
		Let  
		$i\in [j+1,q+1]$ and assume that the hypothesis holds for $i-1$. 
		At step $i$, we fill position $p_{i-1}$. 
		If 
		it holds at this step that $\pi_{p_{i-1}}$ has been already assigned 
		earlier, then we are 
		done. If this it not the case, then we claim that $\pi_{p_{i-1}}$ has 
		the 
		highest score on $p_{i-1}$ among all remaining candidates and is 
		therefore 
		assigned to $p_{i-1}$ in $\pi'$. Recalling that 
		by the induction hypothesis the remaining candidates are a subset of 
		$C\setminus \{\pi_{p_1},\dots, \pi_{p_{i-2}}\}$ and that $\pi$ is a 
		winning 
		line-up of the initial election, that is, $\pi_{p_{i-1}}$ has the 
		highest score among all 
		$C\setminus 
		\{\pi_{p_1},\dots, \pi_{p_{i-2}}\}$, the claim holds. From this, the 
		induction step directly follows. 
		
		It is possible to prove that the sequential rule also satisfies 
		condition (b) of line-up enlargement monotonicity and, thereby, also 
		strong line-up enlargement monotonicity using a similar argument. Let 
		$\pi'$ be a 
		winning line-up of the extended election. We construct a winning 
		line-up 
		$\pi$ of the initial election as follows. For $i\in [1,j-1]$, we 
		set~$\pi_{p_i}=\pi'_{p_i}$. For $i\in [j,q]$, we assign to $\pi_{p_i}$ 
		the 
		candidate with the highest score among the remaining candidates and 
		break 
		ties by always
		selecting an unassigned candidate from~$\{\pi'_{p_j},\dots 
		\pi'_{p_{i}},\pi'_{p^*} 
		\}$ if possible. Clearly, $\pi$ is a winning line-up of the initial 
		election. For the sake of contradiction, let $i\in [j,q]$ be the 
		smallest 
		index 
		where a candidate $c\in C$ not being part of $\{\pi'_{p_j},\dots 
		\pi'_{p_i},\pi'_{p^*} \}$ is selected. However, this implies that 
		$\pi'$ 
		is 
		not a winning line-up of the extended election, as $c$ has a higher 
		score than $\pi'_{p_i}$ on 
		$p_i$. Note further that~$c$ is by definition also still unassigned 
		when 
		$\pi_{p_i}$ is filled in $\pi'$. Hence, such an index cannot exist 
		proving 
		that $\pi_{P}\subseteq \pi'_{P\cup \{p^*\}}$. 
		
		\medskip 
		\noindent \underline{Max-first rule $f^{\seq}_{\ma}$}: To prove that 
		$f^{\seq}_{\ma}$ satisfies weak line-up enlargement monotonicity, 
		let $\pi$ be a winning 
		outcome of 
		some election $(\mathbf{S},C,P)$. From $\pi$, we construct an 
		outcome~$\pi'$ of the 
		extended 
		election $(\mathbf{S}',C,P\cup \{p_{q+1}\})$ as follows. At each step 
		$i\in 
		[1,q+1]$, we assign the 
		candidate-position 
		pair $(c,p)\in C\times P$ with the highest score among all remaining 
		pairs. If 
		there 
		exist multiple such pairs, then we select a pair with $(c,p)\in \pi$ 
		and if multiple such pairs exist, then we select the pair that has been 
		first assigned 
		in the construction of $\pi$. Line-up
		$\pi'$ is 
		obviously a winning line-up under the max-first rule. It remains to 
		prove that 
		$\pi_{P}\subseteq \pi'_{P\cup \{p_{q+1}\}}$. 
		
		Let $j^*\in [1,q]$ denote the step in which $p_{q+1}$ is filled. 
		Moreover, let $i_1,\dots i_q$ be a list of indices such that position 
		$p_{i_j}$ 
		is filled in the $j$-th step when creating~$\pi$ and let~$i'_1,\dots 
		i'_{q+1}$ be a 
		list of indices such that position $p_{i'_j}$ is filled in the
		$j$-th 
		step when 
		creating $\pi'$. We prove by induction that for all $j\in [1,q+1]$ it 
		holds 
		that the 
		candidates assigned to $\pi'$ after~$j$ steps are a 
		superset of 
		the candidates
		assigned to $\pi$ after $j-1$ steps, which directly implies that 
		$\pi_{P}\subseteq \pi'_{P\cup \{p_{q+1}\}}$. For $j\in [1,j^*]$, the 
		hypothesis clearly holds, as $i_k=i'_k$ 
		and~$\pi_{p_{i_k}}=\pi'_{p_{i'_k}}, \forall k\in 
		[1,j^*-1]$. For $j\in [j^*+1,q+1]$, we assume that the induction 
		statement 
		holds 
		for $j-1$. At step $j$, if $\pi_{p_{i_{j-1}}}$ is already assigned in 
		$\pi'$ before 
		step $j$, then
		we are done. Otherwise, we claim that the candidate-position pair 
		$(p_{i_{j-1}},\pi_{p_{i_{j-1}}})$ will be assigned in this step in the 
		construction of $\pi'$. First 
		of all, 
		note that by the induction hypothesis and as this pair has been 
		assigned in $\pi$, $\pi_{p_{i_{j-1}}}$ has the 
		highest 
		score among the remaining candidates on $p_{i_{j-1}}$. Moreover, note 
		that by 
		the induction hypothesis, there cannot exist an earlier step where 
		$p_{i_{j-1}}$ was assigned some other candidate because this requires 
		that this candidate has a higher score on $p_{i_{j-1}}$ than 
		$\pi_{p_{i_{j-1}}}$ and thereby that this candidate 
		would have 
		 also been assigned to $p_{i_{j-1}}$ in the construction of 
		$\pi$. 
		Furthermore, 
		it needs to hold that $(p_{i_{j-1}},\pi_{p_{i_{j-1}}})$ is the 
		candidate-position pair with the highest score, as there only exist 
		$j-2$ pairs
		with a higher score in the construction of $\pi$ 
		and 
		thereby at most~$j-1$ pairs with a higher score that were all already 
		chosen before in the 
		construction of~$\pi'$.	
		
		To prove that condition (b) of line-up-enlargement-monotonicity holds 
		as well, let $\pi'$ be a 
		winning outcome of the larger 
		election and 
		let $\pi$ be an outcome of the initial election on $P$ constructed from 
		$\pi'$ 
		as follows. At each 
		step $i\in 
		[1,q]$, we assign the candidate-position 
		pair $(c,p)\in C\times P$ with the highest score among all remaining 
		pairs. If 
		there 
		exist multiple such pairs, then we select a pair $(c,p)\in \pi'$, and 
		if multiple such pairs exist, then we assign the pair that has been 
		first assigned 
		in the construction of $\pi'$. 
		
		For the sake of contradiction, let $j$ be the first step in the 
		construction of~$\pi$ in which a candidate-position pair $(c,p)\in 
		C\times P$ with $c\notin \pi'$ is assigned. As otherwise $c$ would have 
		been assigned to $p$ in $\pi'$, this implies that $\pi'_p$ has already 
		been assigned at step $j'<j$ in the construction of $\pi$. Thereby, it 
		needs to hold that $\score_{\pi(c)}(c)>\score_{p}(c)$. This implies 
		that in $\pi'$ it needs to hold that $\score_{\pi(c)}(\pi')\geq 
		\score_{\pi(c)}(c)$. However, this in turn implies that $\pi'_{\pi(c)}$ 
		must have been assigned in $\pi$ at some step $j''<j'$ and that it 
		needs to hold that 
		$\score_{\pi(\pi'_{\pi(c)})}(\pi'_{\pi(c)})>\score_{\pi(c)}(\pi'_{\pi(c)})$.
		 Then, again the argument from above applies eventually resulting in a 
		contradiction once step $1$ is reached.

		\medskip
		\noindent \underline{Min-first rule $f^{\seq}_{\mi}$}: Consider the 
		following election where we denote 
		by 
		$(\mathbf{S},C,P)$ 
		the election on the first four positions and by $(\mathbf{S}',C,P')$ 
		the 
		election on 
		all five positions. 
		\begin{center}
			\begin{tabular}{c|c|c|c|c|c} 
				\diagbox{$C$}{$P$} & $p_1$ & $p_2$ & $p_3$ & $p_4$ & $p^*$ \\
				\hline
				$a$ & $0$ & $6$ & $4$ & $0$ & $0$\\
				$b$ & $2$ & $3$ & $0$ & $0$ & $0$\\
				$c$ & $7$ & $0$ & $0$ & $5$ & $1$\\
				$d$ & $1$ & $0$ & $0$ & $0$ & $0$\\
				$e$ & $0$ & $0$ & $0$ & $1$ & $0$\\
				$f$ & $0$ & $1$ & $0$ & $0$ & $0$
			\end{tabular}
		\end{center}
		We claim that $f^{\seq}_{\min}(\mathbf{S},C,P)=\{(d,b,a,c)\}$ and that 
		$f^{\seq}_{\min}(\mathbf{S}',C,P')=\{(b,f,a,e,c)\}$, thus violating 
		line-up enlargement 
		monotonicity. For the election on the first four positions,~$p_3$ is 
		required to pick first 
		and picks 
		$a$. After that, $p_2$ is required to pick and picks $b$. Subsequently, 
		$p_4$ 
		picks $c$ and $p_1$ picks $d$. Thus, $(d,b,a,c)$ is the unique winning 
		line-up. For the election on all five positions, 
		the order in 
		which the 
		positions pick will change because of the presence of $p^*$: $p^*$ is 
		required 
		to pick first and picks $c$. Subsequently, $p_4$ picks $e$, and $p_1$ 
		picks $b$. 
		$p_3$ picks $a$, and $p_2$ picks $f$. Hence, $(b,f,a,e,c)$ is the 
		unique winning line-up.
	\end{proof}

\newpage
	\section{Experiments} \label{se:boxplots}
	In the following section, we provide several additional details for the 
	conducted experiments. First of all, the selected soccer formation for our 
	FIFA experiments (\autoref{se:formation}), the definition of the Gini 
	coefficient (\autoref{se:Gini}), and the description of the first model to 
	generate synthetic data (\autoref{se:M1}) are presented. Subsequently, in 
	\autoref{se:CompModels} and \autoref{fi:missres}, we perform a comparison 
	of the different data models and provide the missing diagrams for the 
	analysis of ten-candidates ten-positions elections presented in the main 
	body. Finally, we 
	conclude in 
	\autoref{se:moreResults} with visualizations of our results for two 
	different setups varying the number of candidates and positions.
	\subsection{Soccer Formation} \label{se:formation}
		\autoref{fi:SocLine} contains a visualization of the chosen formation 
		that 
	we want to fill in our experiments based on FIFA data. That is, the 
	positions in the conducted experiments are the different names displayed on 
	the field.
	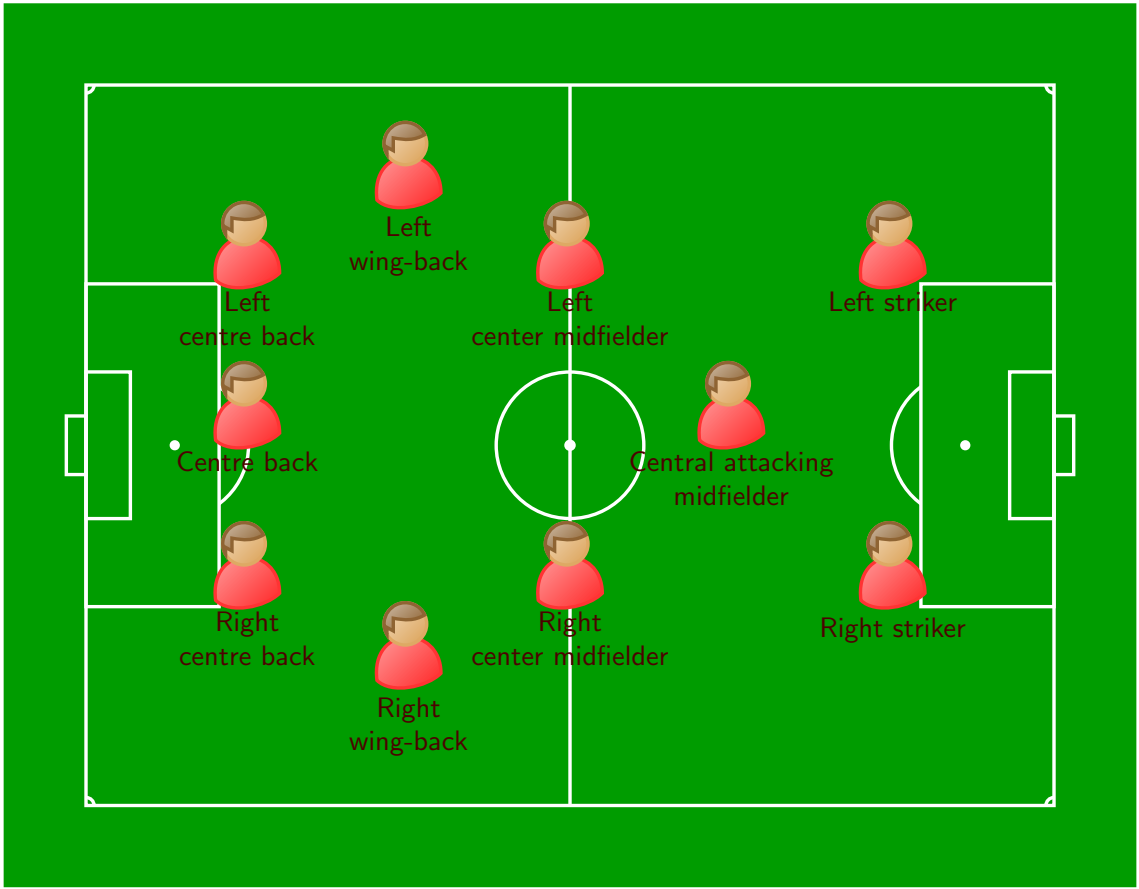
\begin{figure}[!htb]
	\resizebox{\textwidth}{!}{%
	\begin{tikzpicture}
	\fill[field] (-1,-1) rectangle (13,10);
	\node[minimum width=12cm, minimum height=9cm] (contour) at (6,4.5) {};
	
	\draw (contour.north) -- (contour.south);
	\draw (contour.center) circle (0.915cm);
	\fill[white] (contour.center) circle (.5mm);
	
	\area{contour.west}{0}
	\area{contour.east}{180}
	
	\foreach \corner [count=\xi starting from 0] in {south west, south east, 
	north 
		east, north west}{
		\begin{scope}[rotate around={90*\xi:(contour.\corner)}]
		\draw ([xshift=1mm]contour.\corner) arc (0:90:1mm);
		\end{scope}
	}

	\begin{scope}[font=\small\sffamily, text=black!70!red]
	\node[person, minimum size=.8cm, anchor=center, shirt=red] at 
	(10,3) {Right striker};
	
	\node[person, minimum size=.8cm, anchor=center, shirt=red] at 
	(10,7) {Left striker};

	\node[person, minimum size=.8cm, anchor=center, shirt=red, align=center] at 
	(8,5) {Central attacking \\ midfielder};
	\node[person, minimum size=.8cm, anchor=center, shirt=red, align=center] at 
	(6,7) {Left\\ center midfielder };
	\node[person, minimum size=.8cm, anchor=center, shirt=red, align=center] at 
	(6,3) {Right\\ center midfielder };
	
	\node[person, minimum size=.8cm, anchor=center, shirt=red, align=center] at 
	(4,8) {Left\\ wing-back};
	\node[person, minimum size=.8cm, anchor=center, shirt=red, align=center] at 
	(4,2) {Right\\ wing-back};
	
	\node[person, minimum size=.8cm, anchor=center, shirt=red, align=center] at 
	(2,3) {Right\\ centre back};
	\node[person, minimum size=.8cm, anchor=center, shirt=red] at 
	(2,5) {Centre back};
	\node[person, minimum size=.8cm, anchor=center, shirt=red, align=center] at 
	(2,7) {Left\\ centre back};
	
	\end{scope}
	
	\end{tikzpicture}} \caption{Soccer formation used in the experiments on 
	FIFA 
	data.} \label{fi:SocLine}
	\end{figure}

	\subsection{Gini Coefficient} \label{se:Gini}
	
	The Gini coefficient is a metric to 
	measure the dispersion of a probability distribution; it is zero for 
	uniform distributions and one for distributions with a unit step 
	cumulative distribution function. The Gini coefficient of a tuple 
	$X=(x_1,\dots,x_n)$ is defined as: 
	$$\text{gini}(X)=\frac{\sum_{i=1}^{n}\sum_{j=1}^{n} 
	|x_i-x_j|}{2n\sum_{i=1}^{n} x_i}.$$
	
	\subsection{Model M1} \label{se:M1}
	To model that some 
	positions are likely to be correlated, we start by partitioning the 
	positions into three subsets. For each candidate and each subset, we 
	draw a ground qualification $\mu$ uniformly between 0.4 and 0.7. 
	Subsequently, we sample the scores of this player for all positions in this 
	subset 
	independently from a Gaussian distribution with mean 
	$\mu$ 
	and standard deviation $0.15$. The results of the different voting methods 
	on data generated using the $M1$ data can be found in \autoref{fig:M1}. The 
	performance of all voting rules 
	on this dataset is slightly worse then on the FIFA data. However, the 
	differences between the voting rules with respect to the considered metrics 
	is pretty similar to the patterns observed in the FIFA data.  
	
	\subsection{Comparison of Data Models} \label{se:CompModels}
	To compare the three models, as the relative conflicting score, we 
	calculated the difference between the summed score of an utopic outcome 
	and 
	the summed 
	score of a 
	utilitarian outcome divided by the summed score of the 
	utopic outcome. Moreover, we computed for all positions the summed 
	score of all candidates on this position. Subsequently, we calculated the 
	Gini coefficient  of the 
	summed scores of positions 
	to measure the 
	distribution of scores over positions. Lastly, we calculated 
	the social conflict defined as the lowest score in an 
	egalitarian 
	line-up minus the lowest score in an utilitarian line-up. We compare the 
	three data models to each
	other in \autoref{fig:ovInst}.
	On the one hand, the M2 model produces instances with a higher relative 
	conflicting score, a larger Gini coefficient of position scores and a 
	higher 
	social trade-off compared to the other models. On the other hand, the M1 
	model and the FIFA data
	produce similar instances in terms of the considered metrics.
	\begin{figure}[!htb]
	\includegraphics[width=\textwidth]{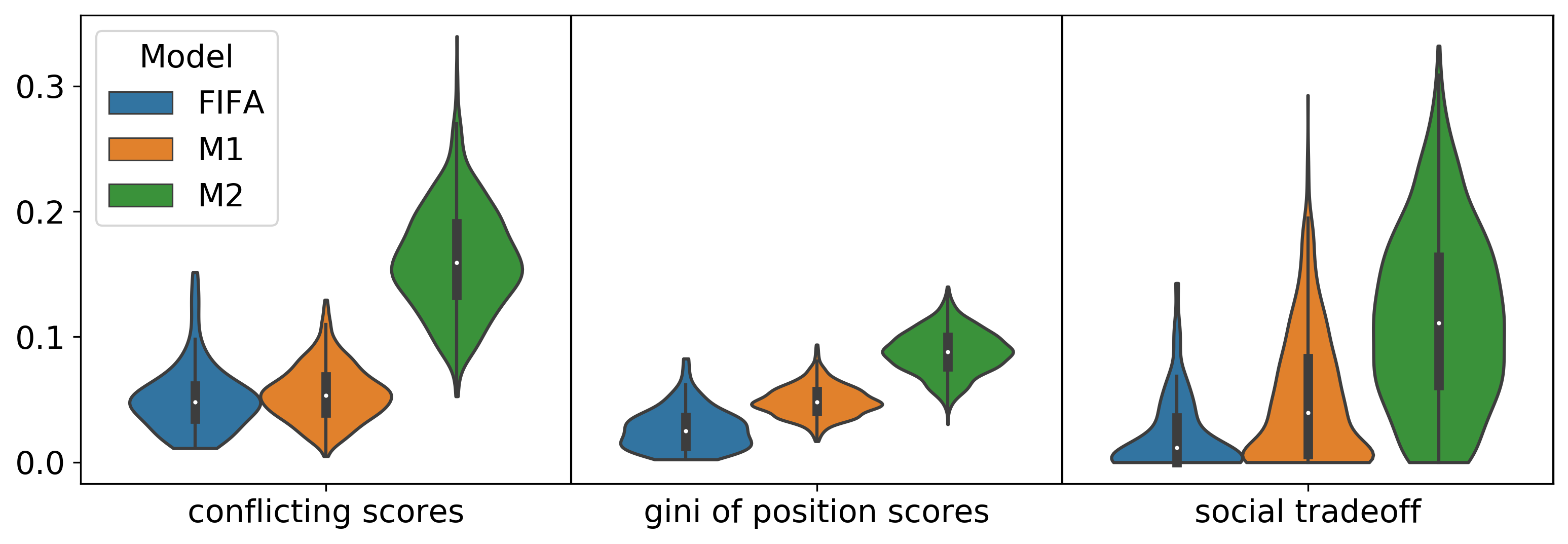}
	\caption{Comparison of different models to generate ten-candidates 
	ten-positions elections.}
	\label{fig:ovInst}
	\end{figure}

	\subsection{Missing Experimental Results for M1 Model} \label{fi:missres} 
	\label{se::M1res}
	In \autoref{fig:M1}, we present a comparison of the different voting rules 
	on M1 and FIFA data for ten players on ten positions. Note that while 
	the general trends are similar on both data sets, all voting rules produce 
	slightly worse results on the M1 data. Moreover, the difference between 
	the voting rules is more profound on the M1 data. 
	
	\begin{figure}[htb!]
		\captionsetup[subfigure]{justification=centering}
		\centering
		\begin{subfigure}[b]{\textwidth}
			\centering
			\includegraphics[width=\textwidth]{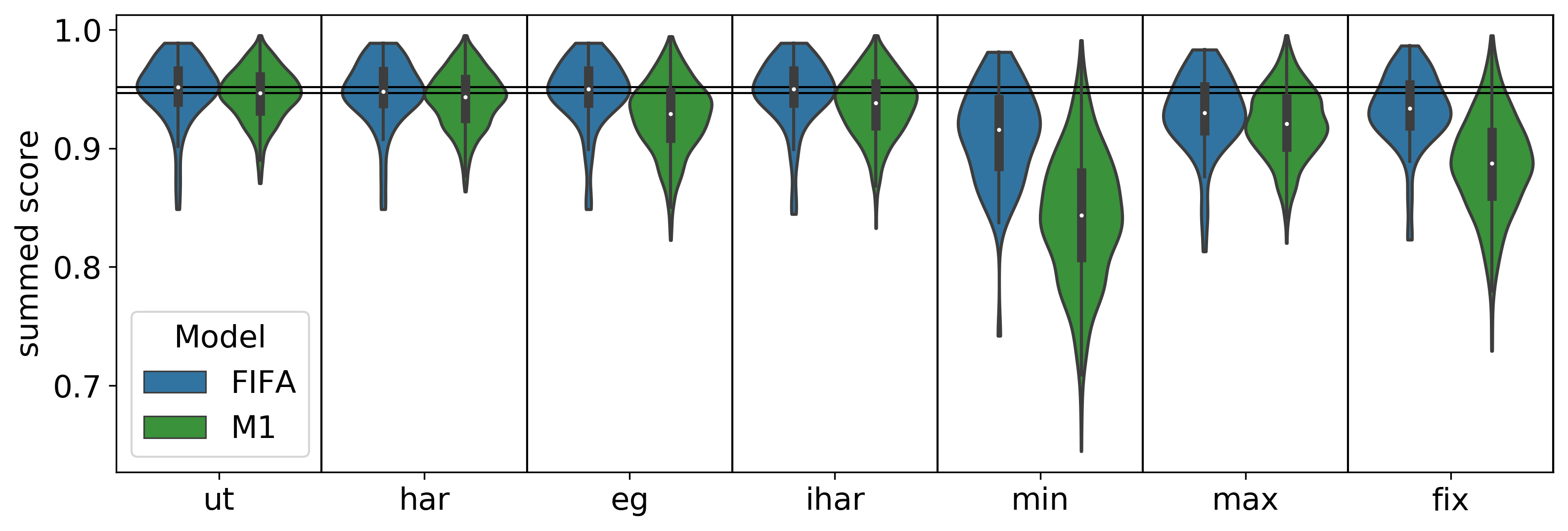}
			\caption{Summed score.} 
			
		\end{subfigure}\vspace*{-0.5cm}
	\end{figure}
	\begin{figure}[htb!]\ContinuedFloat
		\begin{subfigure}[b]{\textwidth}
			\centering
				\includegraphics[width=\textwidth]{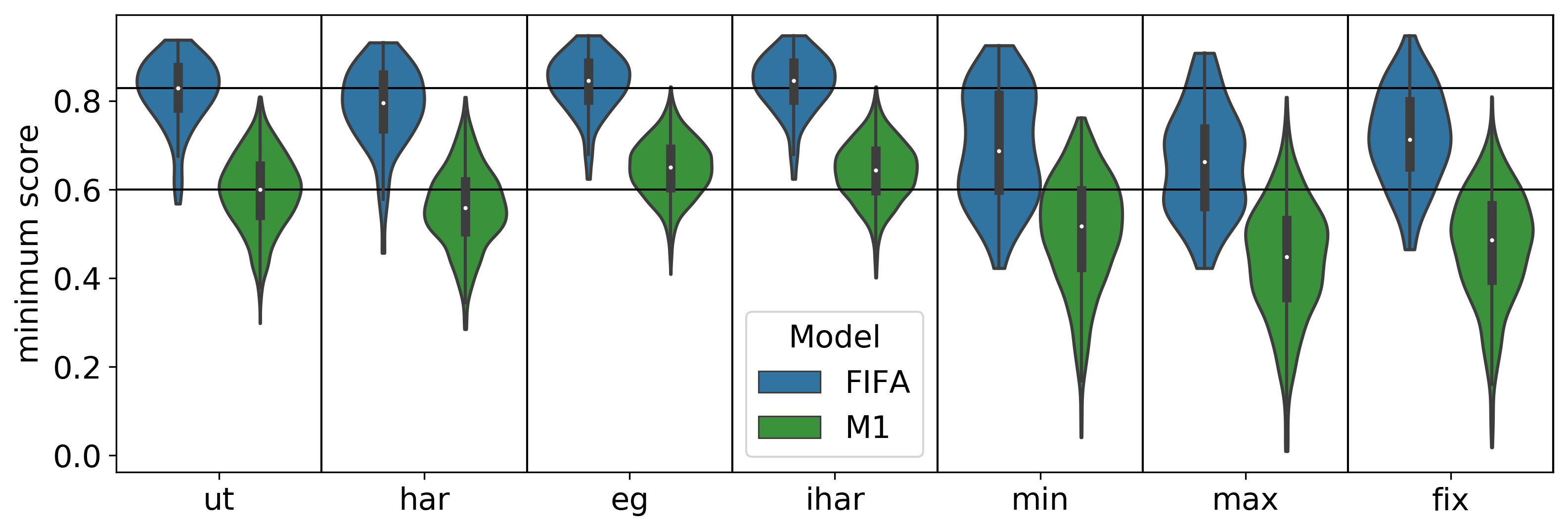}
			\caption{Minimum score.} 
			
		\end{subfigure}\vspace*{-0.5cm}
	\end{figure}
	\begin{figure}[htb!]\ContinuedFloat
		\begin{subfigure}[b]{\textwidth}
			\centering
			\includegraphics[width=\textwidth]{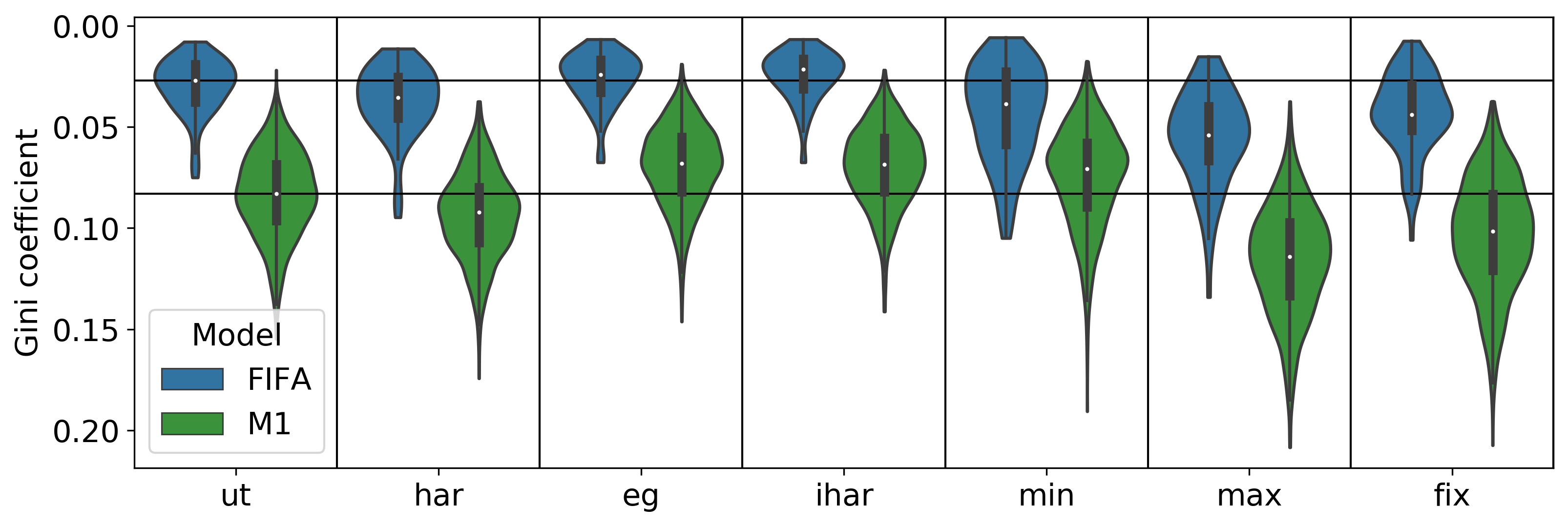}
			\caption{Gini coefficient of score vector.} 
			
		\end{subfigure}\vspace*{-0.5cm}
	\end{figure}
	\begin{figure}[htb!]\ContinuedFloat
	\begin{subfigure}[b]{\textwidth}
		\centering
		\includegraphics[width=\textwidth]{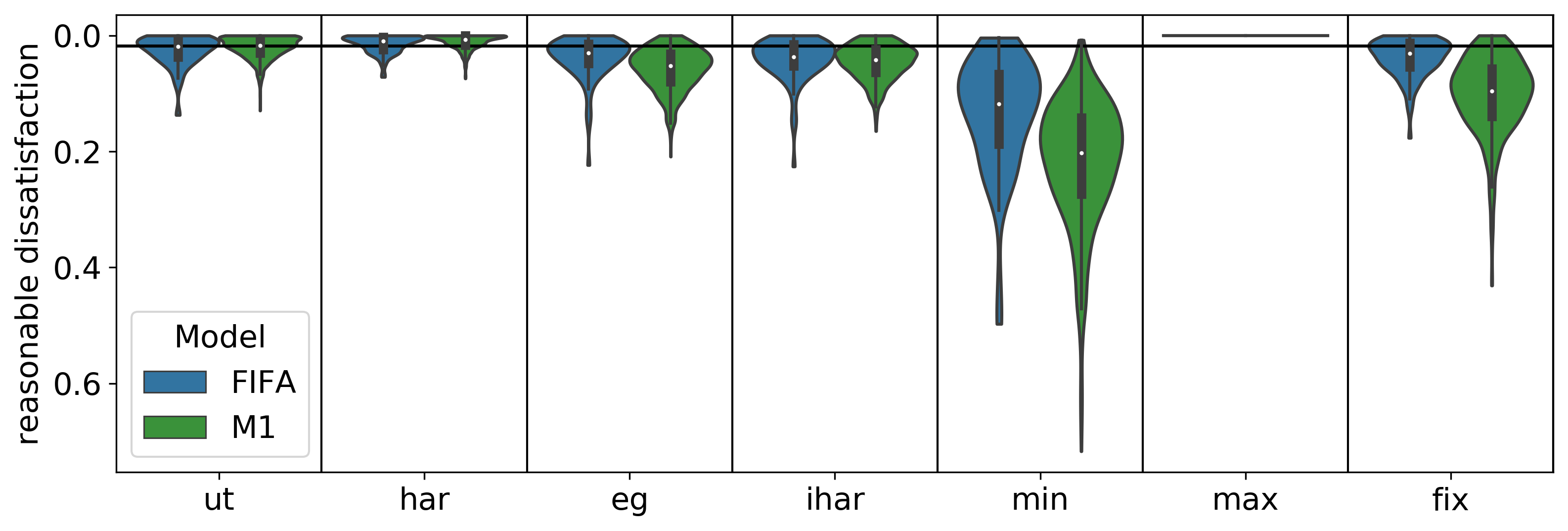}
		\caption{Reasonable dissatisfaction.} 
		
	\end{subfigure}
		\caption{Comparison of voting rules on M1 and FIFA data.}
		\label{fig:M1} \vspace*{-3cm}
	\end{figure}

	\FloatBarrier
	\clearpage
	\newpage
	\subsection{Additional Experimental Results }\label{se:moreResults}
	We also conducted experiments varying the number of positions and 
	candidates. In the following, we exemplarily include our results for the 
	case with twenty candidates on ten positions and twenty candidates on 
	twenty positions.
	\subsubsection{Twenty candidates on ten positions}
	Here, diagrams for our experiments with twenty candidates on ten positions 
	are displayed. In \autoref{fi:ins1020}, instances produced by the three 
	different models are compared to each other. Note that the general picture 
	is similar to the comparison of the models for ten-candidates ten-positions 
	elections (see \autoref{fig:ovInst}). However, here, due to the higher 
	number of available candidates, the instances exhibit less competition, 
	as overall twenty-candidates ten-positions instances have lower conflicting 
	score, Gini coefficient, and social conflict than ten-candidates 
	ten-positions elections. Moreover, in \autoref{fi:1020R}, the performance 
	of 
	the considered voting rules with respect to the summed score of a line-up, 
	the minimum score in a line-up, the Gini coefficient of the score vector, 
	and 
	the reasonable 
	dissatisfaction in a line-up are shown. Note that the general trends 
	highlighted in the main body of the paper still hold while being less 
	visible due to less competition in the data. 
	\begin{figure}[htb!]
		\includegraphics[width=\textwidth]{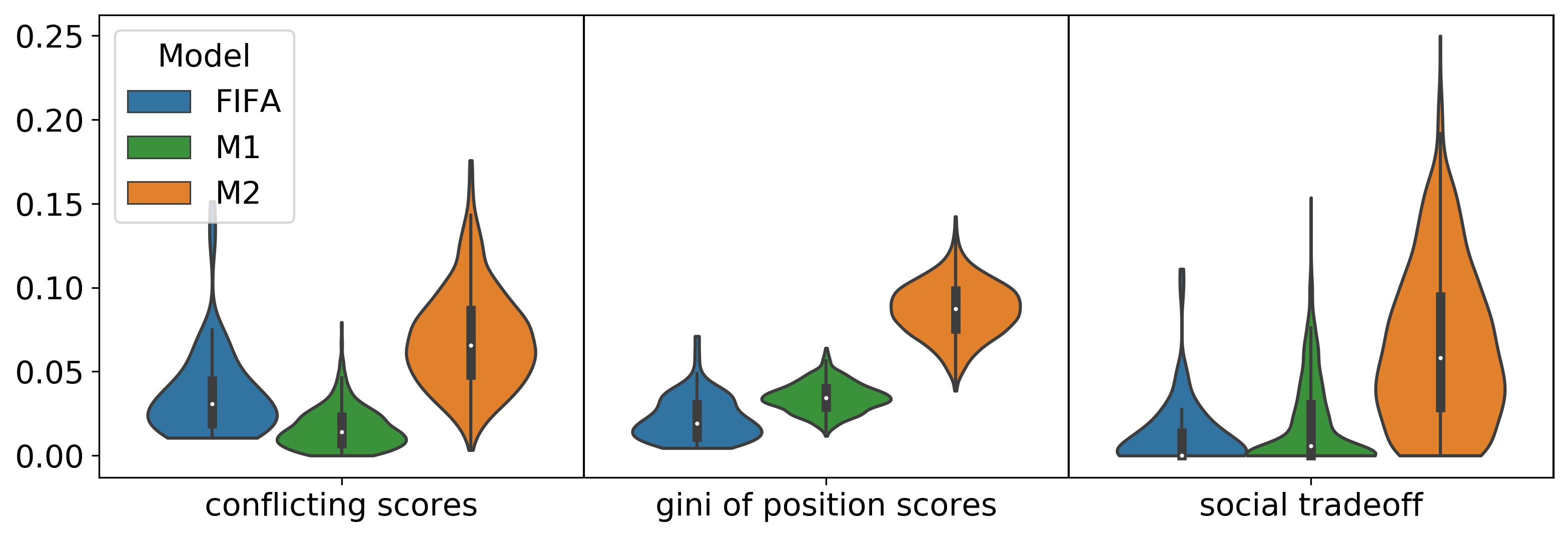}
		\caption{Comparison of different models to generate twenty-candidates 
			ten-positions elections.} \label{fi:ins1020}
	\end{figure}

	\begin{figure}[htb!]
		\captionsetup[subfigure]{justification=centering}
		\centering
		\begin{subfigure}[b]{\textwidth}
			\centering
			\includegraphics[width=\textwidth]{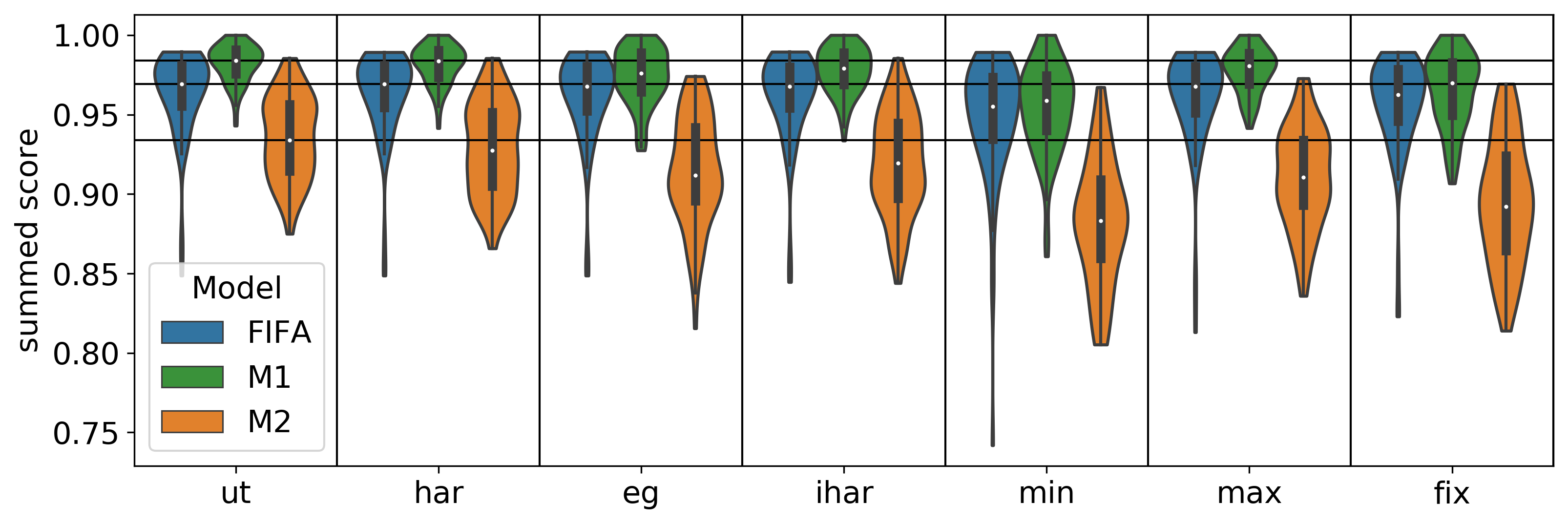}
			\caption{Summed score.} 
			
		\end{subfigure}\vspace*{-3cm}
	\end{figure}

	\begin{figure}[htb!]\ContinuedFloat
		\begin{subfigure}[b]{\textwidth}
			\centering
			\includegraphics[width=\textwidth]{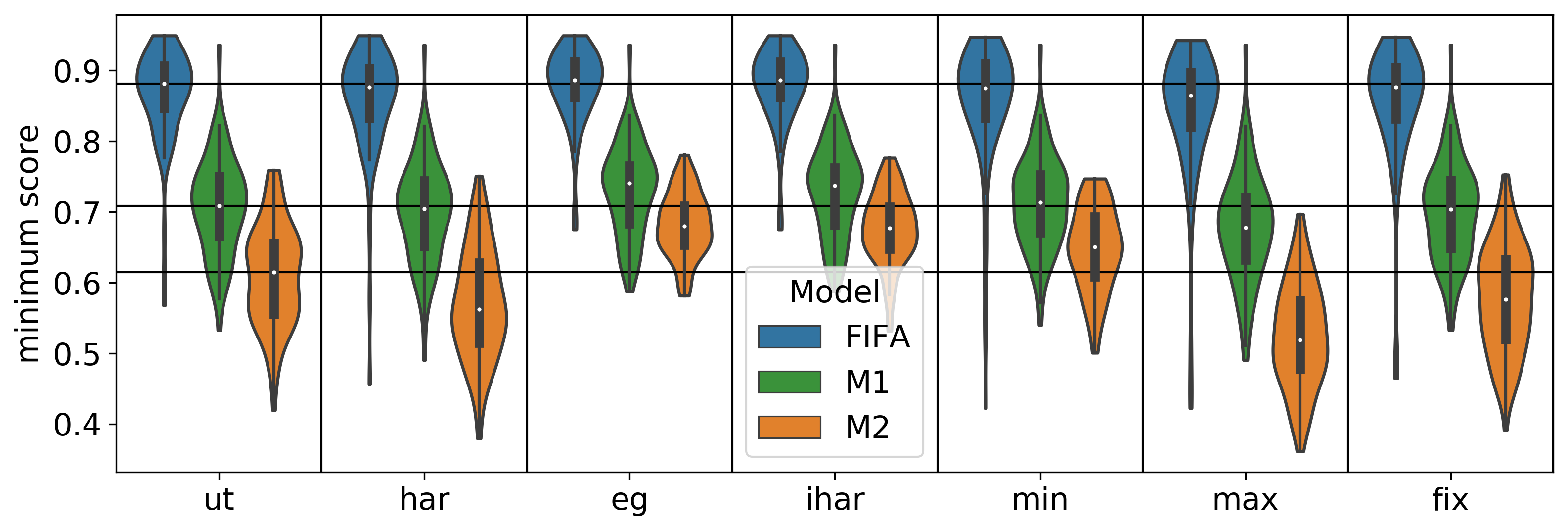}
			\caption{Minimum score.} 
			
		\end{subfigure}
	\end{figure}
	\begin{figure}[htb!]\ContinuedFloat
		\begin{subfigure}[b]{\textwidth}
			\centering
			\includegraphics[width=\textwidth]{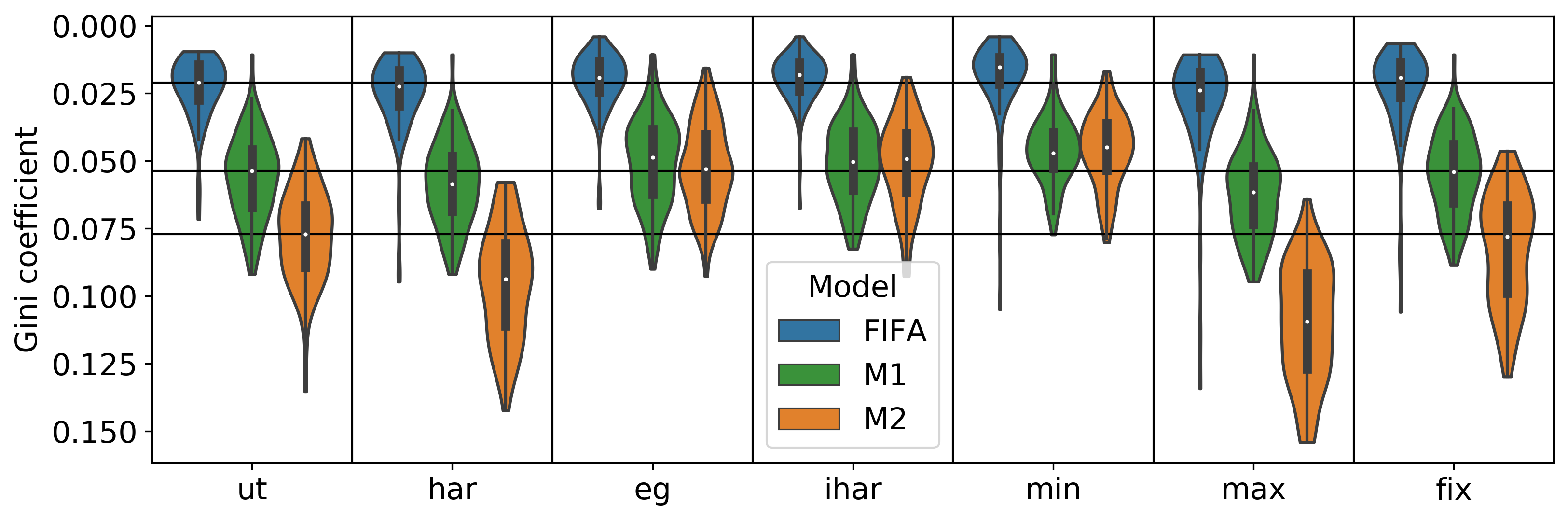}
			\caption{Gini coefficient of score vector.} 
			
		\end{subfigure}
	\end{figure}
	\begin{figure}[htb!]\ContinuedFloat
		\begin{subfigure}[b]{\textwidth}
			\centering
			\includegraphics[width=\textwidth]{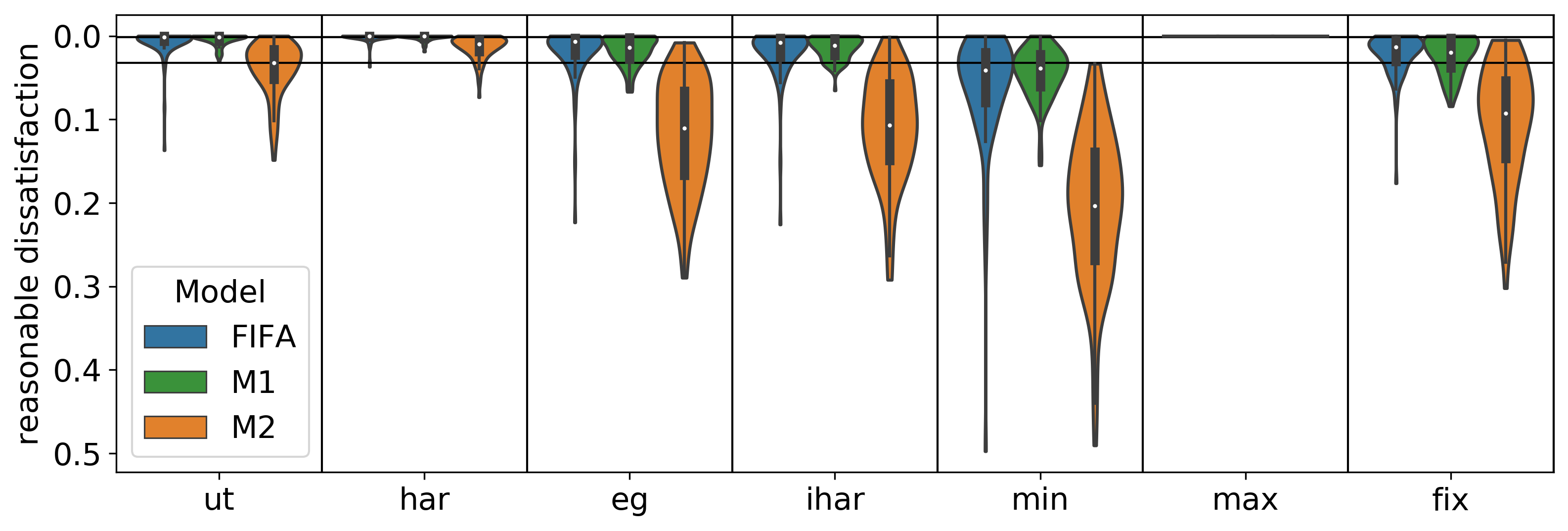}
			\caption{Reasonable dissatisfaction.} 
			
		\end{subfigure}
		\caption{Comparison of voting rules on twenty-candidates ten-positions 
		elections using M1, M2, and FIFA data.} \label{fi:1020R}
	\end{figure}
	\FloatBarrier

	\clearpage
	\subsubsection{Twenty candidates on twenty positions}
	In the following, diagrams presenting our results for experiments with 
	twenty candidates on twenty positions 
	are shown. We did not use our FIFA data for these experiments, as in soccer 
	line-ups consist only of ten (or eleven) positions. Moreover, due to 
	limited 
	computation time, we did not run the experiments in this setting for 
	the inverse harmonic rule. That is why we omit this rule in all diagrams. 
	
	In \autoref{fi:2020I}, the instances produced by the three 
	different models are compared to each other. Considering the selected three 
	metrics, the generated twenty-candidates twenty-positions instances seem to 
	be pretty similar to the generated ten-candidates ten-positions instances.  
	Additionally, the outcomes produced by  
	all voting rules except the inverse harmonic rule are analyzed in 
	\autoref{fi:1020R}. Again, the results are pretty similar to the 
	ten-candidates ten-positions case discussed in the main body of the paper. 
	\begin{figure}[htb!]
		\includegraphics[width=\textwidth]{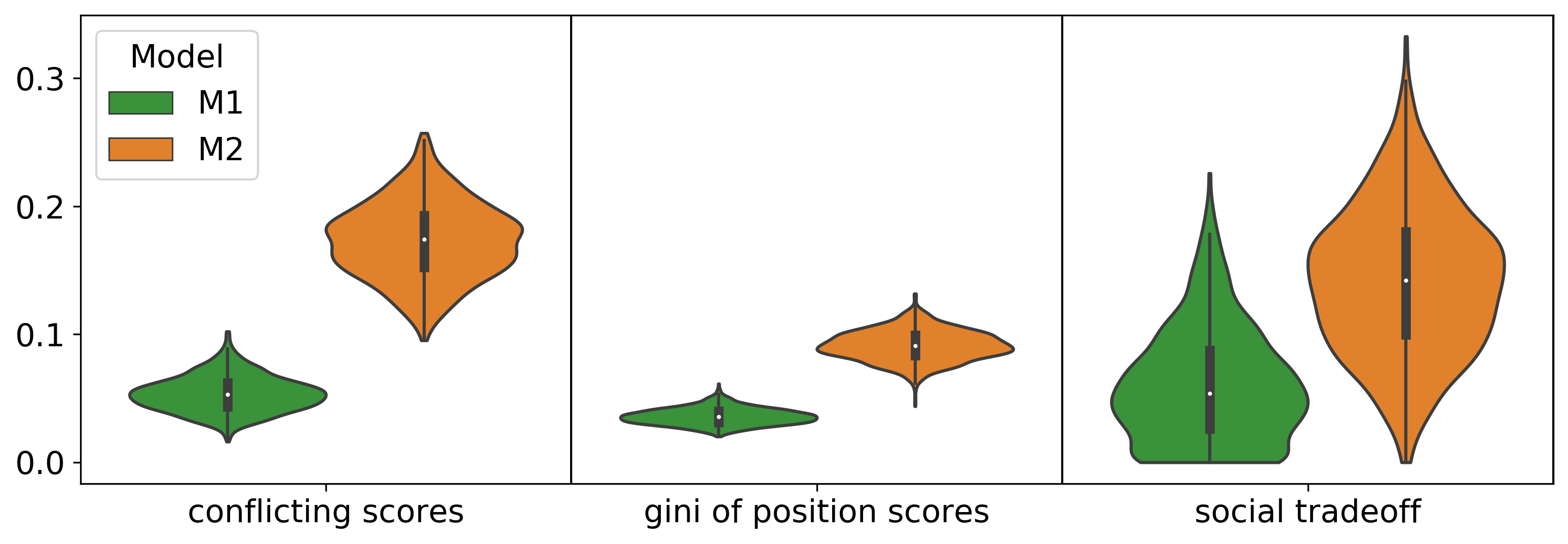}
		\caption{Comparison of different models to generate twenty-candidates 
			twenty-positions elections.} \label{fi:2020I}
	\end{figure}

	\begin{figure}[htb!]
		\captionsetup[subfigure]{justification=centering}
		\centering
		\begin{subfigure}[b]{\textwidth}
			\centering
			\includegraphics[width=\textwidth]{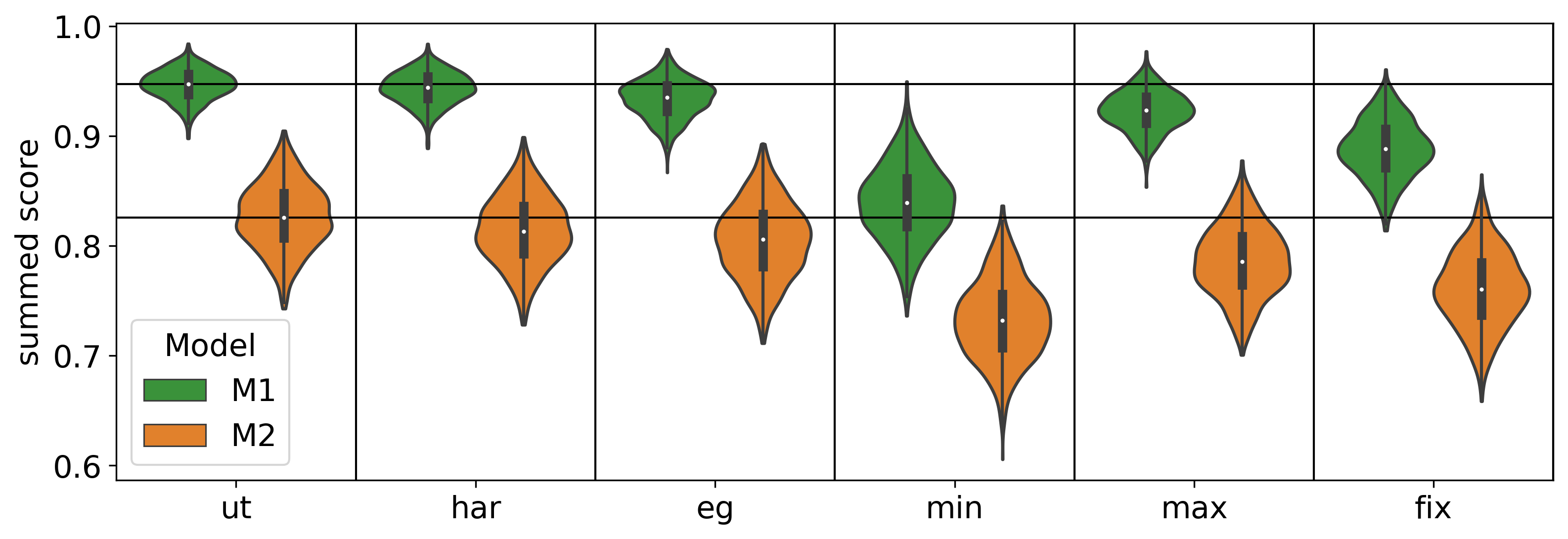}
			\caption{Summed score.} 
			
		\end{subfigure}
	\end{figure}
	\begin{figure}[htb!]\ContinuedFloat
		\begin{subfigure}[b]{\textwidth}
			\centering
			\includegraphics[width=\textwidth]{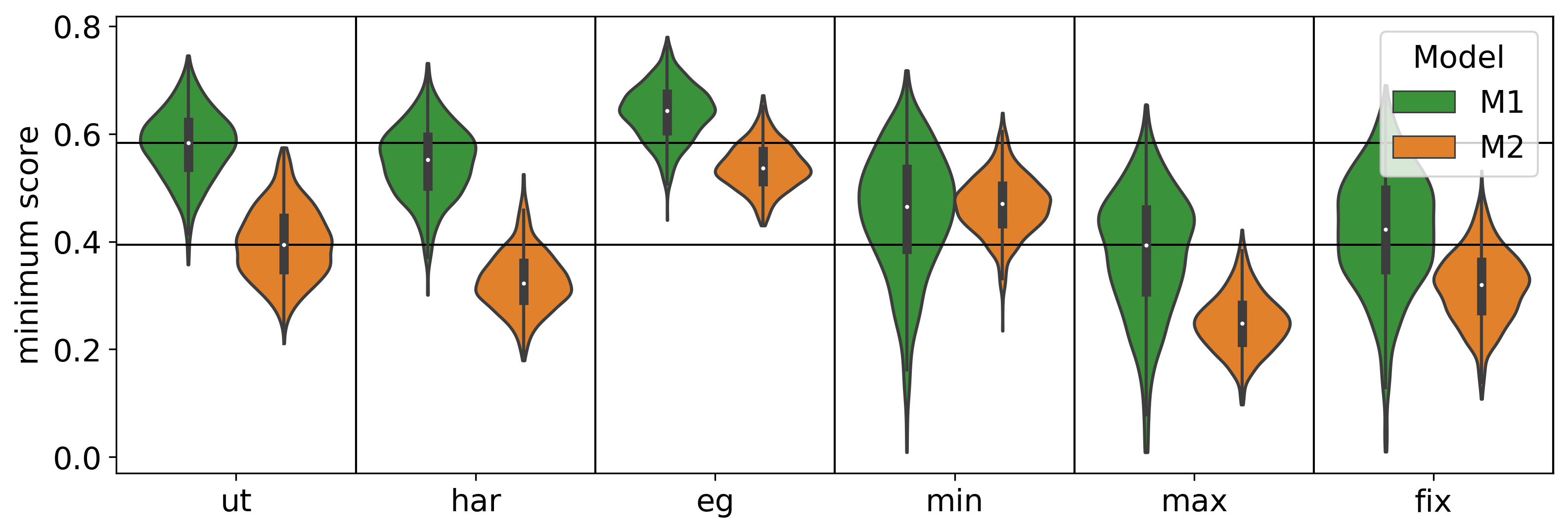}
			\caption{Minimum score.} 
			
		\end{subfigure}
	\end{figure}
	\begin{figure}[htb!]\ContinuedFloat
		\begin{subfigure}[b]{\textwidth}
			\centering
			\includegraphics[width=\textwidth]{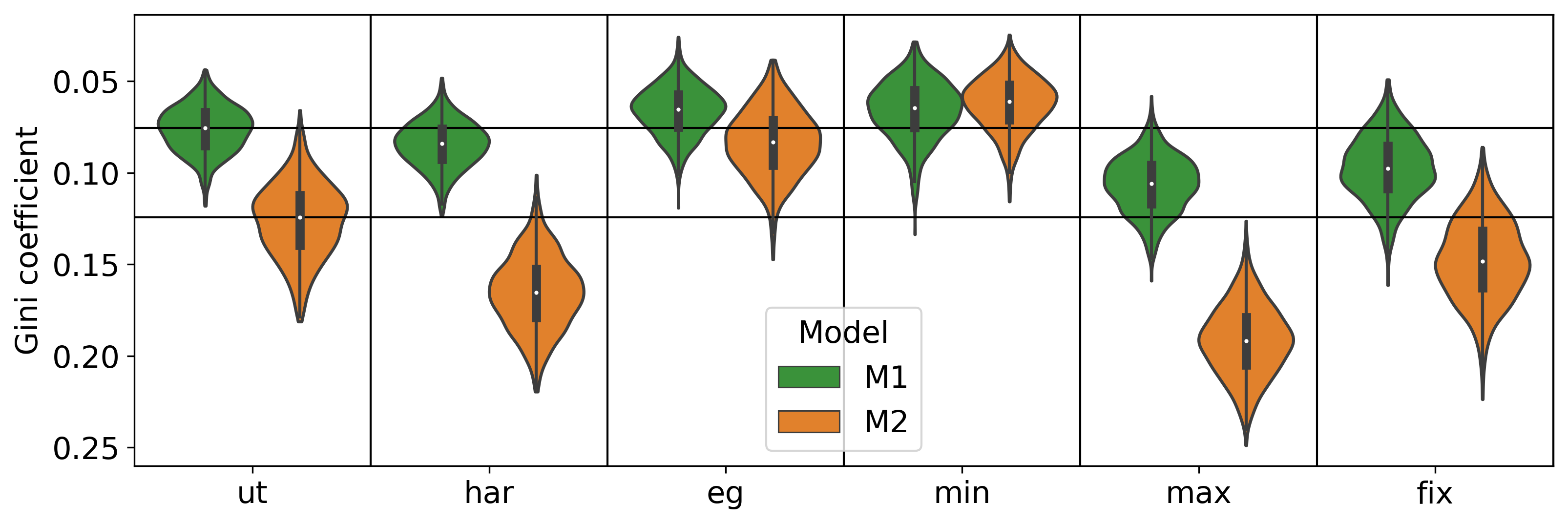}
			\caption{Gini coefficient of score vector.}
			
		\end{subfigure}
	\end{figure}
	\begin{figure}[htb!]\ContinuedFloat
		\begin{subfigure}[b]{\textwidth}
			\centering
			\includegraphics[width=\textwidth]{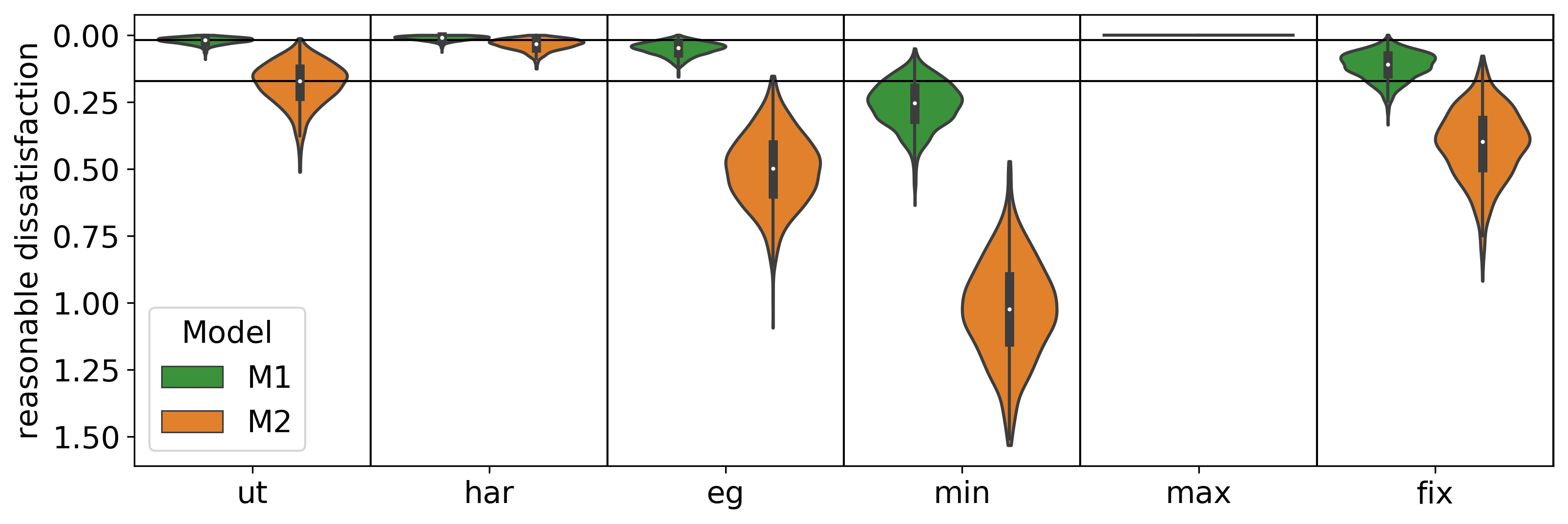}
			\caption{Reasonable dissatisfaction.} 
			
		\end{subfigure}
		\caption{Comparison of voting rules on twenty-candidates 
		twenty-positions 
			elections using M1 and M2 data.}
		\label{fig:ex1}
	\end{figure} \label{fi:2020R}
	\FloatBarrier

\end{document}
